\newtheorem{theorem}{Theorem}
\newtheorem{assumption}{Assumption}
\newtheorem{lemma}{Lemma}
\newtheorem{corollary}{Corollary}
\newtheorem{definition}{Definition}
\newenvironment{proof}[1][Proof]{\begin{trivlist}
\item[\hskip \labelsep {\bfseries #1}]}{\end{trivlist}}
\newcommand{\qed}{\nobreak \ifvmode \Relax \else
      \ifdim\lastskip<1.5em \hskip-\lastskip
      \hskip1.5em plus0em minus0.5em \fi \nobreak
      \vrule height0.65em width0.65em depth0em\fi}
\def\defeq{\triangleq}
\DeclareMathOperator*{\minimize}{\mathrm{minimize}}
\def\ds{\mathds}
\def\wt{\widetilde}
\def\wh{\widehat}
\def\col{{\mathrm{col}}}
\def\diag{{\mathrm{diag}}}
\def\Tr{{\mathrm{Tr}}}
\def\kron{\otimes}
\def\bkron{\otimes_{b}}
\def\vecm{{\mathrm{vec}}}
\def\unvecm{{\mathrm{unvec}}}
\def\bvec{{\mathrm{bvec}}}
\def\unbvec{{\mathrm{unbvec}}}
\def\MSD{{\textrm{MSD}}}
\def\glob{\textrm{glob}}
\def\network{{\textrm{net}}}
\def\E{\mathbb{E}}
\def\F{\mathbb{F}}
\def\nn{{\nonumber}}
\newcommand{\be}{\begin{equation}}
\newcommand{\ee}{\end{equation}}
\newcommand{\bea}{\begin{eqnarray*}}
\newcommand{\eea}{\end{eqnarray*}}
\def\Lddots{\mathinner{\mkern1mu\raise17\p@\vbox{\kern17\p@\hbox{.}}\mkern2mu
    \raise8\p@\hbox{.}\mkern2mu\raise\p@\hbox{.}\mkern1mu}}
\newcommand{\mbbC}{\mathbb{C}}
\newcommand{\mbbF}{\mathbb{F}}
\newcommand{\mbbR}{\mathbb{R}}
\newcommand{\mbbT}{\mathbb{T}}
\newcommand{\Acal}{\mathcal{A}}
\newcommand{\Bcal}{\mathcal{B}}
\newcommand{\Ccal}{\mathcal{C}}
\newcommand{\Dcal}{\mathcal{D}}
\newcommand{\Fcal}{\mathcal{F}}
\newcommand{\Gcal}{\mathcal{G}}
\newcommand{\Hcal}{\mathcal{H}}
\newcommand{\Ical}{\mathcal{I}}
\newcommand{\Jcal}{\mathcal{J}}
\newcommand{\Mcal}{\mathcal{M}}
\newcommand{\Ncal}{\mathcal{N}}
\newcommand{\Pcal}{\mathcal{P}}
\newcommand{\Qcal}{\mathcal{Q}}
\newcommand{\Rcal}{\mathcal{R}}
\newcommand{\Tcal}{\mathcal{T}}
\newcommand{\Xcal}{\mathcal{X}}
\newcommand{\Ycal}{\mathcal{Y}}
\def\Re{\mathfrak{Re}}
\newcommand{\one}{\ds{1}}
\newcommand{\T}{\mathsf{T}} 
\newcommand{\ubar}{\underaccent{\bar}}
\begin{document}
%
\title{Asynchronous Adaptation and Learning over Networks --- Part II: Performance Analysis}

\author{Xiaochuan~Zhao,~\IEEEmembership{Student~Member,~IEEE,}
        and Ali~H.~Sayed,~\IEEEmembership{Fellow,~IEEE} \\
\thanks{The authors are with Department of Electrical Engineering, University of California, Los Angeles, CA 90095
Email: \{xzhao,~sayed\}@ee.ucla.edu.}
\thanks{This work was supported by NSF grants CCF-1011918 and ECCS-1407712. A short and limited early version of this work appeared in the conference proceeding \cite{Zhao12EUSIPCO}. The first part of this work is presented in \cite{Zhao13TSPasync1}.}}


\maketitle

\begin{abstract}
In Part I \cite{Zhao13TSPasync1}, we introduced a fairly general model for asynchronous events over adaptive networks including random topologies, random link failures, random data arrival times, and agents turning on and off randomly. We performed a stability analysis and established the notable fact that the network is still able to converge in the mean-square-error sense to the desired solution. Once stable behavior is guaranteed, it becomes important to evaluate how fast the iterates converge and how close they get to the optimal solution. This is a demanding task due to the various asynchronous events and due to the fact that agents influence each other. In this Part II, we carry out a detailed analysis of the mean-square-error performance of asynchronous strategies for solving distributed optimization and adaptation problems over networks. We derive analytical expressions for the mean-square convergence rate and the steady-state mean-square-deviation. The expressions reveal how the various parameters of the asynchronous behavior influence network performance. In the process, we establish the interesting conclusion that even under the influence of asynchronous events, all agents in the adaptive network can still reach an $O(\nu^{1 + \gamma_o'})$ near-agreement with some $\gamma_o' > 0$ while approaching the desired solution within $O(\nu)$ accuracy, where $\nu$ is proportional to the small step-size parameter for adaptation.
\end{abstract}

\begin{IEEEkeywords}
Distributed learning, distributed optimization, diffusion adaptation, asynchronous behavior, adaptive networks, dynamic topology, link failures.
\end{IEEEkeywords}

\allowdisplaybreaks

\section{Introduction}
In Part I \cite{Zhao13TSPasync1}, we introduced a fairly general model for asynchronous distributed adaptation and learning over networks. The model allows the step-size for any agent to be randomly chosen within a range of values including zero, meaning an off-status; it also allows the communication links between any two agents to be randomly turned on and off; it further allows the topology to vary and the combination weights on the links between agents to vary randomly. The model also captures the situation in which agents randomly select a subset of their neighbors to share information with,  as happens in gossip implementations. Based on this asynchronous model, we carried out a detailed stability analysis in Part I \cite{Zhao13TSPasync1} and arrived at Theorem \ref{I-theorem:meansquarestability} in that Part, which provides an explicit condition on the first and second-order moments of the distribution of the step-sizes to ensure mean-square stability of the adaptive network. When the condition holds, it was further shown that the asymptotic mean-square-deviation (MSD) for every agent in the network remains bounded. Interestingly, it was shown that these conclusions hold irrespective of the randomness in the network topology.

In this Part II, we conduct a detailed mean-square-error (MSE) analysis in order to characterize the learning behavior of the asynchronous network in terms of the network parameters. In particular, we answer the following three questions:
\begin{itemize}
\item How is the convergence rate of the algorithm affected by the occurrence of asynchronous events in comparison to a synchronous network?

\item Are agents still able to reach some sort of agreement in steady-state despite the various sources of randomness influencing their interactions?

\item How close do the iterates of the various agents get to each other and to the desired optimal solution?
\end{itemize}
One of the main conclusions that will follow from the analysis is that, under certain reasonable conditions, the asynchronous network will continue to be able to deliver performance that is comparable to the synchronous case where no failures occur. In particular, we will be able to establish that for a sufficiently small step-size parameter $\nu$ it holds that
\begin{align}
\label{eqn:individualMSDapproxOnu}
\lim_{i\rightarrow \infty} \E\,\|w^o - \bm{w}_{k,i}\|^2 & = O(\nu) \\
\label{eqn:crossMSDapproxOnu2}
\lim_{i\rightarrow \infty} \E\,\|\bm{w}_{k,i} - \bm{w}_{\ell,i}\|^2 & = O(\nu^{1 + \gamma_o'})
\end{align}
for some $\gamma_o' > 0$ that is given by \eqref{eqn:gammaopdef} further ahead. These results imply that all agents reach a level of $O(\nu^{1 + \gamma_o'})$ agreement with each other and get $O(\nu)$ close to the desired optimal solution, $w^o$, in steady-state. This interesting behavior is illustrated in Fig. \ref{fig:normball}, where it is shown that, despite being subjected to various sources of randomness, the agents in an asynchronous network are still able to approach the desired solution and they are also able to coalesce close to each other. For the remainder of this part, we continue to use the same symbols, notation, and modeling assumptions from Part I \cite{Zhao13TSPasync1}. Moreover, we focus on presenting the main results and their interpretation in the body of the paper, while delaying the detailed proofs and arguments to the appendices.

\emph{Notation}: We use lowercase letters to denote vectors, uppercase letters for matrices, plain letters for deterministic variables, and boldface letters for random variables. We also use $(\cdot)^{\T}$ to denote transposition, $(\cdot)^{*}$ to denote conjugate transposition, $(\cdot)^{-1}$ for matrix inversion, $\Tr(\cdot)$ for the trace of a matrix, $\lambda(\cdot)$ for the eigenvalues of a matrix, $\| \cdot \|$ for the 2-norm of a matrix or the Euclidean norm of a vector, and $\rho(\cdot)$ for the spectral radius of a matrix. Besides, we use $\kron$ to denote the Kronecker product and $\bkron$ to denote the block Kronecker product.

\begin{figure}
\includegraphics[scale=1]{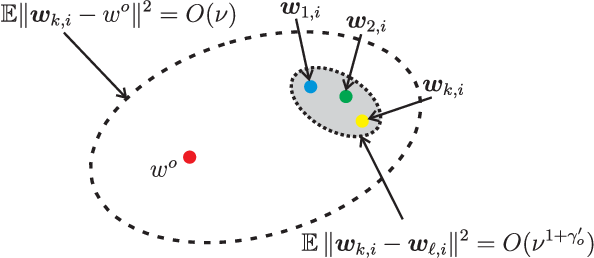}
\centering
\caption{Illustration for Eqs. \eqref{eqn:individualMSDapproxOnu} and \eqref{eqn:crossMSDapproxOnu2}: the agents do not only get $O(\nu)$ close to the target $w^o$ but they also cluster next to each other within $O(\nu^{1 + \gamma_o'})$ for some $\gamma_o' > 0$.}
\label{fig:normball}
\vspace{-1\baselineskip}
\end{figure}

\section{Network Error Dynamics}

In Part I \cite{Zhao13TSPasync1}, we examined the mean-square stability of the following asynchronous diffusion strategy:
\begin{subequations}
\begin{align}
\label{eqn:atcasync1}
{\bm{\psi}}_{k,i} & = {\bm{w}}_{k,i-1} - \bm{\mu}_k(i) \wh{\nabla_{w^*}J_k}(\bm{w}_{k,i-1}) \\
\label{eqn:atcasync2}
{\bm{w}}_{k,i} & = \sum_{\ell\in\bm{\Ncal}_{k,i}} \bm{a}_{\ell k}(i)\,{\bm{\psi}}_{\ell,i}
\end{align}
\end{subequations}
for the distributed solution of a global inference task in the following form:
\be
\label{eqn:globalcost}
\minimize_{w}\;\;J^{\glob}(w)\defeq\sum_{k=1}^{N}J_k(w)
\ee
The global optimum of \eqref{eqn:globalcost} is denoted by $w^o$. We continue to assume the same asynchronous model from Part I \cite{Zhao13TSPasync1}. Specifically, the random step-sizes $\{\bm{\mu}_k(i)\}$ and random combination coefficients $\{\bm{a}_{\ell k}(i)\}$ in \eqref{eqn:atcasync1}--\eqref{eqn:atcasync2} satisfy the conditions described in Section \ref{I-sec:model} of Part I \cite{Zhao13TSPasync1}; and the cost functions $\{J_k(w)\}$ satisfy Assumptions \ref{I-asm:costfunctions} and \ref{I-asm:boundedHessian} in Section \ref{I-sec:pre} of Part I \cite{Zhao13TSPasync1}. However, in order to study the mean-square-error performance in \emph{steady-state}, it is necessary to strengthen the assumption on the stochastic gradient vectors $\{\wh{\nabla_{w^*}J_k}(\bm{w}_{k,i-1})\}$. We replace the gradient noise model described in Assumption \ref{I-asm:gradientnoise} in Section \ref{I-subsec:syncDiff} of Part I \cite{Zhao13TSPasync1} by the following one. 

\begin{assumption}[Gradient noise model]
\label{asm:gradientnoisestronger}
\hfill
\begin{enumerate}
\item The gradient noise $\bm{v}_{k,i}(\bm{w}_{k,i-1})$, conditioned on $\F_{i-1}$, is assumed to be independent of any other random sources including topology, links, combination coefficients, and step-sizes. The conditional moments of $\bm{v}_{k,i}(\bm{w}_{k,i-1})$ satisfy:
\begin{align}
\label{eqn:vkicond_mean}
\E\,[ \bm{v}_{k,i}(\bm{w}_{k,i-1}) | \F_{i-1} ] &  =  0 \\
\label{eqn:vkicond_4thm}
\E\,[ \|\bm{v}_{k,i}(\bm{w}_{k,i-1}) \|^4 | \F_{i-1} ] &  \le  \alpha^2 \| w^o  -  \bm{w}_{k,i-1}\|^4  +  \sigma_v^4   
\end{align}
for some $\alpha \ge 0$ and $\sigma_v^2 \ge 0$.

\item The individual gradient noises $\{ \bm{v}_{k,i}(\bm{w}_{k,i-1}) \}$ are uncorrelated and circular across all agents such that
\be
\label{eqn:bigRiandRki}
\Rcal_i(\bm{w}_{i-1}) = \diag\{ R_{1,i}(\bm{w}_{1,i-1}), \dots, R_{N,i}(\bm{w}_{N,i-1}) \}
\ee
where $\Rcal_i(\bm{w}_{i-1})$ and $\{ R_{k,i}(\bm{w}_{k,i-1}) \}$ are from \eqref{I-eqn:bigRidef} and \eqref{I-eqn:Rkidef} both in Part I \cite{Zhao13TSPasync1}.

\item The conditional covariance of $\ubar{\bm{v}}_i(\bm{w}_{i-1})$ satisfies the Lipschitz condition
\be
\label{eqn:LipschitzRvi}
\| \Rcal_i(\one_N \kron w^o) - \Rcal_i(\bm{w}_{i-1}) \| \le \kappa_v \| \one_N \kron w^o - \bm{w}_{i-1} \|^{\gamma_v}
\ee
for some constants $\kappa_v \ge 0$ and $0 < \gamma_v \le 4$.

\item The covariance of $\ubar{\bm{v}}_i(\one_N \kron w^o)$ converges to a constant matrix:
\be
\label{eqn:bigRdef}
\Rcal \defeq \lim_{i\rightarrow\infty} \Rcal_i(\one_N \kron w^o) \defeq \diag\{R_1, \dots, R_N\}
\ee
where 
\be
\label{eqn:Rkdef}
R_k \defeq \lim_{i\rightarrow\infty} R_{k,i}(w^o)
\ee
\hfill \IEEEQED
\end{enumerate}
\end{assumption}

From Assumption \ref{asm:gradientnoisestronger}, the conditional moments of $\ubar{\bm{v}}_{k,i}(\bm{w}_{k,i-1})$ satisfy
\begin{align}
\label{eqn:gradientnoisemean}
\E[\ubar{\bm{v}}_{k,i}(\bm{w}_{k,i-1}) | \F_{i-1} ] & = 0 \\
\label{eqn:gradientnoise4thmoment}
\E[\|\ubar{\bm{v}}_{k,i}(\bm{w}_{k,i-1})\|^4 | \F_{i-1} ] & \le \alpha^2 \|\ubar{w}^o - \ubar{\bm{w}}_{k,i-1}\|^4 + 4 \sigma_v^4
\end{align}
where a factor of 4 appeared due to the transform $\ubar{\mbbT}(\cdot)$ from \eqref{I-eqn:ubardef} of Part I \cite{Zhao13TSPasync1}.

\subsection{Long Term Error Dynamics}
\label{subsec:longtermhighprob}
We showed in \eqref{I-eqn:errorrecursiondef} from Part I \cite{Zhao13TSPasync1} that the error recursion for the asynchronous network \eqref{eqn:atcasync1}--\eqref{eqn:atcasync2} evolves according to the following dynamics:
\be
\label{eqn:errorrecursiondef}
\ubar{\wt{\bm{w}}}_i  =  \bm{\Acal}_i^\T (I_{2MN}  -  \bm{\Mcal}_i\bm{\Hcal}_{i-1})\ubar{\wt{\bm{w}}}_{i-1}  +  \bm{\Acal}_i^\T\bm{\Mcal}_i\ubar{\bm{v}}_i(\bm{w}_{i-1})
\ee
where $\bm{\Hcal}_{i-1} = \diag\{ \bm{H}_{1,i-1}, \bm{H}_{2,i-1}, \dots, \bm{H}_{N,i-1} \}$ and 
\be
\label{eqn:Hkidef}
\bm{H}_{k,i-1} \defeq \int_{0}^{1} \nabla_{\ubar{w}\ubar{w}^*}^2 J_k(\ubar{w}^o-t\ubar{\wt{\bm{w}}}_{k,i-1})\,dt
\ee
The dependency of $\bm{\Hcal}_{i-1}$ on the previous iterate $\bm{w}_{i-1}$ complicates the mean-square analysis. Recall though from Lemma \ref{I-lemma:lipschitzHessian} in Part I \cite{Zhao13TSPasync1} that the Hessian matrices of the costs $\{J_k(\ubar{w})\}$ are globally Lipschitz around $\ubar{w}^o$. Let
\be
\label{eqn:Hkdef}
H_k \defeq \nabla_{\ubar{w}\ubar{w}^{*}}^2 J_k(\ubar{w}^o), \quad \Hcal \defeq \diag\{H_1,\dots,H_N\}
\ee
Recursion \eqref{eqn:errorrecursiondef} can then be rewritten as
\be
\label{eqn:errorrecursiondefdelta}
\ubar{\wt{\bm{w}}}_i = \bm{\Acal}_i^\T (I_{2MN}   -  \bm{\Mcal}_i \Hcal) \ubar{\wt{\bm{w}}}_{i-1} + \bm{\Acal}_i^\T \bm{\Mcal}_i \ubar{\bm{v}}_i(\bm{w}_{i-1}) + \bm{\Acal}_i^\T \bm{d}_i
\ee
where the perturbation factor $\bm{d}_i$ is given by
\begin{align}
\label{eqn:Deltaidef}
& \bm{d}_i \defeq \bm{\Mcal}_i (\Hcal  -  \bm{\Hcal}_{i-1}) \ubar{\wt{\bm{w}}}_{i-1}  \defeq  \col\{ \bm{d}_{1,i}, \dots, \bm{d}_{N,i} \} \\
\label{eqn:Deltakidef}
& \bm{d}_{k,i} \defeq \bm{\mu}_k(i) (H_k - \bm{H}_{k,i-1}) \ubar{\wt{\bm{w}}}_{k,i-1}
\end{align}
Let $\bar{\mu}_k^{(n)} \defeq \E[ \bm{\mu}_k(i) ]^n$ denote the $n$-th moment of the random step-size parameter $\bm{\mu}_k(i)$; we also use $\bar{\mu}_k \equiv \bar{\mu}_k^{(1)}$ from \eqref{I-eqn:randstepsizemeanentry} of Part I \cite{Zhao13TSPasync1} for the mean and $c_{\mu,k,\ell} = \E [(\bm{\mu}_k(i) - \bar{\mu}_k)(\bm{\mu}_\ell(i) - \bar{\mu}_\ell)]$ from \eqref{I-eqn:randstepsizecoventry} of Part I \cite{Zhao13TSPasync1} for the cross-covariance.

\begin{lemma}[Size of perturbation]
\label{lemma:sizeofdki}
If condition \eqref{I-eqn:boundstepsize4thorder} in Part I \cite{Zhao13TSPasync1}, namely,
\be
\label{eqn:boundstepsize4thorder}
\frac{\sqrt{\bar{\mu}_k^{(4)}}}{\bar{\mu}_k^{(1)} } < \frac{ \lambda_{k, \min} }{ 3\lambda_{k, \max}^2 + 4\alpha }
\ee
holds for all $k$, then 
\be
\label{eqn:bounddkimeansquare}
\limsup_{i \rightarrow \infty} \E \| \bm{\Acal}_i^\T \bm{d}_i \|^2 \le O(\nu^4)
\ee
where
\be
\label{eqn:nudef}
\nu \defeq \max_{k} \frac{\sqrt{\bar{\mu}_k^{(4)}}}{\bar{\mu}_k^{(1)} }
\ee
\end{lemma}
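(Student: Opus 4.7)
The plan is to reduce the claim to a fourth-order moment bound on the iterate error and then invoke the stability result already established under hypothesis \eqref{eqn:boundstepsize4thorder} in Part I. Three tools cooperate: (i) the operator-norm boundedness of the block combination matrix $\bm{\Acal}_i$, (ii) the Lipschitz continuity of the Hessians from Lemma \ref{I-lemma:lipschitzHessian}, and (iii) the independence of $\bm{\mu}_k(i)$ from $\F_{i-1}$ in the asynchronous model.

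First, because $\bm{A}_i$ is left-stochastic of fixed dimension $N$ with entries in $[0,1]$, there is a deterministic constant $c_A$ with $\|\bm{\Acal}_i^\T \bm{d}_i\|^2 \le c_A \sum_{k=1}^N \|\bm{d}_{k,i}\|^2$, so it suffices to bound each $\E\|\bm{d}_{k,i}\|^2$. Applying Lemma \ref{I-lemma:lipschitzHessian} inside the integral representation \eqref{eqn:Hkidef} yields $\|H_k - \bm{H}_{k,i-1}\| \le \lambda_{H,k}\|\ubar{\wt{\bm{w}}}_{k,i-1}\|$ for some constant $\lambda_{H,k}$, so from \eqref{eqn:Deltakidef},
\begin{equation*}
\|\bm{d}_{k,i}\|^2 \le \bm{\mu}_k(i)^2\,\lambda_{H,k}^2\,\|\ubar{\wt{\bm{w}}}_{k,i-1}\|^4.
\end{equation*}
Since $\bm{\mu}_k(i)$ is independent of $\F_{i-1}$ while $\ubar{\wt{\bm{w}}}_{k,i-1}$ is $\F_{i-1}$-measurable, conditioning on $\F_{i-1}$ and taking expectations gives
\begin{equation*}
\E\|\bm{d}_{k,i}\|^2 \le \bar\mu_k^{(2)}\,\lambda_{H,k}^2\,\E\|\ubar{\wt{\bm{w}}}_{k,i-1}\|^4.
\end{equation*}

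Two facts close the argument. Two applications of Jensen's inequality, $(\bar\mu_k^{(1)})^2 \le \bar\mu_k^{(2)}$ and $(\bar\mu_k^{(2)})^2 \le \bar\mu_k^{(4)}$, combine with definition \eqref{eqn:nudef} to yield $\sqrt{\bar\mu_k^{(2)}} \le \sqrt{\bar\mu_k^{(4)}}/\bar\mu_k^{(1)} \le \nu$, hence $\bar\mu_k^{(2)} \le \nu^2$. At the same time, condition \eqref{eqn:boundstepsize4thorder} is precisely the hypothesis under which Part I proves fourth-order mean stability, giving $\limsup_{i\to\infty} \E\|\ubar{\wt{\bm{w}}}_{k,i-1}\|^4 = O(\nu^2)$. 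Substituting, $\limsup_i \E\|\bm{d}_{k,i}\|^2 \le \nu^2\cdot O(\nu^2) = O(\nu^4)$, and summing over $k$ delivers \eqref{eqn:bounddkimeansquare}.

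The genuine difficulty sits outside this lemma: establishing the $O(\nu^2)$ fourth-order bound on $\E\|\ubar{\wt{\bm{w}}}_{k,i-1}\|^4$. That bound is the main technical content of the corresponding section in Part I, and once it is invoked the remaining steps are only the Lipschitz estimate on the Hessian and the elementary moment inequalities above. The one subtle point to verify is that a pointwise-in-$\omega$ (conditional) version of the fourth-order bound is not actually needed here; the unconditional bound suffices because $\bm{\mu}_k(i)$ decouples cleanly from $\ubar{\wt{\bm{w}}}_{k,i-1}$ by independence.
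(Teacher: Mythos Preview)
Your proposal is correct and follows essentially the same route as the paper: bound $\|H_k-\bm{H}_{k,i-1}\|$ via the Hessian Lipschitz lemma, deduce $\E\|\bm{d}_{k,i}\|^2 \le \bar{\mu}_k^{(2)}\cdot O(1)\cdot\E\|\ubar{\wt{\bm{w}}}_{k,i-1}\|^4$, then plug in $\bar{\mu}_k^{(2)}\le\nu^2$ and the $O(\nu^2)$ fourth-order bound from Part~I. The only cosmetic difference is that the paper handles $\bm{\Acal}_i^\T$ via Jensen's inequality on the left-stochastic convex combination (giving the constant $N$) rather than your crude operator-norm constant $c_A$, but this does not affect the argument or the order.
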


\begin{proof}
See Appendix \ref{app:sizeofdki}.
\end{proof}

\begin{assumption}[{Small step-sizes}]
\label{asm:smallstepsizes} 
The parameter $\nu$ from \eqref{eqn:nudef} is sufficiently small such that
\be
\nu < \min_k \frac{ \lambda_{k, \min} }{ 3\lambda_{k, \max}^2 + 4\alpha } < 1
\ee
\hfill \IEEEQED
\end{assumption}

Under Assumption \ref{asm:smallstepsizes}, condition \eqref{eqn:boundstepsize4thorder} holds. It was shown in \eqref{I-eqn:oldnulessnewnu} and \eqref{I-eqn:newboundlessoldbound} from Part I \cite{Zhao13TSPasync1} that condition \eqref{eqn:boundstepsize4thorder} in this Part implies condition \eqref{I-eqn:meansquarestabilitycond3} from Part I \cite{Zhao13TSPasync1}, i.e.,
\be
\label{eqn:meansquarestabilitycond3} 
\frac{\bar{\mu}_k^{(2)}}{\bar{\mu}_k^{(1)}} < \frac{\lambda_{k,\min}}{\alpha + \lambda_{k,\max}^2}
\ee
for all $k$.

Since we are interested in examining the asymptotic performance of the asynchronous network, result \eqref{eqn:bounddkimeansquare} indicates that the network error recursion \eqref{eqn:errorrecursiondef} can be expressed for large enough $i$ by using the following \emph{long-term} model:
\be
\label{eqn:errorrecursiondefapprox}
\ubar{\wt{\bm{w}}}_i' = \bm{\Acal}_i^\T (I_{2MN} - \bm{\Mcal}_i\Hcal)\ubar{\wt{\bm{w}}}_{i-1}' + \bm{\Acal}_i^\T \bm{\Mcal}_i \ubar{\bm{v}}_i(\bm{w}_{i-1})
\ee
where we ignore the $O(\nu^2)$ term $\bm{\Acal}_i^\T \bm{d}_i$ according to \eqref{eqn:bounddkimeansquare}, and we use $\bm{w}_{i-1}'$ to denote the estimate obtained from this long-term model. It is worth noting that the gradient noise $\ubar{\bm{v}}_i(\bm{w}_{i-1})$ in \eqref{eqn:errorrecursiondefapprox} is an extraneous noise that is imported from the original model \eqref{eqn:errorrecursiondef}; it only depends on the original estimate $\bm{w}_{i-1}$ but not on $\bm{w}_{i-1}'$. We will now use recursion \eqref{eqn:errorrecursiondefapprox} to determine expressions (rather than bounds) for the steady-state individual MSD and for the average network MSD. One advantage of model \eqref{eqn:errorrecursiondefapprox} is that the random matrix $\bm{\Hcal}_{i-1}$ from \eqref{eqn:errorrecursiondef} has been replaced by the constant matrix $\Hcal$. More formally, under Assumption \ref{asm:gradientnoisestronger} on the fourth-order moment of the gradient noise, and by extending the arguments of Appendices \ref{I-app:proofofstablity} and \ref{I-app:4thorder} from Part I \cite{Zhao13TSPasync1} and the arguments of \cite{Chen13TIT}, we will establish later in \eqref{eqn:networkMSDproof3} that the MSD expression resulting from \eqref{eqn:errorrecursiondefapprox} is within $O(\nu^{3/2})$ of the MSD expression for the original recursion \eqref{eqn:errorrecursiondef}; this conclusion will rely on the following useful result.

\begin{theorem}[Bounded mean-square gap]
\label{theorem:gapmeansquare}
Under Assumptions \ref{asm:gradientnoisestronger} and \ref{asm:smallstepsizes}, the mean-square gap from the original error recursion \eqref{eqn:errorrecursiondef} to the long-term model \eqref{eqn:errorrecursiondefapprox} is then asymptotically bounded by 
\be
\limsup_{i\rightarrow\infty} \left[ \max_k \E \|\ubar{\wt{\bm{w}}}_{k,i} - \ubar{\wt{\bm{w}}}_{k,i}' \|^2 \right] \le O(\nu^2)
\ee
for any $k$.
\end{theorem}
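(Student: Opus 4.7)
The plan is to subtract the two recursions, bound the resulting gap as a driven linear system, and combine the size of the perturbation from Lemma \ref{lemma:sizeofdki} with the mean-square contraction already established in Part I. Setting $\bm{\Delta}_i \defeq \ubar{\wt{\bm{w}}}_i - \ubar{\wt{\bm{w}}}_i'$ and subtracting \eqref{eqn:errorrecursiondefapprox} from \eqref{eqn:errorrecursiondefdelta}, the gradient-noise term $\bm{\Acal}_i^\T\bm{\Mcal}_i\ubar{\bm{v}}_i(\bm{w}_{i-1})$ cancels identically, because the noise is imported from the original model into \eqref{eqn:errorrecursiondefapprox} with the same argument $\bm{w}_{i-1}$ in both recursions. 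What remains is the clean driven recursion
\[ \bm{\Delta}_i = \bm{\Acal}_i^\T (I_{2MN} - \bm{\Mcal}_i \Hcal)\,\bm{\Delta}_{i-1} + \bm{\Acal}_i^\T \bm{d}_i, \]
so the mismatch is forced solely by the perturbation $\bm{\Acal}_i^\T\bm{d}_i$, whose mean-square size is pinned by Lemma \ref{lemma:sizeofdki} at $O(\nu^4)$.

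Next I would apply the elementary bound $\|a+b\|^2 \le (1+\epsilon)\|a\|^2 + (1+1/\epsilon)\|b\|^2$ with a free parameter $\epsilon = \Theta(\nu)$, and then take conditional expectation given $\F_{i-1}$. Since $(\bm{\Acal}_i,\bm{\Mcal}_i)$ is independent of $\F_{i-1}$ under the asynchronous model of Part I, the first term reduces to a deterministic quadratic form $\bm{\Delta}_{i-1}^*\,\Phi\,\bm{\Delta}_{i-1}$ with $\Phi \defeq \E[(I-\Hcal\bm{\Mcal}_i)\bm{\Acal}_i\bm{\Acal}_i^\T(I-\bm{\Mcal}_i\Hcal)]$. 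This is precisely the coefficient matrix underlying the mean-square analysis of Theorem \ref{I-theorem:meansquarestability}: under Assumption \ref{asm:smallstepsizes}, which is strictly stronger than \eqref{eqn:meansquarestabilitycond3}, the arguments of Part I deliver a contraction $\|\Phi\| \le 1 - c_1\nu$ for some $c_1 > 0$ that is independent of $\nu$. Combining this with the $O(\nu^4)$ perturbation bound and taking unconditional expectations yields a scalar recursion of the form
\[ \E\|\bm{\Delta}_i\|^2 \;\le\; (1+\epsilon)(1-c_1\nu)\,\E\|\bm{\Delta}_{i-1}\|^2 + (1+1/\epsilon)\,O(\nu^4). \]
Choosing $\epsilon = (c_1/2)\nu$ makes the contraction factor $1 - (c_1/2)\nu + O(\nu^2) < 1$ and the inhomogeneous term $O(\nu^3)$; iterating and sending $i \to \infty$ produces a fixed point of size $O(\nu^3)/\Theta(\nu) = O(\nu^2)$. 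Since $\max_k \E\|\bm{\Delta}_{k,i}\|^2 \le \E\|\bm{\Delta}_i\|^2$, the per-agent claim will then follow.

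The main obstacle I expect is not the scalar Lyapunov recursion but the step of transferring the $1 - c_1\nu$ contraction from Part I to a norm compatible with the quadratic form above. The Part I stability analysis was phrased in a block-maximum norm on Kronecker-product matrices, so here one must either re-derive the contraction directly on $\Phi$ using a similarity transformation that respects the left-stochastic structure of $\bm{\Acal}_i$, or appeal to norm equivalence with constants that remain uniform in $\nu$. It is equally essential that $c_1$ stay bounded away from zero uniformly in $\nu$, which hinges on the strict-inequality slack in \eqref{eqn:meansquarestabilitycond3} guaranteed by Assumption \ref{asm:smallstepsizes}. Finally, the scaling $\epsilon = \Theta(\nu)$ is not incidental --- it is the unique calibration for which the inflation factor $1+1/\epsilon = \Theta(1/\nu)$ consumes exactly one power of $\nu$, so that the $O(\nu^4)$ perturbation is converted into the tight $O(\nu^2)$ steady-state gap asserted by the theorem.
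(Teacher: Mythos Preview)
Your overall architecture matches the paper's exactly: subtract the two recursions so that the gradient noise cancels, apply the Young-type inequality $\|a+b\|^2 \le (1-t)^{-1}\|a\|^2 + t^{-1}\|b\|^2$ with a parameter of order $\nu$, obtain a scalar contraction of rate $1-\Theta(\nu)$ driven by an $O(\nu^3)$ term, and iterate to the $O(\nu^2)$ fixed point. The scaling bookkeeping you describe is precisely what the paper does.

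The one substantive difference is the norm in which the contraction is established, and it is exactly the obstacle you flag. You work at the network level and need $\|\Phi\|_2 \le 1-c_1\nu$ for $\Phi = \E[(I-\Hcal\bm{\Mcal}_i)\bm{\Acal}_i\bm{\Acal}_i^\T(I-\bm{\Mcal}_i\Hcal)]$; but since $\bm{A}_i$ is only left-stochastic, $\|\bm{A}_i\|_2$ may exceed $1$, so this bound is not immediate and would indeed require a nontrivial norm-equivalence or similarity argument. The paper sidesteps this completely by running the recursion in the per-agent block-maximum seminorm $\max_k \E\|\cdot_{k}\|^2$: Jensen's inequality applied to the convex combination $\sum_\ell \bm{a}_{\ell k}(i)(\cdot)$ makes the combination step a non-expansion in that seminorm, and the contraction then reduces to the elementary scalar bound $\E\|I_{2M}-\bm{\mu}_k(i)H_k\|^2 \le 1-\bar{\mu}_k^{(1)}\lambda_{k,\min}$, with $t=\tfrac12\bar{\mu}_k^{(1)}\lambda_{k,\min}$ playing the role of your $\epsilon$. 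So the paper's per-agent route is not a different idea but rather the natural resolution of the very obstacle you anticipated; your network-level version would work but leaves that step genuinely open.
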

\begin{IEEEproof}
See Appendix \ref{app:boundgapmeansquare}.
\end{IEEEproof}

To proceed with the mean-square-error performance analysis, we introduce the following auxiliary variables:
\begin{align}
\label{eqn:Dkidef}
\bm{D}_{k,i} & \defeq I_{2M} - \bm{\mu}_k(i) H_k \\
\label{eqn:bigDidef}
\bm{\Dcal}_i & \defeq I_{2MN} - \bm{\Mcal}_i \Hcal = \diag\{ \bm{D}_{k,i} \} \\
\label{eqn:bigBidef}
\bm{\Bcal}_i & \defeq \bm{\Acal}_i^\T \bm{\Dcal}_i \\
\label{eqn:bigsidef}
\ubar{\bm{s}}_i & \defeq \bm{\Acal}_i^\T \bm{\Mcal}_i \ubar{\bm{v}}_i(\bm{w}_{i-1})
\end{align}
Based on the gradient noise model in Assumption \ref{asm:gradientnoisestronger} and the asynchronous network model described in Section \ref{I-sec:model} of Part I \cite{Zhao13TSPasync1}, it is easy to verify that the (conditional) means of $\{\bm{\Acal}_i, \bm{\Mcal}_i, \bm{D}_{k,i}, \bm{\Dcal}_i, \bm{\Bcal}_i, \ubar{\bm{s}}_i\}$ are given by:
\begin{align}
\label{eqn:bigAmeandef}
\bar{\Acal} & \defeq \E(\bm{\Acal}_i) = \bar{A} \kron I_{2M} \\
\label{eqn:bigMmeandef}
\bar{\Mcal} & \defeq \E(\bm{\Mcal}_i) = \bar{M} \kron I_{2M} \\
\label{eqn:Dkmeandef}
\bar{D}_k & \defeq \E(\bm{D}_{k,i}) = I_{2M} - \bar{\mu}_k H_k \\
\label{eqn:bigDmeandef}
\bar{\Dcal} & \defeq \E(\bm{\Dcal}_i) = I_{2MN} - \bar{\Mcal} \Hcal = \diag\{ \bar{D}_k \} \\
\label{eqn:bigBmeandef}
\bar{\Bcal} & \defeq \E(\bm{\Bcal}_i) = \bar{\Acal}^\T \bar{\Dcal} \\
\label{eqn:bigsmeandef}
\bar{s} & \defeq \E(\ubar{\bm{s}}_i | \F_{i-1} ) = 0
\end{align}
It can be verified that the block-Kronecker-covariance matrices of several random quantities are given by:
\begin{align}
\label{eqn:bigCAdef}
\Ccal_A & \defeq \E[(\bm{\Acal}_i - \bar{\Acal}) \bkron (\bm{\Acal}_i - \bar{\Acal})] = C_A \kron I_{4M^2} \\
\label{eqn:bigCMdef}
\Ccal_M & \defeq \E[(\bm{\Mcal}_i - \bar{\Mcal}) \bkron (\bm{\Mcal}_i - \bar{\Mcal}) ] = C_M \kron I_{4M^2} \\
\label{eqn:bigCDdef}
\Ccal_D & \defeq \E[(\bm{\Dcal}_i^* - \bar{\Dcal}^*)^\T \bkron (\bm{\Dcal}_i - \bar{\Dcal}) ] = \Ccal_M(\Hcal^\T\bkron\Hcal) \\
\label{eqn:bigCBdef}
\Ccal_B & \defeq \E[(\bm{\Bcal}_i^* - \bar{\Bcal}^*)^\T \bkron (\bm{\Bcal}_i - \bar{\Bcal}) ] = (\bar{\Acal}^\T \bkron \bar{\Acal}^\T) \Ccal_D + \Ccal_A^\T (\bar{\Dcal}^\T \bkron \bar{\Dcal} + \Ccal_D)
\end{align}
where the symbol $\bkron$ denotes the block-Kronecker operation of block size $2M \times 2M$ (see Appendix \ref{app:blockvecandkronecker}). Moreover, it can be verified by using property \eqref{eqn:bvecadd} from Appendix \ref{app:blockvecandkronecker} that
\begin{align}
\label{eqn:blockkroneckercovariance}
\E[(\bm{\Xcal}^* - \bar{\Xcal}^*)^\T \bkron (\bm{\Xcal} - \bar{\Xcal})] = \E[(\bm{\Xcal}^*)^\T \bkron  \bm{\Xcal}] - (\bar{\Xcal}^*)^\T \bkron \bar{\Xcal}
\end{align}
for any random block matrix $\bm{\Xcal}$ with appropriate block size and with mean $\bar{\Xcal} \defeq \E\bm{\Xcal}$. The $\{C_A, C_M\}$ that appear in \eqref{eqn:bigCAdef}--\eqref{eqn:bigCBdef} relate to the second-order moments of $\{\bm{a}_{\ell k}(i)\}$ and $\{\bm{\mu}_k(i)\}$. Using \eqref{eqn:bigBidef} and  \eqref{eqn:bigsidef}, the long-term model \eqref{eqn:errorrecursiondefapprox} can be rewritten as
\be
\label{eqn:errorrecursiondefapprox_new}
\ubar{\wt{\bm{w}}}_i' = \bm{\Bcal}_i \cdot \ubar{\wt{\bm{w}}}_{i-1}' + \ubar{\bm{s}}_i
\ee

\subsection{Mean Error Recursion}
Taking the expectation of both sides of \eqref{eqn:errorrecursiondefapprox_new}, we end up with the mean error recursion for large $i$:
\be
\label{eqn:meanerrorrecursion}
\E\,\ubar{\wt{\bm{w}}}_i' = \bar{\Bcal} \cdot \E\,\ubar{\wt{\bm{w}}}_{i-1}'
\ee
The stability of recursion \eqref{eqn:meanerrorrecursion} requires the stability of $\bar{\Bcal}$. A condition on the step-sizes to ensure the stability of $\bar{\Bcal}$ can be derived as follows. Using the fact that $\bar{\Acal}$ is block left-stochastic and $\bar{\Dcal}$ is block diagonal and Hermitian, and following the same argument in \cite[App.~A]{Zhao12TSP} \cite{Sayed13Chapter}, we obtain
\be
\label{eqn:Bcalbinfnorm}
\rho(\bar{\Bcal}) \le \rho(\bar{\Dcal})
\ee
where $\rho(\cdot)$ denotes the spectral radius of its matrix argument. It follows from \eqref{eqn:bigDmeandef} and \eqref{eqn:Bcalbinfnorm} that asymptotic mean stability is guaranteed if the mean step-size $\bar{\mu}_k$ satisfies 
\be
\label{eqn:rhobigBlessthan1}
\bar{\mu}_k \equiv \bar{\mu}_k^{(1)} < \frac{2}{\rho(H_k)}
\ee
for all $k$. Since $H_k$ is a positive semi-definite matrix, its spectral radius coincides with its largest eigenvalue. Using \eqref{I-eqn:boundseigHessian} from Part I \cite{Zhao13TSPasync1}, we have $\rho(H_k) \le \lambda_{k,\max}$. If condition \eqref{eqn:boundstepsize4thorder} holds, then from \eqref{I-eqn:boundmumoments} of Part I \cite{Zhao13TSPasync1}, we have
\be
\label{eqn:meanstabecondition}
\bar{\mu}_k^{(1)} \le \frac{\sqrt{\bar{\mu}_k^{(4)}}}{\bar{\mu}_k^{(1)}} < \frac{\lambda_{k,\min}}{ 3 \lambda_{k,\max}^2 + 4 \alpha } \le \frac{\lambda_{k,\min}}{ 3 \lambda_{k,\max}^2 } \le \frac{2}{\rho(H_k)}
\ee
since $\alpha > 0$. Therefore, condition \eqref{eqn:rhobigBlessthan1} holds if condition \eqref{eqn:boundstepsize4thorder} does so. With Assumption \ref{asm:smallstepsizes}, we have
\be
\label{eqn:limmeanerror}
\lim_{i\rightarrow\infty} \E\,\wt{\bm{w}}_{k,i}' = 0
\ee
for all $k$. From \eqref{eqn:limmeanerror}, we conclude that the long-term model \eqref{eqn:errorrecursiondefapprox} or, equivalently, \eqref{eqn:errorrecursiondefapprox_new}, is the asymptotically centered version of the original error recursion \eqref{eqn:errorrecursiondef}.

\subsection{Error Covariance Recursion}
\label{sec:errorcovrecursion}
We proceed to examine the evolution of the covariance matrix of the  network error vector $\ubar{\wt{\bm{w}}}_i'$ in the long-term model \eqref{eqn:errorrecursiondefapprox_new}. Let
\begin{align} 
\label{eqn:bigridef}
r_i(\bm{w}_{i-1}) & \defeq \bvec( \Rcal_i(\bm{w}_{i-1}) ) 
= \E [ (\ubar{\bm{v}}_i^*(\bm{w}_{i-1}))^\T \bkron \ubar{\bm{v}}_i(\bm{w}_{i-1}) | \F_{i-1} ] \\
\label{eqn:bigyidef}
y_i & \defeq (\bar{\Acal}^\T \bkron \bar{\Acal}^\T + \Ccal_A^\T)(\bar{\Mcal} \bkron \bar{\Mcal} + \Ccal_M) \E [r_i(\bm{w}_{i-1})] \\
\label{eqn:bigzidef}
z_i & \defeq \bvec( \E\,(\ubar{\wt{\bm{w}}}_i' \ubar{\wt{\bm{w}}}_i'^*) ) = \E\,[(\ubar{\wt{\bm{w}}}_i'^*)^\T \bkron \ubar{\wt{\bm{w}}}_i' ] \\
\label{eqn:bigGdef}
\Gcal & \defeq \E[(\bm{\Dcal}_i^*)^\T \bkron \bm{\Dcal}_i] = \bar{\Dcal}^\T \bkron \bar{\Dcal} + \Ccal_D \\
\label{eqn:bigFdef}
\Fcal & \defeq \E[(\bm{\Bcal}_i^*)^\T \bkron \bm{\Bcal}_i]^* = \bar{\Bcal}^\T \bkron \bar{\Bcal}^* + \Ccal_B^* = \Gcal(\bar{\Acal} \bkron \bar{\Acal} + \Ccal_A)
\end{align}
where the notations $\bvec(\cdot)$ and $\bkron$ denote block vectorization and block Kronecker products, respectively, both of size $2M\times 2M$ (see Appendix \ref{app:blockvecandkronecker}). We note that the second equalities in \eqref{eqn:bigridef} and \eqref{eqn:bigzidef} are due to property \eqref{eqn:bvecproduct_vec} and the second equalities in \eqref{eqn:bigGdef} and \eqref{eqn:bigFdef} are by using \eqref{eqn:bigDmeandef}, \eqref{eqn:bigBmeandef}, and \eqref{eqn:bigCDdef}--\eqref{eqn:blockkroneckercovariance}. Using \eqref{eqn:bigridef}--\eqref{eqn:bigFdef}, we obtain the following recursion for the block-vectorized covariance matrix of the network error vector $\ubar{\wt{\bm{w}}}_i'$.

\begin{theorem}[{Network error covariance recursion}]
\label{theorem:errorcovariancerecursion}
The vector $z_i$ evolves according to the following recursion:
\be
\label{eqn:meansquareerrorrecursion}
z_i = \Fcal^* z_{i-1} + y_i
\ee
Recursion \eqref{eqn:meansquareerrorrecursion} converges if condition \eqref{eqn:boundstepsize4thorder} holds, and its convergence rate is determined by $\rho(\Fcal)$.
\end{theorem}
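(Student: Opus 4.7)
The plan is to derive \eqref{eqn:meansquareerrorrecursion} by forming the outer product $\ubar{\wt{\bm{w}}}_i' (\ubar{\wt{\bm{w}}}_i')^{*}$ directly from \eqref{eqn:errorrecursiondefapprox_new}, block-vectorizing it, and taking expectation. Specifically, I would expand
\[
\ubar{\wt{\bm{w}}}_i' (\ubar{\wt{\bm{w}}}_i')^{*} = \bm{\Bcal}_i \ubar{\wt{\bm{w}}}_{i-1}'(\ubar{\wt{\bm{w}}}_{i-1}')^{*}\bm{\Bcal}_i^{*} + \bm{\Bcal}_i \ubar{\wt{\bm{w}}}_{i-1}'\ubar{\bm{s}}_i^{*} + \ubar{\bm{s}}_i(\ubar{\wt{\bm{w}}}_{i-1}')^{*}\bm{\Bcal}_i^{*} + \ubar{\bm{s}}_i\ubar{\bm{s}}_i^{*},
\]
apply the identity $\bvec(uv^{*}) = (v^{*})^{\T}\bkron u$ from Appendix \ref{app:blockvecandkronecker} componentwise, and invoke the mixed-product rule $(AB)\bkron(CD) = (A\bkron C)(B\bkron D)$ to isolate the quadratic block as $\left[(\bm{\Bcal}_i^{*})^{\T} \bkron \bm{\Bcal}_i\right]\cdot\left[(\ubar{\wt{\bm{w}}}_{i-1}'^{*})^{\T}\bkron \ubar{\wt{\bm{w}}}_{i-1}'\right]$.

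Next I would take expectation term by term. For the two cross terms, conditioning on $\F_{i-1}$ fixes $\ubar{\wt{\bm{w}}}_{i-1}'$ and leaves the noise $\ubar{\bm{v}}_i$ independent of $\{\bm{\Acal}_i,\bm{\Mcal}_i,\bm{\Bcal}_i\}$; since $\E[\ubar{\bm{v}}_i(\bm{w}_{i-1})\mid\F_{i-1}]=0$ by \eqref{eqn:gradientnoisemean}, both cross terms vanish. For the quadratic $\bm{\Bcal}_i$-term, the $\F_{i-1}$-measurability of $\ubar{\wt{\bm{w}}}_{i-1}'$ together with the independence of $\bm{\Bcal}_i$ from $\F_{i-1}$ lets me factor the expectation and obtain $\E\bigl[(\bm{\Bcal}_i^{*})^{\T}\bkron\bm{\Bcal}_i\bigr]\, z_{i-1} = \Fcal^{*}\,z_{i-1}$ by definition \eqref{eqn:bigFdef}. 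For the noise term I would substitute $\ubar{\bm{s}}_i = \bm{\Acal}_i^{\T}\bm{\Mcal}_i\ubar{\bm{v}}_i$, apply the mixed-product rule twice to get $(\ubar{\bm{s}}_i^{*})^{\T}\bkron\ubar{\bm{s}}_i = (\bm{\Acal}_i^{*}\bkron\bm{\Acal}_i^{\T})\bigl((\bm{\Mcal}_i^{*})^{\T}\bkron\bm{\Mcal}_i\bigr)\bigl((\ubar{\bm{v}}_i^{*})^{\T}\bkron\ubar{\bm{v}}_i\bigr)$, first take a conditional expectation on $\F_{i-1}$ (producing $r_i(\bm{w}_{i-1})$ via \eqref{eqn:bigridef}) and then use the independence of $\bm{\Acal}_i$ and $\bm{\Mcal}_i$ to split the remaining factor, invoking the covariance identity \eqref{eqn:blockkroneckercovariance} to convert each $\E[(\bm{\Xcal}^{*})^{\T}\bkron\bm{\Xcal}]$ into the mean-plus-covariance form $(\bar{\Xcal}^{*})^{\T}\bkron\bar{\Xcal} + \Ccal_X$. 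Matching with definitions \eqref{eqn:bigCAdef}--\eqref{eqn:bigCBdef} yields exactly $y_i$ as given in \eqref{eqn:bigyidef}, completing the recursion.

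For the convergence claim it suffices to show $\rho(\Fcal)<1$, after which the linear recursion $z_i=\Fcal^{*}z_{i-1}+y_i$ contracts with rate $\rho(\Fcal^{*})=\rho(\Fcal)$, provided the driving sequence $y_i$ stays bounded — this last point follows from the mean-square stability of the \emph{original} recursion already established in Part I \cite{Zhao13TSPasync1}, together with the Lipschitz bound \eqref{eqn:LipschitzRvi} which forces $\E[r_i(\bm{w}_{i-1})]$ to remain close to the constant $\bvec(\Rcal)$ from \eqref{eqn:bigRdef}. To establish $\rho(\Fcal)<1$, I would observe that $\Fcal$ is structurally identical to the mean-square operator analyzed in Part I, except that the random state-dependent Hessian $\bm{\Hcal}_{i-1}$ has been replaced by the constant $\Hcal$; since each $H_k$ lies within the same spectral bounds $[\lambda_{k,\min},\lambda_{k,\max}]$ used to bound $\bm{H}_{k,i-1}$, the Part I derivation of \eqref{eqn:meansquarestabilitycond3} from \eqref{eqn:boundstepsize4thorder} transfers verbatim. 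The main obstacle is the careful bookkeeping of the block-Kronecker factorizations and nested conditional expectations: one must invoke the independence structure of the asynchronous model (noise versus network randomness, and $\bm{\Acal}_i$ versus $\bm{\Mcal}_i$) in exactly the right order so that the product $\E[\bm{\Acal}_i^{*}\bkron\bm{\Acal}_i^{\T}]\cdot\E[(\bm{\Mcal}_i^{*})^{\T}\bkron\bm{\Mcal}_i]\cdot\E[r_i]$ reproduces the factored form in \eqref{eqn:bigyidef}, without conflating covariances that should remain separate.
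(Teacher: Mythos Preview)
Your derivation of the recursion \eqref{eqn:meansquareerrorrecursion} is essentially the paper's argument: form the outer product from \eqref{eqn:errorrecursiondefapprox_new}, kill the cross terms via \eqref{eqn:gradientnoisemean} after conditioning on $\F_{i-1}$, block-vectorize using \eqref{eqn:bvecproduct_vec} and \eqref{eqn:bvecmultiply}, and identify $\Fcal^{*}z_{i-1}$ and $y_i$ through \eqref{eqn:bigFdef} and \eqref{eqn:bigyidef}. The only cosmetic difference is that the paper conditions first and vectorizes second, while you vectorize first; the outcome is the same.

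Where your proposal diverges is the stability step. You argue that $\rho(\Fcal)<1$ by analogy, saying the Part~I mean-square operator is ``structurally identical'' and the argument ``transfers verbatim.'' The paper instead gives a self-contained argument: it introduces Lemma~\ref{lemma:stablebigG}, which shows that $\Gcal=\E[(\bm{\Dcal}_i^{*})^{\T}\bkron\bm{\Dcal}_i]$ is block-diagonal Hermitian with explicitly computable spectral radius \eqref{eqn:bigGpropertyNorms}, and that $\rho(\Gcal)<1$ under condition \eqref{eqn:meansquarestabilitycond3}. It then invokes the structural fact that $\Fcal=\Gcal(\bar{\Acal}\bkron\bar{\Acal}+\Ccal_A)$ with the right factor block left-stochastic, which (by the norm argument of \cite[App.~A]{Zhao12TSP}) yields $\rho(\Fcal)\le\rho(\Gcal)$. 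Your appeal to Part~I is not wrong---Part~I's stability proof does use the same block-stochastic/block-diagonal mechanism and the bounds on $H_k$ are indeed identical to those on $\bm{H}_{k,i-1}$---but it hides precisely the two ingredients (the explicit eigenvalue formula for $\Gcal$ and the $\rho(\Fcal)\le\rho(\Gcal)$ reduction) that make the argument go through. If you flesh out those two points rather than pointing to Part~I, your proof becomes the paper's proof.
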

\begin{IEEEproof}
See Appendix \ref{app:errorcovariancerecursion}.
\end{IEEEproof}

The vector $z_i$ can be used to compute useful error metrics. For example, we can examine any weighted MSE measure for $\ubar{\wt{\bm{w}}}_i'$ by evaluating quantities of the form:
\be
\label{eqn:weighted2norm}
\E\,\|\ubar{\wt{\bm{w}}}_i'\|_\Sigma^2 = \E\,[\Tr(\ubar{\wt{\bm{w}}}_i' \ubar{\wt{\bm{w}}}_i'^* \Sigma)] = z_i^* \cdot \bvec(\Sigma)
\ee
where $\Sigma$ is an arbitrary positive semi-definite weight matrix. To guarantee the convergence of $\E\,\|\ubar{\wt{\bm{w}}}_i'\|_\Sigma^2$ for any weighting matrix $\Sigma$, it is sufficient and necessary to guarantee the convergence of $z_i$. It follows from Theorem \ref{theorem:errorcovariancerecursion} that under Assumption \ref{asm:smallstepsizes}, the spectral radius of the matrix $\Fcal$ in \eqref{eqn:meansquareerrorrecursion} determines the mean-square stability and convergence rate of the asynchronous diffusion strategy \eqref{eqn:atcasync1}--\eqref{eqn:atcasync2}.

Before proceeding we comment on the reason why we choose to use the block vectorization operation $\bvec(\cdot)$ in \eqref{eqn:bigzidef} instead of the traditional vectorization operation $\vecm(\cdot)$. This is because $\bvec(\cdot)$ allows us to track each block of its matrix argument \emph{after} vectorization. By the definition in \eqref{eqn:bvecdef} and the illustration in Fig.~\ref{fig:vectorization}, operation $\bvec(\cdot)$ preserves the locality of every block in the original matrix argument whereas operation $\vecm(\cdot)$ blends different blocks together. Therefore, whenever we need to vectorize a network matrix whose blocks relate to individual agents, it is more natural to use the block vectorization operation $\bvec(\cdot)$; on the other hand, whenever we need to vectorize a matrix that only relates to a single agent, we can use the conventional vectorization operation $\vecm(\cdot)$. A useful property of the conventional vectorization operation $\vecm(\cdot)$ is
\be
\vecm(ABC) = (C^\T \kron A)\cdot \vecm(B)
\ee 
for matrices $\{A,B,C\}$ of compatible sizes. A similar property holds for the $\bvec(\cdot)$ operation:
\be
\bvec(ABC) = (C^\T \bkron A) \cdot \bvec(B)
\ee
for block matrices $\{A,B,C\}$ with appropriate block sizes. In Fig.~\ref{fig:kronecker}, we compare the structures of $A \kron B$ and $A \bkron B$, where $\{A,B\}$ are a pair of block matrices. The observation is that the operation $\kron$ destroys the locality of the blocks from matrix $B$, whereas the operation $\bkron$ preserves the locality of the blocks from both matrices $A$ and $B$.

Using properties of the block operations $\bvec(\cdot)$ and $\bkron$, we can derive from Theorem \ref{theorem:errorcovariancerecursion} a useful relation between the blocks of the network error covariance matrix, $\E \ubar{\wt{\bm{w}}}_i' \ubar{\wt{\bm{w}}}_i'^* $, and the blocks of the vector $z_i$. Let us partition the $4M^2N^2$-dimensional vector $z_i$ as
\be
\label{eqn:ziellblockexpression}
z_i = \col\{ z_i^{(1)}, \dots, z_i^{(N)} \}, \qquad
z_i^{(\ell)} \defeq \col\{ z_i^{(\ell,1)}, \dots, z_i^{(\ell,N)} \}   
\ee
where $z_i^{(\ell)}$ is the $\ell$-th sub-vector of $z_i$ with dimension $4M^2N$ and $z_i^{(\ell,k)}$ is the $k$-th block of $z_i^{(\ell)}$ with dimension $4M^2$. From \eqref{eqn:bigzidef} and \eqref{eqn:bvecdef}, we find that these vectors have the following useful interpretations for $k, \ell = 1, 2, \dots, N$:
\begin{align}
\label{eqn:siblock}
& z_i = \E [\bvec (\ubar{\wt{\bm{w}}}_i' \ubar{\wt{\bm{w}}}_i'^* ) ] = \col\{ \E\,[ (\ubar{\wt{\bm{w}}}_{\ell,i}'^*)^\T \kron \ubar{\wt{\bm{w}}}_{k,i}' ] \}_{\ell,k=1}^{N} \\
\label{eqn:zilkdef}
& z_i^{(\ell,k)} \defeq \vecm( \E[\ubar{\wt{\bm{w}}}_{k,i}' \ubar{\wt{\bm{w}}}_{\ell,i}'^*] ) = \E[ (\ubar{\wt{\bm{w}}}_{\ell,i}'^*)^\T \kron \ubar{\wt{\bm{w}}}_{k,i}' ]
\end{align}
where $\E \ubar{\wt{\bm{w}}}_{k,i}' \ubar{\wt{\bm{w}}}_{\ell,i}'^* $ is the $(k,\ell)$-th block of $\E \ubar{\wt{\bm{w}}}_i' \ubar{\wt{\bm{w}}}_i'^* $ with size $2M \times 2M$. The block entries of the vector $z_i$ in \eqref{eqn:zilkdef} do not only allow us to recover the covariance matrices of any individual error vectors, $\E \ubar{\wt{\bm{w}}}_{k,i}' \ubar{\wt{\bm{w}}}_{k,i}'^* $, but they also allow us to recover the \emph{cross-covariance} matrices, $\E \ubar{\wt{\bm{w}}}_{k,i}' \ubar{\wt{\bm{w}}}_{\ell,i}'^* $, for \emph{any} pair of agents $\{k,\ell\}$. Therefore, by studying the evolution of the entire covariance vector in \eqref{eqn:meansquareerrorrecursion}, we are able to extract some detailed information about the dynamics of the asynchronous diffusion network, as we shall show in Theorem \ref{theorem:networkcovariance} and Corollary \ref{corollary:clustered} in Section \ref{sec:lowrank}. 

\begin{figure}
\includegraphics[scale=0.6]{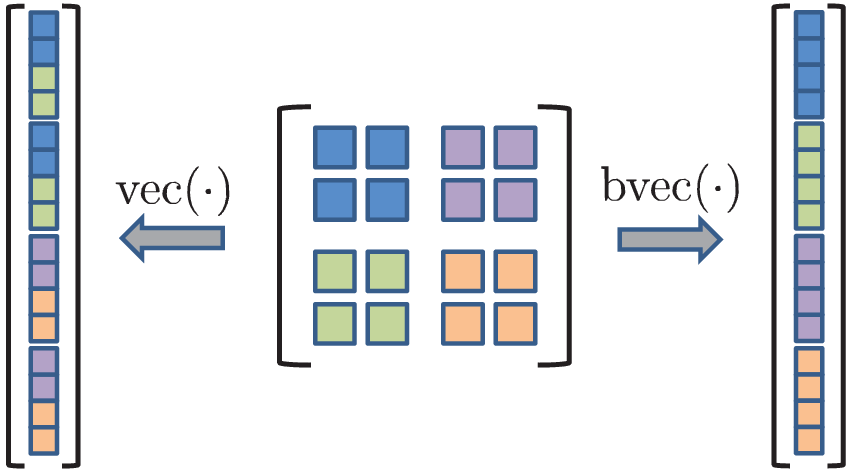}
\centering
\caption{Comparing two vectorization operations: $\vecm(\cdot)$ versus $\bvec(\cdot)$. The operation $\vecm(\cdot)$ destroys the locality of the blocks in the original matrix argument while the operation $\bvec(\cdot)$ preserves it.}
\label{fig:vectorization}
\vspace{-1\baselineskip}
\end{figure}

\begin{figure}
\includegraphics[scale=0.7]{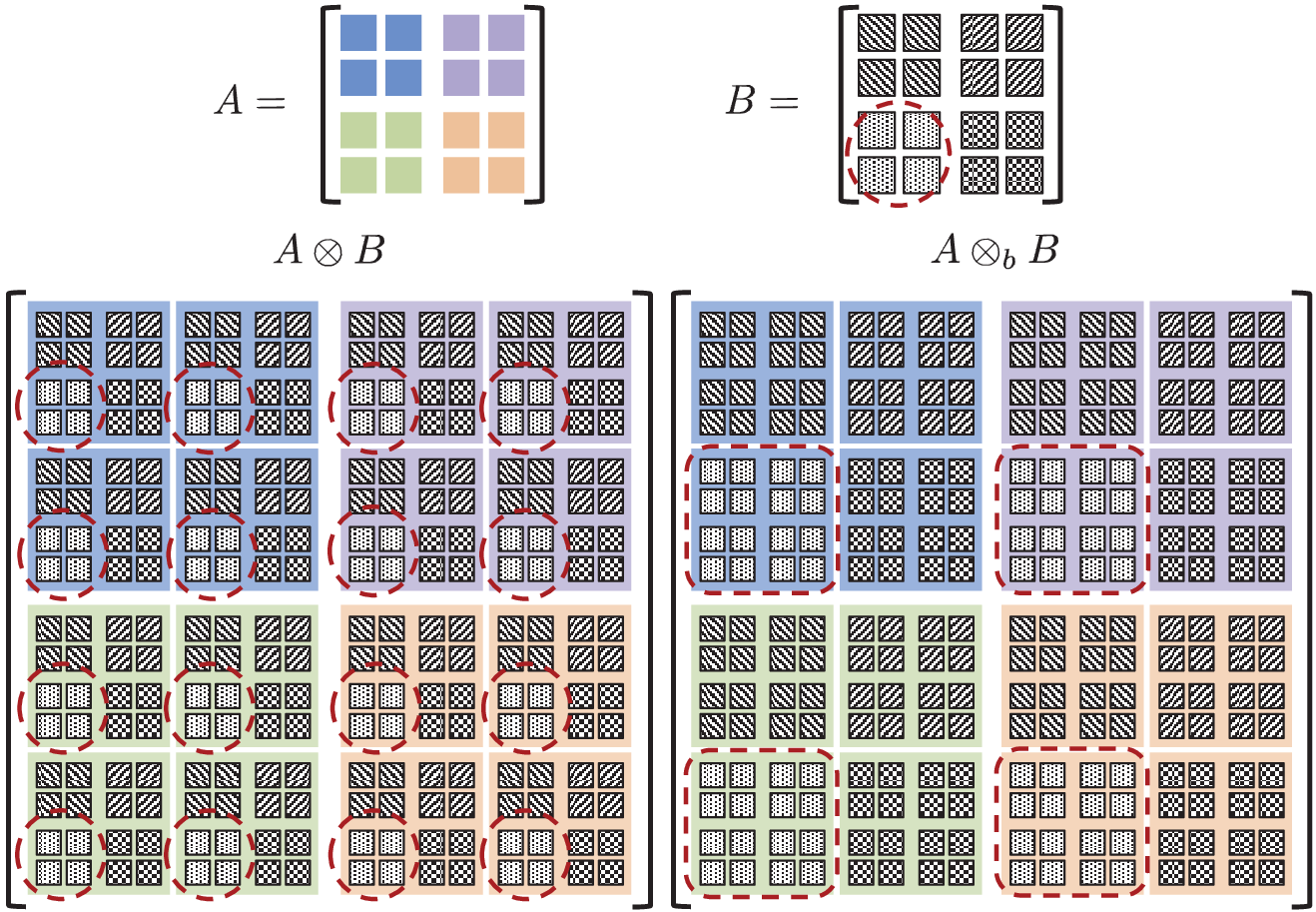}
\centering
\caption{Comparing two Kronecker product operations: $\kron$ versus $\bkron$. The operation $\kron$ destroys the locality of the blocks from matrix $B$ while the operation $\bkron$ preserves the locality of the blocks from both matrices $A$ and $B$.}
\label{fig:kronecker}
\vspace{-1\baselineskip}
\end{figure}

\section{Steady-State Performance}
When $i \rightarrow \infty$, and by the fact that $\Fcal$ is stable, we obtain from \eqref{eqn:meansquareerrorrecursion} that
\begin{align}
\label{eqn:sinftydef}
z_{\infty} & \defeq \lim_{i\rightarrow\infty} z_i = (I_{4M^2N^2} - \Fcal^*)^{-1} \lim_{i\rightarrow\infty} y_i \nn \\
& = (I_{4M^2N^2} - \Fcal^*)^{-1} (\bar{\Acal}^\T \bkron \bar{\Acal}^\T + \Ccal_A^\T) (\bar{\Mcal} \bkron \bar{\Mcal} + \Ccal_M) \lim_{i\rightarrow\infty} \bvec( \E \Rcal_i(\bm{w}_{i-1}) )
\end{align}
where we also used \eqref{eqn:bigridef} and \eqref{eqn:bigyidef}. Now note that
\begin{align}
\label{eqn:bounddiffRi}
\| \Rcal_i(\one_N \kron w^o) - \E \Rcal_i(\bm{w}_{i-1}) \| 
& \stackrel{(a)}{\le} \E \| \Rcal_i(\one_N \kron w^o) - \Rcal_i(\bm{w}_{i-1}) \| \nn \\
& \stackrel{(b)}{\le} \kappa_v \cdot \E \, \| \one_N \kron w^o - \bm{w}_{i-1} \|^{\gamma_v} \nn \\
& = \kappa_v \cdot \E \, [ \| \wt{\bm{w}}_{i-1} \|^4 ]^{\gamma_v/4} \nn \\
& \stackrel{(c)}{\le} \kappa_v \cdot [ \E \, \| \wt{\bm{w}}_{i-1} \|^4 ]^{\gamma_v/4}
\end{align}
where step (a) is by Jensen's inequality; step (b) is by \eqref{eqn:LipschitzRvi} in Assumption \ref{asm:gradientnoisestronger}; and step (c) is by Jensen's inequality and the fact that $| \cdot |^{\gamma_v/4}$ is concave due to $0 < \gamma_v/4 \le 1$. Now, from Theorem \ref{I-theorem:4thmoments} in Part I \cite{Zhao13TSPasync1}, we know that $\limsup_{i\rightarrow\infty} \E \, \| \wt{\bm{w}}_{i-1} \|^4 \le O(\nu^2)$ under Assumption \ref{asm:smallstepsizes}. Therefore, we obtain from \eqref{eqn:bounddiffRi} that
\be
\label{eqn:bounddiffRi1}
\limsup_{i\rightarrow\infty} \| \Rcal_i(\one_N \kron w^o) - \E \Rcal_i(\bm{w}_{i-1}) \| \le O(\nu^{\gamma_v/2})
\ee
According to \eqref{eqn:bounddiffRi1}, we can replace $\E \Rcal_i(\bm{w}_{i-1})$ in \eqref{eqn:sinftydef} by $\Rcal_i(\one_N \kron w^o)$ with an error in the order of $\nu^{\gamma_v/2}$. Let
\begin{align}
\label{eqn:bigzdef}
z \defeq (I_{4M^2N^2} - \Fcal^*)^{-1} (\bar{\Acal}^\T \bkron \bar{\Acal}^\T + \Ccal_A^\T) (\bar{\Mcal} \bkron \bar{\Mcal} + \Ccal_M) \bvec( \Rcal ) 
\end{align}
From \eqref{I-eqn:boundmumoments2} in Part I \cite{Zhao13TSPasync1}, we know that the second-order moments of $\{ \bm{\mu}_k(i) \}$ are in the order of $\nu^2$. Hence, by \eqref{I-eqn:bigMidef} from Part I \cite{Zhao13TSPasync1}, \eqref{eqn:bigMmeandef}, and \eqref{eqn:bigCMdef}, it is easy to verify that 
\be
\label{eqn:boundMMCM}
\| \bar{\Mcal} \bkron \bar{\Mcal} + \Ccal_M \| = O(\nu^2)
\ee
Using \eqref{eqn:bigzdef}, \eqref{eqn:boundMMCM}, and the fact that $\| (I_{4M^2N^2} - \Fcal^*)^{-1} \| = O(\nu^{-1})$ from Lemma \ref{lemma:lowrank} further ahead, we conclude that
\be
\label{eqn:bigzorder}
\| z \| = O(\nu)
\ee
Then, by using \eqref{eqn:bigRdef} and \eqref{eqn:bounddiffRi1}--\eqref{eqn:bigzorder}, we obtain from \eqref{eqn:sinftydef} that 
\be
\label{eqn:zinftyandz}
z_{\infty} = z + O(\nu^{1 + \gamma_v/2}), \qquad \| z_{\infty} \| = O(\nu)
\ee
Define the steady-state average network MSD by
\be
\label{eqn:networkMSDdef}
\MSD^{\network} \defeq \lim_{i\rightarrow\infty} \frac{1}{N} \sum_{k=1}^{N} \E\,\|\wt{\bm{w}}_{k,i}\|^2
\ee
and the steady-state individual MSD for agent $k$ by
\be
\label{eqn:individualMSDdef}
\MSD_k \defeq \lim_{i\rightarrow\infty} \E\,\|\wt{\bm{w}}_{k,i}\|^2
\ee

\begin{theorem}[Steady-state MSD]
\label{theorem:steadystateMSD}
It holds that
\begin{align}
\label{eqn:networkMSDexpression}
\MSD^{\network} & = \frac{1}{2N} z^* \bvec(I_{2MN}) + O(\nu^{1 + \gamma_o}) \\
\label{eqn:individualMSDexpression}
\MSD_k & = \frac{1}{2} z^* \bvec( E_{kk} \kron I_{2M} ) + O(\nu^{1 + \gamma_o })
\end{align}
where $z$ is given by \eqref{eqn:bigzdef}, 
\be
\label{eqn:gammaodef}
\gamma_o \defeq \frac{1}{2} \min\{ 1, \gamma_v \}
\ee
and $E_{kk}$ is the $N\times N$ basis matrix that only has one non-zero element, which is equal to $1$, at the $(k,k)$-th entry.
\end{theorem}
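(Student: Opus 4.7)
The strategy is to route both MSD quantities through the block-vectorized covariance $z_i$ of the long-term error $\ubar{\wt{\bm{w}}}_i'$ defined in \eqref{eqn:bigzidef}, take $i\to\infty$ to obtain $z_\infty$, and then replace $z_\infty$ by $z$ using the identity $z_\infty = z + O(\nu^{1+\gamma_v/2})$ from \eqref{eqn:zinftyandz}. The final residual will be the slower-decaying of the several $O$-terms that enter along the way, which is precisely what fixes the choice $\gamma_o = \tfrac{1}{2}\min\{1,\gamma_v\}$.

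The first bridge is from the original error $\ubar{\wt{\bm{w}}}_{k,i}$ to the long-term surrogate $\ubar{\wt{\bm{w}}}_{k,i}'$. Setting $\bm{e}_{k,i} \defeq \ubar{\wt{\bm{w}}}_{k,i} - \ubar{\wt{\bm{w}}}_{k,i}'$ and expanding
\begin{equation*}
\E \|\ubar{\wt{\bm{w}}}_{k,i}\|^2 = \E\|\ubar{\wt{\bm{w}}}_{k,i}'\|^2 + 2\Re\,\E\,\langle \ubar{\wt{\bm{w}}}_{k,i}', \bm{e}_{k,i}\rangle + \E\|\bm{e}_{k,i}\|^2,
\end{equation*}
I would apply Theorem \ref{theorem:gapmeansquare} to get $\limsup_i \E\|\bm{e}_{k,i}\|^2 \le O(\nu^2)$ and combine it with the Part-I bound $\limsup_i \E\|\ubar{\wt{\bm{w}}}_{k,i}'\|^2 = O(\nu)$ (inherited from $\limsup_i \E\|\wt{\bm{w}}_{i-1}\|^4 \le O(\nu^2)$ via Jensen and the identity $\|\ubar{w}\|^2 = 2\|w\|^2$). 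Cauchy--Schwarz then bounds the cross term by $O(\nu^{3/2})$, yielding $|\,\E\|\wt{\bm{w}}_{k,i}\|^2 - \E\|\wt{\bm{w}}_{k,i}'\|^2\,| \le O(\nu^{3/2})$ in the limsup after the factor of $\tfrac{1}{2}$ from the $\ubar{\mathbb T}$ transform. The second bridge, from $\E\|\ubar{\wt{\bm{w}}}_{k,i}'\|^2$ to $z_i$, uses the weighted-norm identity \eqref{eqn:weighted2norm} with $\Sigma = E_{kk}\kron I_{2M}$: this $\Sigma$ selects the $k$-th $2M\times 2M$ diagonal block of $\ubar{\wt{\bm{w}}}_i'(\ubar{\wt{\bm{w}}}_i')^*$, so $\E\|\ubar{\wt{\bm{w}}}_{k,i}'\|^2 = z_i^*\bvec(E_{kk}\kron I_{2M})$, whence $\E\|\wt{\bm{w}}_{k,i}'\|^2 = \tfrac{1}{2}z_i^*\bvec(E_{kk}\kron I_{2M})$.

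Passing to the limit $i\to\infty$, invoking $z_i\to z_\infty$ (from Theorem \ref{theorem:errorcovariancerecursion}) and $z_\infty = z + O(\nu^{1+\gamma_v/2})$ (with $\|\bvec(E_{kk}\kron I_{2M})\|$ bounded), the two error contributions combine as $O(\nu^{3/2}) + O(\nu^{1+\gamma_v/2}) = O(\nu^{1+\gamma_o})$ with $\gamma_o = \tfrac{1}{2}\min\{1,\gamma_v\}$ as in \eqref{eqn:gammaodef}. This yields the individual MSD expression \eqref{eqn:individualMSDexpression}. Averaging over $k$ and using the identity $\sum_{k=1}^{N} E_{kk}\kron I_{2M} = I_N\kron I_{2M} = I_{2MN}$ together with the linearity of $\bvec$ then produces the network MSD expression \eqref{eqn:networkMSDexpression}.

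The main subtlety lies in the careful bookkeeping of the two independent approximation orders: the $O(\nu^{3/2})$ slack from the original-versus-long-term comparison in Step~1, and the $O(\nu^{1+\gamma_v/2})$ slack from the Lipschitz replacement of $\E\Rcal_i(\bm{w}_{i-1})$ by $\Rcal$ encoded in \eqref{eqn:bounddiffRi1}. Both must be strictly smaller than the leading-order $O(\nu)$ contribution extracted from $z$ for the asymptotic expansion to be meaningful; this is what the small-step-size assumption (Assumption \ref{asm:smallstepsizes}) together with the range $0 < \gamma_v \le 4$ in Assumption \ref{asm:gradientnoisestronger} guarantee, and is precisely what motivates the specific choice \eqref{eqn:gammaodef} of $\gamma_o$ as the minimum of the two exponents.
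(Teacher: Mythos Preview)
Your proposal is correct and follows essentially the same route as the paper: express the long-term weighted MSE via \eqref{eqn:weighted2norm} with $\Sigma = E_{kk}\kron I_{2M}$ (respectively $I_{2MN}$), pass to $z_\infty$ and then to $z$ via \eqref{eqn:zinftyandz} at cost $O(\nu^{1+\gamma_v/2})$, and bridge back to the original recursion using Theorem~\ref{theorem:gapmeansquare} together with Cauchy--Schwarz on the cross term to pick up $O(\nu^{3/2})$. One small point of order: your justification of $\limsup_i \E\|\ubar{\wt{\bm{w}}}_{k,i}'\|^2 = O(\nu)$ appeals to the Part~I fourth-moment bound on the \emph{unprimed} error, which concerns $\wt{\bm{w}}_{k,i}$ rather than $\wt{\bm{w}}_{k,i}'$; the paper instead reads this bound off directly from $\|z_\infty\| = O(\nu)$ in \eqref{eqn:zinftyandz}, which is cleaner and avoids the extra triangle-inequality detour you would otherwise need.
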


\begin{IEEEproof}
From \eqref{eqn:weighted2norm} by selecting $\Sigma = I_{2MN}$, and also using \eqref{eqn:sinftydef} and \eqref{eqn:zinftyandz}, we get
\begin{align}
\label{eqn:networkMSDproof}
\lim_{i\rightarrow\infty} \E \|\ubar{\wt{\bm{w}}}_i'\|^2 & = z_\infty^* \bvec(I_{2MN}) \nn \\
& = z^* \bvec(I_{2MN}) + O(\nu^{1 + \gamma_v/2}) \nn \\
& = O(\nu)
\end{align}
Likewise, by selecting $\Sigma = E_{kk} \kron I_{2M}$, we get
\begin{align}
\label{eqn:individualMSDproof}
\lim_{i\rightarrow\infty} \E\|\ubar{\wt{\bm{w}}}_i'\|_{E_{kk} \kron I_{2M}}^2 & = z_\infty^* \bvec( E_{kk} \kron I_{2M} ) \nn \\
& = z^* \bvec( E_{kk} \kron I_{2M} ) + O(\nu^{1 + \gamma_v/2}) \nn \\
& = O(\nu)
\end{align}
Note further that
\be
\label{eqn:networkMSDproof2}
\E \|\ubar{\wt{\bm{w}}}_i\|^2 = \E \| \ubar{\wt{\bm{w}}}_i' \|^2 + \E \|\ubar{\wt{\bm{w}}}_i - \ubar{\wt{\bm{w}}}_i'\|^2 + 2 \Re \E [ \ubar{\wt{\bm{w}}}_i'^* (\ubar{\wt{\bm{w}}}_i - \ubar{\wt{\bm{w}}}_i') ]
\ee
Using the Cauchy-Schwartz inequality, it can be verified that
\be
\label{eqn:boundRegap}
\left| \Re \E [ \ubar{\wt{\bm{w}}}_i'^* (\ubar{\wt{\bm{w}}}_i - \ubar{\wt{\bm{w}}}_i') ] \right| \le \sqrt{\E \| \ubar{\wt{\bm{w}}}_i' \|^2 \cdot \E \|\ubar{\wt{\bm{w}}}_i - \ubar{\wt{\bm{w}}}_i'\|^2 }
\ee
From Theorem \ref{theorem:gapmeansquare}, we have 
\be
\label{eqn:meansquaregap_network}
\lim_{i\rightarrow\infty} \E \| \ubar{\wt{\bm{w}}}_i - \ubar{\wt{\bm{w}}}_i' \|^2 \le O(\nu^2)
\ee
Substituting \eqref{eqn:networkMSDproof} and \eqref{eqn:meansquaregap_network} into \eqref{eqn:networkMSDproof2}, and using \eqref{eqn:boundRegap}, we get
\begin{align}
\label{eqn:networkMSDproof3}
\lim_{i\rightarrow\infty} \E \|\ubar{\wt{\bm{w}}}_i\|^2 & = \lim_{i\rightarrow\infty} \E \| \ubar{\wt{\bm{w}}}_i' \|^2 + O(\nu^2) + 2 \sqrt{O(\nu) \cdot O(\nu^2)} \nn \\
& = \lim_{i\rightarrow\infty} \E \| \ubar{\wt{\bm{w}}}_i' \|^2 + O(\nu^{3/2})
\end{align}
Results \eqref{eqn:networkMSDexpression} and \eqref{eqn:individualMSDexpression} follow from \eqref{eqn:networkMSDproof}, \eqref{eqn:individualMSDproof}, and \eqref{eqn:networkMSDproof3}.
\end{IEEEproof}

Result \eqref{eqn:networkMSDexpression} generalizes its counterpart (276) from \cite{Sayed13Chapter} for the synchronous diffusion strategy. Since expressions \eqref{eqn:networkMSDproof} and \eqref{eqn:individualMSDproof} are both related to the vector $z$ in \eqref{eqn:bigzdef}, let us examine $z$ more closely to reveal the implications of asynchronous adaptation and learning on performance.  Theorem \ref{theorem:networkcovariance} in the following section will lead to powerful alternative expressions for \eqref{eqn:networkMSDproof} and \eqref{eqn:individualMSDproof}. The new expressions will highlight some important properties about the behavior of the asynchronous network in steady-state, such as the behavior that was illustrated earlier in Fig. \ref{fig:normball}. The subsequent analysis relies on a useful low-rank factorization result.

\section{Low-Rank Factorization}
\label{sec:lowrank}
From \eqref{eqn:sinftydef} we see that the structure of $z$ depends on the structure of the matrix $(I_{4M^2N^2}-\Fcal^*)^{-1}$. In the following, we show that by retaining the dominant eigen-space of $(I_{4M^2N^2}-\Fcal^*)^{-1}$, we can obtain a more revealing MSD expression than \eqref{eqn:networkMSDexpression} that is still accurate to the order of $O(\nu^{1+\gamma_o})$.

\subsection{Perron Eigenvectors}
To proceed, we introduce the following condition on the matrix $\bar{A} \kron \bar{A} + C_A$.

\begin{assumption}[{Primitiveness of $\bar{A} \kron \bar{A} + C_A$}]
\label{asm:connected}
The matrix $\bar{A} \kron \bar{A} + C_A$ is assumed to be primitive \cite[p.~45]{BermanPF}, namely, that there exists a finite positive integer $j$ such that all entries of $(\bar{A} \kron \bar{A} + C_A)^j$ are positive. \hfill \IEEEQED
\end{assumption}

\begin{lemma}[{Primitiveness of $\bar{A}$}]
\label{lemma:primitiveA}
The matrix $\bar{A}$ is primitive if $\bar{A} \kron \bar{A} + C_A$ is primitive.
\end{lemma}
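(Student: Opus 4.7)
The plan is to express the second-order moment matrix $\bar{A}\kron\bar{A}+C_A$ in terms of $\E[\bm{A}_i\kron\bm{A}_i]$, compare its sparsity pattern to that of $\bar{A}\kron\bar{A}$, and then pull primitivity back to $\bar{A}$ through the mixed-product identity for the Kronecker product.

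First, I would expand $C_A$ using bilinearity of the Kronecker product:
\begin{align*}
C_A \;=\; \E[(\bm{A}_i-\bar{A})\kron(\bm{A}_i-\bar{A})] \;=\; \E[\bm{A}_i\kron\bm{A}_i] - \bar{A}\kron\bar{A},
\end{align*}
so that $\bar{A}\kron\bar{A}+C_A=\E[\bm{A}_i\kron\bm{A}_i]$. Since the random combination coefficients $\{\bm{a}_{\ell k}(i)\}$ are nonnegative, both $\bar{A}$ and $\bar{A}\kron\bar{A}+C_A$ have nonnegative entries, and (with the usual Kronecker indexing convention) the $((\ell_1,\ell_2),(k_1,k_2))$ entry of $\bar{A}\kron\bar{A}+C_A$ equals $\E[\bm{a}_{\ell_1 k_1}(i)\,\bm{a}_{\ell_2 k_2}(i)]$.

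Second, I would establish the key support-containment step: if $\E[\bm{a}_{\ell_1 k_1}(i)\,\bm{a}_{\ell_2 k_2}(i)]>0$, then by nonnegativity both factors are simultaneously positive on an event of positive probability, which forces $\bar{A}_{\ell_1 k_1}>0$ and $\bar{A}_{\ell_2 k_2}>0$, i.e.\ the corresponding entry of $\bar{A}\kron\bar{A}$ is positive. Thus the zero-pattern of $\bar{A}\kron\bar{A}$ is included in the zero-pattern of $\bar{A}\kron\bar{A}+C_A$ (equivalently, $\mathrm{supp}(\bar{A}\kron\bar{A}+C_A)\subseteq\mathrm{supp}(\bar{A}\kron\bar{A})$ in the entrywise sense).

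Third, since both matrices are entrywise nonnegative, support containment propagates to matrix powers by a routine directed-graph argument: a positive entry of $(\bar{A}\kron\bar{A}+C_A)^j$ corresponds to at least one length-$j$ walk whose edge weights are all positive, and every such walk has edges that are also positive in $\bar{A}\kron\bar{A}$, contributing a positive term to the same entry of $(\bar{A}\kron\bar{A})^j$. So primitivity of $\bar{A}\kron\bar{A}+C_A$ (Assumption \ref{asm:connected}) implies that there exists a finite $j$ with $(\bar{A}\kron\bar{A})^j$ strictly positive entrywise. Finally, applying the mixed-product property, $(\bar{A}\kron\bar{A})^j=\bar{A}^j\kron\bar{A}^j$, and inspecting entries, strict positivity of $\bar{A}^j\kron\bar{A}^j$ immediately yields strict positivity of every entry of $\bar{A}^j$, establishing that $\bar{A}$ is primitive.

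The only non-routine point is the support-containment step: it is elementary, but it is what genuinely needs the nonnegativity of the $\bm{a}_{\ell k}(i)$'s. Everything else is bookkeeping around the bilinearity of $\kron$ and the identity $(\bar{A}\kron\bar{A})^j=\bar{A}^j\kron\bar{A}^j$.
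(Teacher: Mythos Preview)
Your proposal is correct and follows essentially the same route as the paper: establish that the support of $\bar{A}\kron\bar{A}+C_A=\E[\bm{A}_i\kron\bm{A}_i]$ is contained in that of $\bar{A}\kron\bar{A}$, propagate this to powers, and finish with $(\bar{A}\kron\bar{A})^j=\bar{A}^j\kron\bar{A}^j$. The only stylistic difference is in the support-containment step: the paper formalizes a notion of ``denser'' matrices, proves three auxiliary lemmas (denser products, denser Kronecker products, sums not denser), and invokes a result from Part~I that $\bar{A}$ is denser than every realization $\bm{A}_i(\omega)$; your direct entrywise argument (if $\E[\bm{a}_{\ell_1k_1}\bm{a}_{\ell_2k_2}]>0$ then each factor has positive mean) is a more compact, self-contained way to reach the same conclusion without the extra machinery.
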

\begin{IEEEproof} 
See Appendix \ref{app:primitive}.
\end{IEEEproof}

Assumption \ref{asm:connected} is guaranteed if the directed graph (digraph) associated with the matrix $\bar{A}  \kron  \bar{A}  +  C_A$ is strongly-connected with as least one self-loop \cite[pp.~30,34]{BermanPF}. The digraph associated with $\bar{A}  \kron  \bar{A}  +  C_A$ is the union of all possible digraphs associated with the realizations of $\bm{A}_i  \kron  \bm{A}_i$ \cite[p.~29]{Bondy08}. Each possible digraph associated with $\bm{A}_i  \kron  \bm{A}_i$ is a Kronecker graph of order 2 generated by the initiator $\bm{A}_i$ \cite{Leskovec10JMLR}. Therefore, Assumption \ref{asm:connected} amounts to an assumption that the \emph{union} of all possible digraphs associated with the realizations of $\bm{A}_i  \kron  \bm{A}_i$ is strongly-connected with at least one self-loop. As illustrated in Fig.~\ref{fig:primitive}, this condition still allows the digraphs associated with $\bm{A}_i$ to be weakly-connected with or without self-loops or even to be disconnected. Important cases such as random gossip \cite{Boyd06TIT, Aysal09Allerton, Aysal09TSP, Jakovetic11TSP, Kar11TSP} or probabilistic diffusion \cite{Lopes08ICASSP, Takahashi10ICASSP} are therefore not ruled out by this condition. It can be verified that the converse of Lemma \ref{lemma:primitiveA} is generally not true: when the digraph associated with $\bar{A}$ is primitive, the digraph associated with $\bar{A} \kron \bar{A} + C_A$ does not even need to be connected. 

By Lemma \ref{I-lemma:leftstochastic} from Part I \cite{Zhao13TSPasync1} and the above Assumption \ref{asm:connected}, the matrix $\bar{A} \kron \bar{A} + C_A$ is left-stochastic and primitive. It follows from the Perron-Frobenius theorem \cite{BermanPF} \cite{Pillai05SPM} that this matrix has a unique eigenvalue at one and a pair of eigenvectors $\{\one_{N^2}, p\}$ with \emph{positive} entries satisfying:
\be
\label{eqn:pdef}
(\bar{A} \kron \bar{A} + C_A) \cdot p = p, \qquad  p^\T \cdot \one_{N^2} = 1
\ee
Likewise, the matrix $\bar{A}$ is also left-stochastic and primitive. It has a unique eigenvalue at one and a pair of eigenvectors $\{\one_N, \bar{p}\}$ with positive entries satisfying:
\be
\label{eqn:barpdef}
\bar{A} \cdot \bar{p} = \bar{p}, \qquad  \bar{p}^\T \cdot \one_{N} = 1
\ee
All other eigenvalues of $\bar{A} \kron \bar{A} + C_A$ and $\bar{A}$ are inside the unit circle. To simplify the presentation, we shall use the name ``Perron eigenvector'' to refer to the unique eigenvectors $p$ and $\bar{p}$ in the sequel. Since the vector $p$ is of dimension $N^2\times1$, we partition it into $N$ sub-vectors of dimension $N \times 1$ each:
\be
p = \col\{ p_1, p_2, \dots, p_N \}
\ee
where $p_k$ denotes the $k$-th sub-vector. We further define an $N \times N$ matrix $P_p$ whose columns are the sub-vectors $\{p_k\}$:
\be
\label{eqn:Ppdef}
P_p \defeq \unvecm(p) = \begin{bmatrix}
p_1 & p_2 & \dots & p_N
\end{bmatrix}
\ee
We use $p_{\ell,k}$ to denote the $(\ell,k)$-th element of matrix $P_p$, which is equal to the $\ell$-th element of $p_k$. 

\begin{figure}
\includegraphics[scale=0.5]{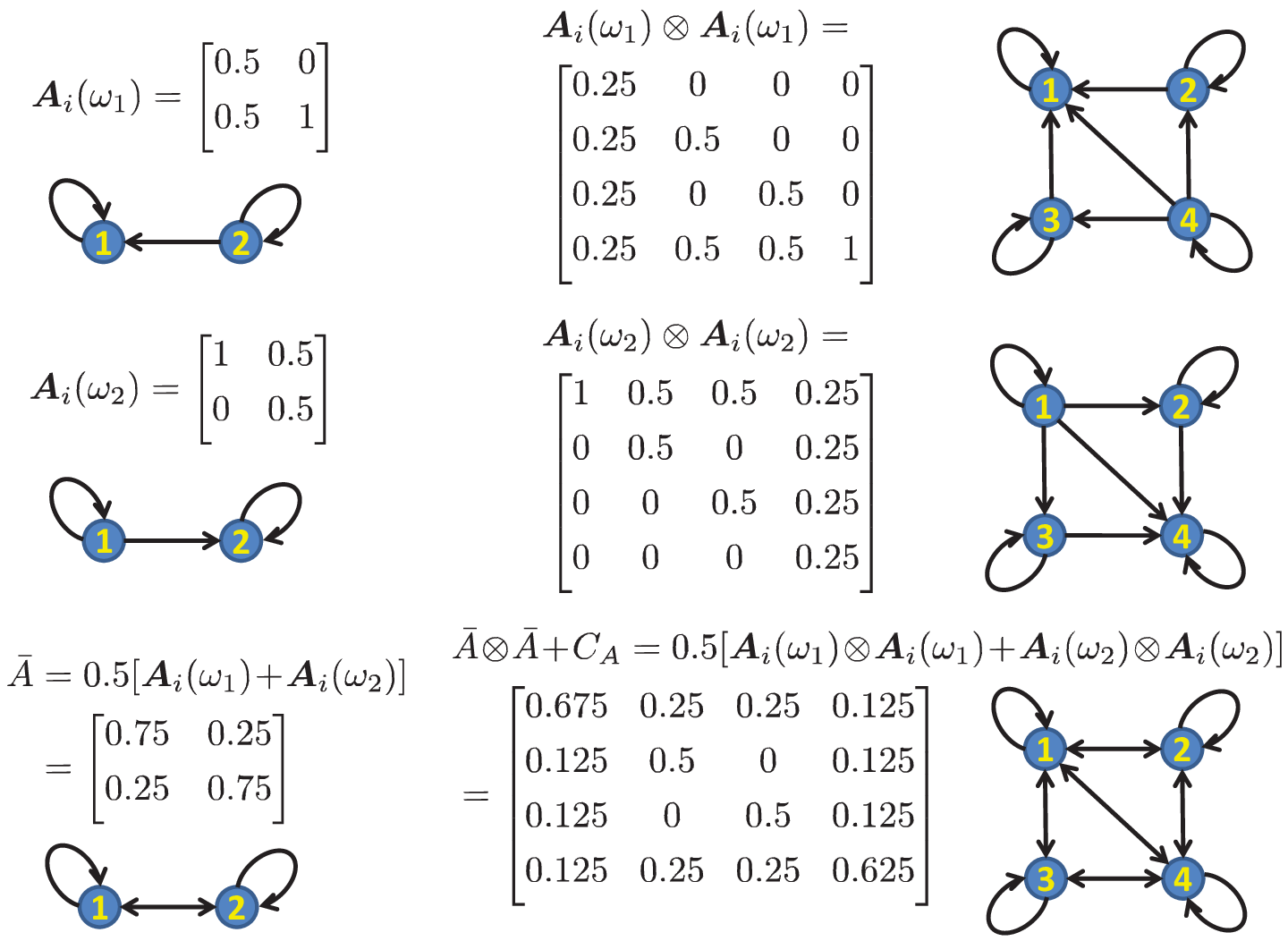}
\centering
\caption{An illustration of the digraph associated with $\E\,(\bm{A}_i \kron \bm{A}_i | \bm{w}_{i-1}) = \bar{A} \kron \bar{A} + C_A$, where $\bm{A}_i$ has two equally probable realizations $\{\bm{A}_i(\omega_1), \bm{A}_i(\omega_2)\}$. It can be observed that \emph{neither} of the digraphs associated with $\bm{A}_i(\omega_j) \otimes \bm{A}_i(\omega_j)$, $j=1,2$, is strongly-connected due to the existence of the source and sink nodes, where information can only flow in \emph{one} direction through the network. However, the digraph associated with $\E\,(\bm{A}_i \kron \bm{A}_i | \bm{w}_{i-1})$, which is the union of the first two digraphs, is strongly-connected, where information can flow in \emph{any} direction through the network.}
\label{fig:primitive}
\vspace{-1\baselineskip}
\end{figure}

\begin{lemma}[Properties of $P_p$]
\label{lemma:Ppandp}
The matrix $P_p$ in \eqref{eqn:Ppdef} is symmetric positive semi-definite and it satisfies $P_p \one_N = \bar{p}$, where the $\bar{p}$ is the Perron eigenvector in \eqref{eqn:barpdef}.
\end{lemma}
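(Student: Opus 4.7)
The plan is to reformulate the Perron eigenvector equation \eqref{eqn:pdef} as a matrix fixed-point equation on $P_p$, and then read off symmetry, positive semi-definiteness, and the row-sum identity as structural consequences. Using the standard identity $(B \kron A)\vecm(X) = \vecm(A X B^\T)$ together with $\bar{A}\kron\bar{A} + C_A = \E[\bm{A}_i \kron \bm{A}_i]$, the equation $(\bar{A}\kron\bar{A}+C_A)\,p = p$ with $p = \vecm(P_p)$ is equivalent to $\Phi(P_p) = P_p$, where $\Phi(X) \defeq \E[\bm{A}_i X \bm{A}_i^\T]$ is a positive linear map on real $N\times N$ matrices. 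By Assumption \ref{asm:connected} and Lemma \ref{lemma:primitiveA}, $\Phi$ inherits from $\bar{A}\kron\bar{A}+C_A$ a simple Perron eigenvalue at $1$ whose eigenvector direction is unique.

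For symmetry, I would transpose the fixed-point equation to obtain $\Phi(P_p^\T) = P_p^\T$, so $\vecm(P_p^\T)$ is another Perron eigenvector of $\bar{A}\kron\bar{A}+C_A$. Uniqueness forces $P_p^\T = c P_p$ for some scalar $c$; applying transpose once more gives $c^2 = 1$, and positivity of the entries of both $P_p$ and $P_p^\T$ (from Perron-Frobenius) forces $c = +1$.

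For positive semi-definiteness, I would exploit that $\Phi$ sends the closed cone of symmetric positive semi-definite $N\times N$ matrices into itself. Starting from $I_N$, every iterate $\Phi^n(I_N)$ is therefore symmetric PSD. Because $\Phi$ has a simple Perron eigenvalue at $1$ with left eigenvector $\one_{N^2}$ and right eigenvector $p$, classical Perron-Frobenius convergence gives $\Phi^n(I_N) \to (\one_{N^2}^\T\,\vecm(I_N))\,P_p = N P_p$. Closedness of the PSD cone then yields $P_p \succeq 0$.

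For the row-sum identity, I would right-multiply $\E[\bm{A}_i P_p \bm{A}_i^\T] = P_p$ by $\one_N$ and use the left-stochasticity of $\bm{A}_i$, so that $\bm{A}_i^\T \one_N = \one_N$, to get $\bar{A}(P_p \one_N) = P_p \one_N$. Thus $P_p \one_N$ is a right eigenvector of $\bar{A}$ at eigenvalue $1$, which by Lemma \ref{lemma:primitiveA} and Perron-Frobenius must equal $c\,\bar{p}$ for some positive scalar $c$; the scalar is then pinned down by $\one_N^\T P_p \one_N = p^\T \one_{N^2} = 1 = \bar{p}^\T \one_N$, giving $c = 1$. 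The main obstacle in this plan is the PSD claim: symmetry and the row-sum identity follow almost immediately from uniqueness of the Perron direction and left-stochasticity, but strict positivity of the entries of $P_p$ alone does not imply $P_p \succeq 0$, so the proof genuinely requires the cone-preservation property of $\Phi$ combined with a power iteration argument inside the PSD cone.
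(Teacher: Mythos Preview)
Your proposal is correct and follows essentially the same route as the paper: symmetry via uniqueness of the Perron direction (the paper phrases this through the vec-permutation matrix $\Pi$, but the content is identical), positive semi-definiteness via the Perron--Frobenius limit $(\bar{A}\kron\bar{A}+C_A)^n \to p\,\one_{N^2}^\T$ combined with a nonnegativity structure (the paper writes $x^\T P_p x$ directly as $\tfrac{1}{N^2}\lim_i \E[(x^\T\bm{\Phi}_i\one_N)^2]$ using i.i.d.\ copies $\bm{A}_j'$, which is the scalar version of your cone-preserving power iteration from $I_N$), and the row-sum identity via left-stochasticity of $\bm{A}_i$ and uniqueness of $\bar{p}$. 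Your normalization check $\one_N^\T P_p \one_N = 1$ to pin down the scalar is a detail the paper leaves implicit.
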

\begin{IEEEproof}
See Appendix \ref{app:Ppandp}.
\end{IEEEproof}

From Lemma \ref{lemma:Ppandp}, we get the following useful relations:
\be
\label{eqn:plkandpklandbarpk}
p_{\ell,k} = p_{k,\ell}, \;\; \sum_{k=1}^N p_{\ell,k} = \bar{p}_\ell, \;\; \sum_{\ell=1}^N p_{\ell,k} = \bar{p}_k
\ee

\subsection{Low-Rank Approximation}
We return to our earlier objective of seeking a low-rank factorization for the matrix $(I_{4M^2N^2}-\Fcal^*)^{-1}$. For this purpose, we first introduce the $4M^2\times 4M^2$ Hermitian matrix:
\be
\label{eqn:Fdef}
F \defeq \sum_{k = 1}^{N}\sum_{\ell = 1}^{N} p_{\ell,k} [ \bar{D}_\ell^\T\kron \bar{D}_k + c_{\mu,\ell,k} (H_\ell^\T \kron H_k)]
\ee
where $\bar{D}_k$ is given by \eqref{eqn:Dkmeandef}.

\begin{lemma}[{Spectral radius of $F$}]
\label{lemma:spectralF}
The matrix $F$ in \eqref{eqn:Fdef} is stable if condition \eqref{eqn:meansquarestabilitycond3} is satisfied. Moreover,
\be
\label{eqn:rhoFapprox}
\rho(F) = [1 - \lambda_{\min}(H)]^2 + O(\nu^2) = 1 - O(\nu)
\ee
where
\be
\label{eqn:Hdef}
H \defeq \sum_{k = 1}^{N} \bar{p}_{k} \bar{\mu}_k H_k
\ee
It can be verified that $\| H \| = O(\nu)$ and the $O(\nu^2)$ term in \eqref{eqn:rhoFapprox} is negligible by Assumption \ref{asm:smallstepsizes}.
\end{lemma}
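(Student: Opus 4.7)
The plan is to analyze $F$ via a Taylor expansion in the step-size scale $\nu$, identify the leading spectrum using a Kronecker-sum identity, and then transfer the estimate back to $F$ by an eigenvalue perturbation argument. First I would substitute $\bar{D}_k = I_{2M} - \bar{\mu}_k H_k$ into \eqref{eqn:Fdef} and expand
\begin{align*}
\bar{D}_\ell^\T \kron \bar{D}_k = I_{4M^2} - \bar{\mu}_k (I_{2M} \kron H_k) - \bar{\mu}_\ell (H_\ell^\T \kron I_{2M}) + \bar{\mu}_\ell \bar{\mu}_k (H_\ell^\T \kron H_k).
\end{align*}
Summing against the Perron weights $p_{\ell,k}$ and invoking the identities $\sum_{k,\ell} p_{\ell,k} = 1$ and $\sum_\ell p_{\ell,k} = \bar{p}_k$ from Lemma \ref{lemma:Ppandp}, the zeroth-order term collapses to $I_{4M^2}$ and the linear-order terms collapse to $-I_{2M} \kron H - H^\T \kron I_{2M}$ with $H$ defined in \eqref{eqn:Hdef}. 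The leftover quadratic term merges with the covariance piece in \eqref{eqn:Fdef} into $\sum_{k,\ell} p_{\ell,k}\, \E[\bm{\mu}_\ell(i)\bm{\mu}_k(i)]\, (H_\ell^\T \kron H_k)$, which is $O(\nu^2)$ by the step-size moment bounds from Part I and the uniform boundedness of each $H_k$. This yields the clean expansion $F = I_{4M^2} - I_{2M} \kron H - H^\T \kron I_{2M} + O(\nu^2)$.

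With this decomposition in hand, I would diagonalize the leading Kronecker sum. Because each $H_k$ is Hermitian positive definite, so is the weighted combination $H$; moreover, for complex Hermitian $H$ we have $H^\T = \bar{H}$, which is itself Hermitian with the same (real) spectrum as $H$. Consequently $I_{2M} \kron H$ and $H^\T \kron I_{2M}$ are simultaneously diagonalizable through $\bar{U} \kron U$ where $H = U \Lambda U^*$, and by the classical Kronecker-sum identity the spectrum of their sum consists of the pairwise sums $\lambda_i(H) + \lambda_j(H)$. The eigenvalues of the leading term of $F$ are therefore $1 - \lambda_i(H) - \lambda_j(H)$, all of which lie in $(0,1)$ once $\|H\| < 1/2$, a regime guaranteed under Assumption \ref{asm:smallstepsizes}. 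The largest, and hence the spectral radius of the leading matrix, is $1 - 2\lambda_{\min}(H)$. A Bauer-Fike style bound, which is tight because the leading matrix is normal, then gives $\rho(F) = 1 - 2\lambda_{\min}(H) + O(\nu^2)$. Writing $[1 - \lambda_{\min}(H)]^2 = 1 - 2\lambda_{\min}(H) + \lambda_{\min}(H)^2$ and absorbing the quadratic piece (which is $O(\nu^2)$ since $\lambda_{\min}(H) = O(\nu)$) yields \eqref{eqn:rhoFapprox}. Stability then follows because $2\lambda_{\min}(H) > 0$ dominates the $O(\nu^2)$ remainder for small $\nu$.

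The main obstacle is making the perturbation step quantitatively rigorous: $F$ itself need not be normal, so a direct Bauer-Fike application to $F$ would introduce an uncontrollable condition number. The cleanest route is to apply the perturbation bound anchored at the unperturbed leading matrix, which is Hermitian and therefore normal with condition number one, so that each eigenvalue of $F$ lies within $\|F - L\| = O(\nu^2)$ of some eigenvalue of $L$. A secondary subtlety is that the leading calculation only gives stability for $\nu$ small, while the lemma asserts stability under the weaker condition \eqref{eqn:meansquarestabilitycond3}; this gap can be closed by a direct comparison between $\rho(F)$ and the spectral bound already established for $\rho(\bar{\Bcal})$ in the mean-error analysis, ensuring that every eigenvalue of $F$ remains strictly inside the unit disk whenever the second-moment step-size condition holds.
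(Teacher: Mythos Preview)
Your spectral-radius computation is correct and essentially parallel to the paper's, with one cosmetic difference: the paper anchors at $F' \defeq (I_{2M}-H)^\T \kron (I_{2M}-H)$, whose eigenvalues are the pairwise products $(1-\lambda_i(H))(1-\lambda_j(H))$, so $\rho(F')=[1-\lambda_{\min}(H)]^2$ exactly; it then bounds $\|F-F'\|=O(\nu^2)$ and invokes Wielandt--Hoffman. Your Kronecker-sum anchor $L=I-I\kron H-H^\T\kron I$ differs from $F'$ by $H^\T\kron H=O(\nu^2)$, so the two routes coincide to the stated order. Note also that your worry about $F$ not being normal is misplaced: the paper introduces $F$ as Hermitian (each $\bar{D}_\ell^\T\kron\bar{D}_k$ and $H_\ell^\T\kron H_k$ is Hermitian, and the real weights $p_{\ell,k}$, $c_{\mu,\ell,k}$ are symmetric in $(\ell,k)$), so Bauer--Fike/Wielandt--Hoffman apply directly to $F$ with condition number one.

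The genuine gap is your stability argument under condition \eqref{eqn:meansquarestabilitycond3}. Comparing $\rho(F)$ to $\rho(\bar{\Bcal})$ from the mean-error analysis does not work: $\bar{\Bcal}$ is a $2MN\times 2MN$ first-moment object with no established relation to the $4M^2\times 4M^2$ matrix $F$, and condition \eqref{eqn:meansquarestabilitycond3} is a second-moment condition that does not appear in the $\bar{\Bcal}$ analysis at all. The paper's argument uses a different structural fact: by \eqref{eqn:Glkdef}, $F=\sum_{k,\ell} p_{\ell,k}\,G_{\ell,k}$ is a convex combination (since $p_{\ell,k}>0$ and $\sum p_{\ell,k}=1$) of the diagonal blocks of the Hermitian block-diagonal matrix $\Gcal$ from Lemma \ref{lemma:stablebigG}. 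Hence $\rho(F)=\|F\|\le \max_{k,\ell}\|G_{\ell,k}\|=\rho(\Gcal)$, and Lemma \ref{lemma:stablebigG} gives $\rho(\Gcal)<1$ precisely under \eqref{eqn:meansquarestabilitycond3}. Without this convex-combination link to $\Gcal$, your proposal does not establish stability beyond the small-$\nu$ regime.
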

\begin{IEEEproof}
See Appendix \ref{app:spectralF}.
\end{IEEEproof}

\begin{lemma}[{Low-rank approximation}]
\label{lemma:lowrank}
Under Assumptions \ref{asm:smallstepsizes} and \ref{asm:connected}, it holds that
\begin{align}
\label{eqn:lowrankapproxIminusF}
& (I_{4N^2M^2} - \Fcal)^{-1} = (p \one_{N^2}^\T) \kron (I_{4M^2} - F)^{-1} + O(1) \\
\label{eqn:lowrankapproxIminusFdom}
& (p \one_{N^2}^\T) \kron (I_{4M^2} - F)^{-1} = O(\nu^{-1})
\end{align}
Under Assumption \ref{asm:smallstepsizes} where $\nu \ll 1$, the term in \eqref{eqn:lowrankapproxIminusFdom} dominates the $O(1)$ term in \eqref{eqn:lowrankapproxIminusF}. Moreover,
\be
\label{eqn:rhobigFapprox}
\rho(\Fcal) = \rho(F) + O(\nu^{1+1/N^2})
\ee
where the $\rho(F)$ from \eqref{eqn:rhoFapprox} dominates the $O(\nu)$ term in \eqref{eqn:rhobigFapprox}.
\end{lemma}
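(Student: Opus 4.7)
The plan is to couple the Perron–Frobenius spectral theory of $\bar{A}\kron\bar{A}+C_A$ with a Woodbury-type low-rank inversion. Since $\bar{\Acal}=\bar{A}\kron I_{2M}$ and $\Ccal_A=C_A\kron I_{4M^2}$, unpacking the $\bkron$ definition gives $\bar{\Acal}\bkron\bar{\Acal}+\Ccal_A=(\bar{A}\kron\bar{A}+C_A)\kron I_{4M^2}$. Under Assumption \ref{asm:connected} the $N^2\times N^2$ matrix $\bar{A}\kron\bar{A}+C_A$ is primitive and left-stochastic, so Perron–Frobenius yields the spectral splitting $\bar{A}\kron\bar{A}+C_A=p\one_{N^2}^\T+T$ with $\rho(T)<1$ and $\one_{N^2}^\T T=0$, $Tp=0$. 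Setting $V\defeq(p\one_{N^2}^\T)\kron I_{4M^2}$, $\widetilde{T}\defeq T\kron I_{4M^2}$, $P\defeq p\kron I_{4M^2}$, and $Q\defeq\one_{N^2}^\T\kron I_{4M^2}$, we have $V=PQ$, $QP=I_{4M^2}$, $\rho(\widetilde{T})<1$, and the crucial annihilations $Q\widetilde{T}=0$ and $\widetilde{T}P=0$.

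Next, write $\Fcal=\Gcal V+\Gcal\widetilde{T}=\Gcal PQ+\Gcal\widetilde{T}$. Because $\Gcal=I+O(\nu)$ (from $\bar{\Dcal}=I+O(\nu)$ and $\Ccal_D=O(\nu^2)$ by \eqref{eqn:boundMMCM}) and $\rho(\widetilde{T})<1$, the matrix $I-\Gcal\widetilde{T}$ is invertible with $O(1)$ inverse, and the Woodbury identity gives
\[
(I-\Fcal)^{-1}=(I-\Gcal\widetilde{T})^{-1}+(I-\Gcal\widetilde{T})^{-1}\Gcal P\,(I_{4M^2}-K)^{-1}\,Q(I-\Gcal\widetilde{T})^{-1},\quad K\defeq Q(I-\Gcal\widetilde{T})^{-1}\Gcal P.
\]
Expanding $(I-\Gcal\widetilde{T})^{-1}=\sum_{j\ge0}(\Gcal\widetilde{T})^j$ and using $Q\widetilde{T}=0$ and $\widetilde{T}P=0$, every $j\ge 1$ term forces at least two explicit factors of $(\Gcal-I)=O(\nu)$ to bracket the internal $\widetilde{T}$'s, which yields the sharp bound $K=Q\Gcal P+O(\nu^2)$.

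The third step is a direct block-Kronecker computation of $Q\Gcal P$. Using \eqref{eqn:bigGdef}–\eqref{eqn:bigCDdef}, the block-diagonal structure of $\bar{\Dcal}$ and $\Hcal$, and the fact that $p_{\ell,k}=(\unvecm p)_{\ell,k}$ is the $((\ell,k))$-scalar of $p$, one checks that $(\bar{\Dcal}^\T\bkron\bar{\Dcal})P$ has $(\ell,k)$-block $p_{\ell,k}(\bar{D}_\ell^\T\kron\bar{D}_k)$ and $\Ccal_D P$ has $(\ell,k)$-block $c_{\mu,\ell,k}p_{\ell,k}(H_\ell^\T\kron H_k)$. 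Left-multiplication by $Q$ sums these blocks and reproduces exactly the definition \eqref{eqn:Fdef}, so $Q\Gcal P=F$, hence $K=F+O(\nu^2)$. Since $(I-F)^{-1}=O(\nu^{-1})$ by Lemma \ref{lemma:spectralF}, the Neumann expansion $(I-K)^{-1}=(I-F)^{-1}[I-(K-F)(I-F)^{-1}]^{-1}=(I-F)^{-1}+O(1)$ holds. Substituting back into the Woodbury formula, absorbing the $O(\nu)$ corrections from $(I-\Gcal\widetilde{T})^{-1}P=P+O(\nu)$ and $Q(I-\Gcal\widetilde{T})^{-1}=Q+O(\nu)$ against the $O(\nu^{-1})$ size of $(I-F)^{-1}$, and using $P(I-F)^{-1}Q=(p\one_{N^2}^\T)\kron(I-F)^{-1}$ gives \eqref{eqn:lowrankapproxIminusF}; the norm bound \eqref{eqn:lowrankapproxIminusFdom} then follows from $\|(I-F)^{-1}\|=O(\nu^{-1})$.

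For the spectral-radius claim \eqref{eqn:rhobigFapprox}, the Woodbury decomposition shows that the spectrum of $\Fcal$ splits into the $4M^2$ eigenvalues of $K$ (clustered within $O(\nu)$ of $1$) and the eigenvalues of the compression to the complement of $\mathrm{Ran}(V)$, which are close to those of $\widetilde{T}$ and thus bounded strictly away from $1$. For sufficiently small $\nu$ this forces $\rho(\Fcal)=\rho(K)$, and the gap $|\rho(K)-\rho(F)|$ is controlled by Elsner's eigenvalue perturbation inequality combined with $\|K-F\|=O(\nu^2)$, producing the stated $O(\nu^{1+1/N^2})$ bound once the dimension entering Elsner's estimate is tracked through the $N^2$-fold structure of the Perron cluster. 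The main obstacle I anticipate is not the final Elsner step but the second step: maintaining the sharp $O(\nu^2)$ (rather than $O(\nu)$) control on $K-F$, which depends crucially on \emph{both} annihilation identities $Q\widetilde{T}=0$ and $\widetilde{T}P=0$ firing simultaneously in each term of the Neumann expansion, and on the bookkeeping that identifies $Q\Gcal P$ with the weighted sum defining $F$.
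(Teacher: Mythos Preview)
Your argument for the low-rank approximation \eqref{eqn:lowrankapproxIminusF}--\eqref{eqn:lowrankapproxIminusFdom} is correct and takes a genuinely different route from the paper. The paper introduces the full Jordan decomposition $\bar{A}\kron\bar{A}+C_A=PJQ^\T$, lifts it to $\bar{\Acal}\bkron\bar{\Acal}+\Ccal_A$, forms $\Qcal^\T\Fcal\Pcal$, and inverts $I-\Qcal^\T\Fcal\Pcal$ via a $2\times2$ block Schur-complement formula; the key identity $\Qcal_1^\T(I-\Gcal)\Pcal_1=I-F$ emerges from the $(1,1)$ block. You instead use the rank-one Perron splitting $p\one_{N^2}^\T+T$ and the Woodbury identity, with the annihilations $Q\widetilde{T}=0$, $\widetilde{T}P=0$ doing the work that the Jordan basis does in the paper. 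Your identification $Q\Gcal P=F$ is exactly the paper's $\Qcal_1^\T\Gcal\Pcal_1=F$, and your $O(\nu^2)$ bound on $K-F$ is sharper bookkeeping than the paper needs but is correct. Both approaches are comparable in length; yours avoids carrying the nuisance Jordan block $J'$ through the inversion.

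Your argument for the spectral-radius claim \eqref{eqn:rhobigFapprox}, however, has a real gap. First, the Woodbury identity does not give a spectral splitting of $\Fcal$; it gives an inverse. What the matrix determinant lemma actually yields is
\[
\det(\lambda I-\Fcal)=\det(\lambda I-\Gcal\widetilde{T})\,\det\bigl(I_{4M^2}-K(\lambda)\bigr),\qquad K(\lambda)\defeq Q(\lambda I-\Gcal\widetilde{T})^{-1}\Gcal P,
\]
so the eigenvalues of $\Fcal$ in the Perron cluster are the roots of the \emph{nonlinear} eigenvalue problem $\det(I-K(\lambda))=0$, not the eigenvalues of $K=K(1)$. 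The assertion $\rho(\Fcal)=\rho(K)$ is therefore unjustified as stated. Second, your invocation of Elsner's inequality with ``the $N^2$-fold structure of the Perron cluster'' does not make sense: $K$ and $F$ are $4M^2\times4M^2$ matrices, and the Perron eigenvalue of $\bar{A}\kron\bar{A}+C_A$ is simple, so the cluster has multiplicity $4M^2$, not $N^2$. Elsner applied to $K$ versus $F$ would give an exponent $1/(4M^2)$, not $1/N^2$. The paper obtains the $1/N^2$ exponent by an entirely different mechanism: a similarity transformation $\Dcal^{-1}\Fcal_s\Dcal$ with $\Dcal=\diag\{\nu^{\epsilon}U,\nu^{2\epsilon}I,\dots,\nu^{N^2\epsilon}I\}$, $\epsilon=1/N^2$, designed to damp the super-diagonal $1$'s in the Jordan form $J'$ (whose blocks have size at most $N^2-1$), followed by Gershgorin's theorem. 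That is where the $N^2$ genuinely enters. If you want to salvage your approach, you would need to analyze the nonlinear problem $\det(I-K(\lambda))=0$ directly; since $F$ is Hermitian and $K(\lambda)=F+(\lambda-1)(-F)+O(\nu^2)+O((\lambda-1)^2)$ near $\lambda=1$, a Bauer--Fike/Rouch\'e argument can in fact be pushed through and would give $\rho(\Fcal)=\rho(F)+O(\nu^2)$, which is sharper than \eqref{eqn:rhobigFapprox}---but that is not the argument you wrote.
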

\begin{IEEEproof}
See Appendix \ref{app:lowrank}.
\end{IEEEproof}

In expression \eqref{eqn:Fdef} we observe that the matrix $F$ is dependent on the first and second-order moments of the random step-sizes, i.e., $\{\bar{\mu}_k\}$ and $\{c_{\mu,\ell,k}\}$, and is also dependent on the first and second order moments of the random combination coefficient matrix, i.e., $\bar{A}$ and $C_A$, through the dependence on the Perron eigenvector $p$. Let us introduce two $4M^2\times 4M^2$ matrices:
\begin{align}
\label{eqn:Rdef}
R & \defeq \sum_{k=1}^{N} p_{k,k} \, ( \bar{\mu}_k^2 + c_{\mu,k,k} ) R_k \\
\label{eqn:Zdef}
Z & \defeq \unvecm\left( (I_{4M^2} - F)^{-1} \vecm(R) \right)
\end{align}
where $R_k$ is given by \eqref{eqn:Rkdef}. Using \eqref{eqn:bigMbound}--\eqref{eqn:bigCMbound2} and \eqref{eqn:lowrankapproxIminusF1} in Appendix \ref{app:lowrank}, we can verify that
\be
\label{eqn:orderofRandZ}
\| R \| = O(\nu^2), \qquad \| Z \| = O(\nu)
\ee 
Then, using Lemma \ref{lemma:lowrank}, we can establish the following useful result about the structure of the steady-state network error covariance matrix.

\begin{theorem}[Network error covariance matrix]
\label{theorem:networkcovariance}
In steady-state, the covariance matrix of the network error $\ubar{\wt{\bm{w}}}_i'$ from the long-term model \eqref{eqn:errorrecursiondefapprox} can be approximated by
\be
\label{eqn:approximatesteadystateCov}
\lim_{i\rightarrow\infty} \E \ubar{\wt{\bm{w}}}_i' \ubar{\wt{\bm{w}}}_i'^* = (\one_N \one_N^\T) \kron Z + O(\nu^{1 + \gamma_o'})
\ee
where $Z$ is from \eqref{eqn:Zdef}, 
\be
\label{eqn:gammaopdef}
\gamma_o' \defeq \frac{1}{2} \min\{ 2, \gamma_v \}
\ee
and the first term on the RHS is dominant.
\end{theorem}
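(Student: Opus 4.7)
The plan is to substitute the low-rank resolvent approximation from Lemma \ref{lemma:lowrank} into the closed-form expression \eqref{eqn:bigzdef} for $z$ and read off the dominant rank-one block structure. First I would use \eqref{eqn:sinftydef} together with \eqref{eqn:zinftyandz} to reduce the problem to analyzing $z$ modulo an additive error of $O(\nu^{1+\gamma_v/2})$. Since $\Fcal$ is real and $F$ is Hermitian, taking the adjoint of \eqref{eqn:lowrankapproxIminusF} gives
\[
(I - \Fcal^*)^{-1} \;=\; (\one_{N^2} p^\T)\kron (I-F)^{-1} + O(1).
\]
Inserted into \eqref{eqn:bigzdef}, the $O(1)$ remainder combined with $\|\bar{\Mcal}\bkron\bar{\Mcal}+\Ccal_M\| = O(\nu^2)$ from \eqref{eqn:boundMMCM} and the bounded factor $\|\bar{\Acal}^\T\bkron\bar{\Acal}^\T+\Ccal_A^\T\| = O(1)$ contributes at most $O(\nu^2)$ to $z$.

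For the dominant contribution, I would exploit that $\bar{\Acal}=\bar{A}\kron I_{2M}$ and $\bar{\Mcal}=\bar{M}\kron I_{2M}$, so that $\bar{\Acal}^\T\bkron\bar{\Acal}^\T+\Ccal_A^\T = (\bar{A}^\T\kron\bar{A}^\T+C_A^\T)\kron I_{4M^2}$ and $\bar{\Mcal}\bkron\bar{\Mcal}+\Ccal_M = (\bar{M}\kron\bar{M}+C_M)\kron I_{4M^2}$. Applying the mixed-product rule on the outer Kronecker factors together with the Perron left-eigenvector identity $p^\T(\bar{A}^\T\kron\bar{A}^\T+C_A^\T)=p^\T$ from \eqref{eqn:pdef}, the dominant part of $z$ collapses to $[(\one_{N^2} p^\T)(\bar{M}\kron\bar{M}+C_M)] \kron (I-F)^{-1} \cdot \bvec(\Rcal)$. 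Since $\bar{M}\kron\bar{M}+C_M$ is diagonal with entry $\bar{\mu}_k\bar{\mu}_\ell+c_{\mu,k,\ell}$ at index $(k,\ell)$, and $\bvec(\Rcal)$ is supported on the diagonal indices $k=\ell$ with block $\vecm(R_k)$ there, evaluating the product block-by-block reduces the leading term of each $4M^2$-block of $z$ to $(I-F)^{-1}\sum_k p_{k,k}(\bar{\mu}_k^2+c_{\mu,k,k})\vecm(R_k) = (I-F)^{-1}\vecm(R) = \vecm(Z)$, with $R$ and $Z$ as in \eqref{eqn:Rdef}--\eqref{eqn:Zdef}.

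Inverting $\bvec$ then recovers a $2MN\times 2MN$ block matrix all of whose $2M\times 2M$ blocks equal $Z$, i.e., $\one_N\one_N^\T\kron Z$. Collecting the two error sources---$O(\nu^{1+\gamma_v/2})$ from \eqref{eqn:zinftyandz} and $O(\nu^2)$ from the subdominant piece of the resolvent---gives an overall bound of $O(\nu^{1+\gamma_o'})$ with $\gamma_o' = \min\{2,\gamma_v\}/2$, matching \eqref{eqn:gammaopdef}. Because $\|Z\|=O(\nu)$ by \eqref{eqn:orderofRandZ}, the rank-one block matrix $\one_N\one_N^\T\kron Z$ genuinely dominates the residual.

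The hard part will be the bookkeeping of block-Kronecker identities and $\bvec$ conventions: one must verify that the diagonal support of $\bvec(\Rcal)$ couples correctly with the diagonal structure of $\bar{M}\kron\bar{M}+C_M$ so that only the diagonal entries $\{p_{k,k}\}$ of the Perron matrix $P_p$ survive in the final sum, and that the resulting scalar weights paired with $\{R_k\}$ reassemble exactly into the matrix $R$ of \eqref{eqn:Rdef}. Once these structural identities are in place, the order estimates follow immediately from Lemma \ref{lemma:lowrank}, \eqref{eqn:boundMMCM}, and \eqref{eqn:orderofRandZ}, and the conclusion \eqref{eqn:approximatesteadystateCov} drops out after a single application of $\bvec^{-1}$.
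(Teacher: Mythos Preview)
Your proposal is correct and follows essentially the same route as the paper's proof in Appendix~\ref{app:networkcovariance}: both substitute the low-rank approximation of Lemma~\ref{lemma:lowrank} into \eqref{eqn:bigzdef}, use the Perron relation to collapse the $(\bar{\Acal}^\T\bkron\bar{\Acal}^\T+\Ccal_A^\T)$ factor, reduce the step-size factor against $\bvec(\Rcal)$ to $\vecm(R)$ (this is exactly \eqref{eqn:vecRequ}), and then invoke \eqref{eqn:zinftyandz} to collect the $O(\nu^{1+\gamma_v/2})$ and $O(\nu^2)$ errors into $O(\nu^{1+\gamma_o'})$. One small correction: your justification ``$\Fcal$ is real'' is not generally valid in the complex-data setting, but it is also unnecessary---the adjoint of \eqref{eqn:lowrankapproxIminusF} already yields $(I-\Fcal^*)^{-1}=(\one_{N^2}p^\T)\kron(I-F)^{-1}+O(1)$ because $p$ has real entries and $F$ is Hermitian.
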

\begin{IEEEproof}
See Appendix \ref{app:networkcovariance}.
\end{IEEEproof}

According to Theorem \ref{theorem:networkcovariance}, the (cross-) covariance matrices of $\{ \ubar{\wt{\bm{w}}}_{k,i}' \}$, which are uniformly expressed by $\E \ubar{\wt{\bm{w}}}_{k,i}' \ubar{\wt{\bm{w}}}_{\ell,i}'^* = \unvecm(z_i^{(\ell,k)})$ for all $k$ and $\ell$ according to \eqref{eqn:zilkdef}, can be approximated by $Z$ in steady-state. However, this result is useful only if $Z$ is a valid complex-Hessian-type matrix.

\begin{definition}[Complex-Hessian-type matrices]
\label{def:complexHessian}
Let $X$ be an $M\times M$ positive semi-definite Hermitian matrix and let $Y$ be an $M\times M$ symmetric matrix. Then, a positive semi-definite block matrix of the form
\be
\label{eqn:complexhessian_structure}
H \defeq \begin{bmatrix}
X   & Y    \\
Y^* & X^\T \\
\end{bmatrix} \ge 0
\ee
will be referred to as a complex-Hessian-type matrix. \hfill \IEEEQED
\end{definition}

The following result explains the reason for introducing this definition.

\begin{lemma}[Complex-Hessian-type covariance matrices]
\label{lemma:cov_complexHessian}
Let $\bm{x}$ denote an $M\times1$ zero-mean complex random vector and let $R_x \defeq \E \bm{x} \bm{x}^* $ and $R_x' \defeq \E \bm{x} \bm{x}^\T $. Then, the covariance matrix of $\ubar{\bm{x}} = \ubar{\mbbT}(\bm{x})$ is given by
\be
\label{eqn:covariance_complexHessian}
\E \ubar{\bm{x}} \ubar{\bm{x}}^* = \begin{bmatrix}
\E \bm{x} \bm{x}^*          & \E \bm{x} \bm{x}^\T        \\
\E (\bm{x}^*)^\T \bm{x}^*   & \E (\bm{x}^*)^\T \bm{x}^\T  \\
\end{bmatrix} = \begin{bmatrix}
R_x     & R_x'   \\
R_x'^*  & R_x^\T \\
\end{bmatrix}   
\ee
and this matrix is a complex-Hessian-type matrix.
\end{lemma}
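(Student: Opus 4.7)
The plan is to unfold the definition of the augmentation transform $\ubar{\mbbT}(\cdot)$ from Part I, which stacks a complex vector with its entrywise conjugate, i.e.\ $\ubar{\bm{x}} = \col\{ \bm{x},\, (\bm{x}^*)^\T \}$. First I would write $\ubar{\bm{x}} \ubar{\bm{x}}^*$ as a $2\times 2$ block outer product and compute the four blocks directly. The $(1,1)$ block is $\bm{x}\bm{x}^*$, the $(1,2)$ block is $\bm{x} \bm{x}^\T$ (since $((\bm{x}^*)^\T)^* = \bm{x}^\T$), the $(2,1)$ block is $(\bm{x}^*)^\T \bm{x}^*$, and the $(2,2)$ block is $(\bm{x}^*)^\T \bm{x}^\T$. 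Taking expectations and matching with the definitions $R_x = \E\,\bm{x}\bm{x}^*$ and $R_x' = \E\,\bm{x}\bm{x}^\T$, the off-diagonal blocks become $R_x'$ and $R_x'^*$ by applying $(AB)^* = B^* A^*$ to $(\bm{x}\bm{x}^\T)^* = (\bm{x}^*)^\T \bm{x}^*$, while the $(2,2)$ block becomes $R_x^\T$ since $(\bm{x}\bm{x}^*)^\T = (\bm{x}^*)^\T \bm{x}^\T$. This establishes the claimed block expression in \eqref{eqn:covariance_complexHessian}.

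To verify that the resulting matrix fits Definition \ref{def:complexHessian}, I would check three items in turn. First, $R_x$ is Hermitian since $(\bm{x}\bm{x}^*)^* = \bm{x}\bm{x}^*$, and positive semi-definite as a standard outer-product second moment. Second, $R_x'$ is symmetric because $(\bm{x}\bm{x}^\T)^\T = \bm{x}\bm{x}^\T$, so the same holds after taking expectation. Third, the full matrix $\E\,\ubar{\bm{x}}\ubar{\bm{x}}^*$ is positive semi-definite by the standard argument $v^* (\E\,\ubar{\bm{x}}\ubar{\bm{x}}^*) v = \E\,|\ubar{\bm{x}}^* v|^2 \ge 0$ for any compatible $v$. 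Identifying $X = R_x$ and $Y = R_x'$, the matrix in \eqref{eqn:covariance_complexHessian} is exactly of the form \eqref{eqn:complexhessian_structure}, which completes the proof.

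There is no real obstacle here; the lemma is essentially a bookkeeping identity that translates between the augmented real-form representation $\ubar{\bm{x}}$ and the usual covariance/pseudo-covariance pair $(R_x, R_x')$ of a complex random vector. The only place where one must be careful is distinguishing $\bm{x}^*$ (conjugate-transpose, a row) from $(\bm{x}^*)^\T$ (column of conjugates); this is the same convention that produced the factor of $4$ in \eqref{eqn:gradientnoise4thmoment}, and tracking it consistently is all that is needed.
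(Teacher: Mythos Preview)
Your proposal is correct and follows essentially the same approach as the paper, which simply notes that the result follows by comparing \eqref{eqn:covariance_complexHessian} with Definition~\ref{def:complexHessian}. You have just spelled out the block computation and the verification of the Hermitian, symmetric, and positive semi-definiteness conditions that the paper leaves implicit.
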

\begin{IEEEproof}
It follows from comparing \eqref{eqn:covariance_complexHessian} to \eqref{eqn:complexhessian_structure}.
\end{IEEEproof}

By Lemma \ref{lemma:cov_complexHessian}, for \emph{any} zero-mean complex random vector $\bm{x}$, the covariance matrix of $\ubar{\bm{x}} = \ubar{\mbbT}(\bm{x})$ must be a complex-Hessian-type matrix. Therefore, in order to approximate $\{ \unvecm(z_i^{(\ell,k)}) \}$ by $Z$ according to \eqref{eqn:approximatesteadystateCov}, we establish the following useful result for the matrix $Z$.

\begin{lemma}[Properties of $Z$]
\label{lemma:propertiesZ}
The matrix $Z$ in \eqref{eqn:Zdef} is a positive semi-definite complex-Hessian-type matrix. 
\end{lemma}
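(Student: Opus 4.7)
The plan is to express $Z$ as a norm-convergent matrix series and show that both the ``source'' matrix $R$ and the linear operator propagating it lie in (or preserve) the cone of complex-Hessian-type positive semi-definite matrices. Since $\rho(F)<1$ by Lemma \ref{lemma:spectralF}, the Neumann expansion gives $\vecm(Z)=\sum_{j=0}^\infty F^j\vecm(R)$. Using $\vecm(AXB)=(B^\T\kron A)\vecm(X)$ together with the Hermiticity of $\bar{D}_k$ and $H_k$, the action of $F$ is equivalent to the matrix-valued operator
\be
\mathcal{L}[X] \defeq \sum_{k,\ell=1}^{N} p_{\ell,k}\, \bar{D}_k X \bar{D}_\ell \;+\; \sum_{k,\ell=1}^{N} p_{\ell,k}\, c_{\mu,\ell,k}\, H_k X H_\ell,
\ee
so that $Z=\sum_{j=0}^{\infty}\mathcal{L}^j[R]$. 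The lemma thus reduces to verifying (i) that $R$ is complex-Hessian-type PSD, and (ii) that $\mathcal{L}$ maps the complex-Hessian-type PSD cone into itself.

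Claim (i) is immediate from Lemma \ref{lemma:cov_complexHessian}: each $R_k$ in \eqref{eqn:Rkdef} is the limit of conditional covariances of $\ubar{\bm{v}}_{k,i}(w^o)$, hence complex-Hessian-type PSD; and the weights $p_{k,k}(\bar{\mu}_k^2+c_{\mu,k,k})$ in \eqref{eqn:Rdef} are non-negative, so $R$ inherits the same structure. For Claim (ii), the crucial structural observation is that both coefficient arrays $P_p=(p_{\ell,k})$ and $P_p\odot C_\mu=(p_{\ell,k}\,c_{\mu,\ell,k})$ are real symmetric positive semi-definite: the first by Lemma \ref{lemma:Ppandp}, and the second via the Schur product theorem applied to $P_p$ and to the step-size covariance matrix $C_\mu=(c_{\mu,\ell,k})$. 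Real spectral decompositions $P_p=\sum_m v_m v_m^\T$ and $P_p\odot C_\mu=\sum_m \tilde v_m\tilde v_m^\T$ then recast
\be
\mathcal{L}[X] = \sum_m U_m X U_m + \sum_m \tilde U_m X \tilde U_m,\qquad U_m \defeq \sum_k (v_m)_k \bar{D}_k,\;\; \tilde U_m \defeq \sum_k (\tilde v_m)_k H_k.
\ee
Since each $H_k$ is the extended complex Hessian of $J_k$ at $\ubar{w}^o$ and $\bar{D}_k=I_{2M}-\bar{\mu}_k H_k$, both families are Hermitian and of complex-Hessian-type form; real linear combinations preserve these properties, so every $U_m$ and $\tilde U_m$ is Hermitian and complex-Hessian-type.

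It remains to show that the sandwich map $X\mapsto UXU$ sends complex-Hessian-type PSD matrices to complex-Hessian-type PSD matrices whenever $U$ is Hermitian and complex-Hessian-type. Positive semi-definiteness is immediate since $v^*UXUv=(Uv)^*X(Uv)\ge 0$. For the block structure I plan to introduce the $2M\times 2M$ involution $J \defeq \bigl[\begin{smallmatrix} 0 & I_M \\ I_M & 0 \end{smallmatrix}\bigr]$ and verify the clean characterization that a Hermitian matrix $H$ has the complex-Hessian-type form of Definition \ref{def:complexHessian} if and only if $JHJ=\overline{H}$. Granting this, $J(UXU)J=(JUJ)(JXJ)(JUJ)=\overline{U}\,\overline{X}\,\overline{U}=\overline{UXU}$, so $UXU$ inherits the identity; combined with its Hermiticity, it is complex-Hessian-type. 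Iterating gives $\mathcal{L}^j[R]$ in the cone for every $j$, and by closedness of the cone the norm-convergent series $Z$ lies there as well. The main obstacle is this closure property: a naive block-by-block expansion of $UXU$ produces a tangle of terms in the blocks of $U$ and $X$ with mixed transpose/conjugate operations, whereas the involution characterization $JHJ=\overline{H}$ collapses the argument to the one-line computation above, so establishing that characterization (and setting up the Hadamard/Schur decomposition of the mixed coefficient array $P_p\odot C_\mu$ so that $\mathcal{L}$ takes sandwich form) is where the technical care is concentrated.
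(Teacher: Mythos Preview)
Your proposal is correct and follows essentially the same route as the paper's proof: both expand $Z=\sum_{j\ge0}\mathcal{L}^j[R]$ via the Neumann series (using $\rho(F)<1$), both identify the recursion $\mathcal{L}[X]=\sum_{k,\ell}p_{\ell,k}\bar D_kX\bar D_\ell+\sum_{k,\ell}p_{\ell,k}c_{\mu,\ell,k}H_kXH_\ell$, both invoke Lemma~\ref{lemma:Ppandp} for $P_p\ge0$ and the Schur product theorem for $P_p\odot C_\mu\ge0$, and both use the involution characterization of complex-Hessian-type matrices (your $JHJ=\overline{H}$ is exactly the paper's $LH^\T L=H$ after using $H^\T=\overline{H}$ for Hermitian $H$). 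The only cosmetic difference is that the paper verifies positive semi-definiteness of $\mathcal{L}[X]$ by writing each of the two sums as a single quadratic form $(\one_N\otimes I)^*\bar\Dcal^*[P_p\otimes X]\bar\Dcal(\one_N\otimes I)$ and $(\one_N\otimes I)^*\Hcal^*[(P_p\odot C_\mu)\otimes X]\Hcal(\one_N\otimes I)$, whereas you spectrally decompose $P_p$ and $P_p\odot C_\mu$ to recast $\mathcal{L}$ as a sum of sandwich maps $X\mapsto U_mXU_m$; these are equivalent packagings of the same positivity fact. One small wording caution: by Definition~\ref{def:complexHessian} ``complex-Hessian-type'' includes positive semi-definiteness, so your $U_m,\tilde U_m$ are not complex-Hessian-type per se (they need not be PSD); what you actually need and use is only the block-structure identity $JU_mJ=\overline{U_m}$, which does hold for real linear combinations of the $\bar D_k$ and $H_k$.
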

\begin{IEEEproof}
See Appendix \ref{app:propertiesZ}.
\end{IEEEproof}

Using \eqref{eqn:approximatesteadystateCov} and Lemmas \ref{lemma:cov_complexHessian} and \ref{lemma:propertiesZ}, we arrive at the following result for the covariance and cross-covariance matrices of the steady-state error vectors $\{\ubar{\wt{\bm{w}}}_{k,i}'\}$ from the long-term model \eqref{eqn:errorrecursiondefapprox}.

\begin{corollary}[{Covariance and cross-covariance matrices}]
\label{corollary:crosscov}
The steady-state (cross-) covariance matrices of individual errors $\{ \ubar{\wt{\bm{w}}}_{k,i}' \}$ from the long-term model \eqref{eqn:errorrecursiondefapprox} can be approximated by
\be
\label{eqn:covapprox}
\lim_{i\rightarrow\infty} \E \ubar{\wt{\bm{w}}}_{k,i}'  \ubar{\wt{\bm{w}}}_{\ell,i}'^* = Z + O(\nu^{1 + \gamma_o' })
\ee
for all $k$ and $\ell$, where $Z$ is given by \eqref{eqn:Zdef} and is dominant due to \eqref{eqn:orderofRandZ}, and $\gamma_o'$ is given by \eqref{eqn:gammaopdef}.
\end{corollary}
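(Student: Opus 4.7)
The plan is to deduce the corollary directly from Theorem \ref{theorem:networkcovariance} by reading off the $(k,\ell)$-th $2M \times 2M$ block of both sides of \eqref{eqn:approximatesteadystateCov}. The essential point is that the approximating matrix $(\one_N \one_N^\T) \kron Z$ is \emph{block-constant}: by the definition of the Kronecker product, and since every entry of $\one_N \one_N^\T$ equals one, each of its $N \times N$ blocks of size $2M \times 2M$ is identically $Z$. On the other hand, by the way $\ubar{\wt{\bm{w}}}_i'$ is stacked from the individual error vectors $\{\ubar{\wt{\bm{w}}}_{k,i}'\}$, the $(k,\ell)$-th $2M \times 2M$ block of the full covariance matrix $\E\,\ubar{\wt{\bm{w}}}_i' \ubar{\wt{\bm{w}}}_i'^*$ is exactly the cross-covariance $\E\,\ubar{\wt{\bm{w}}}_{k,i}' \ubar{\wt{\bm{w}}}_{\ell,i}'^*$.

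First, I would write the block-extraction map explicitly using basis selectors. Let $e_k \in \mathbb{R}^N$ denote the $k$-th standard basis vector, so that
\[
\E\,\ubar{\wt{\bm{w}}}_{k,i}'\ubar{\wt{\bm{w}}}_{\ell,i}'^* = (e_k^\T \kron I_{2M})\,\E\,\ubar{\wt{\bm{w}}}_i'\ubar{\wt{\bm{w}}}_i'^*\,(e_\ell \kron I_{2M}),
\]
and analogously $(e_k^\T \kron I_{2M})\big((\one_N\one_N^\T) \kron Z\big)(e_\ell \kron I_{2M}) = Z$ for every pair $(k,\ell)$.

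Second, I would argue that block extraction does not worsen the order of the residual: since $\|e_k\| = \|e_\ell\| = 1$, any matrix $E$ with $\|E\| \le O(\nu^{1+\gamma_o'})$ satisfies $\|(e_k^\T \kron I_{2M}) E (e_\ell \kron I_{2M})\| \le \|E\| \le O(\nu^{1+\gamma_o'})$. Applying this to the $O(\nu^{1+\gamma_o'})$ term in \eqref{eqn:approximatesteadystateCov}, taking the limit $i \to \infty$, and combining with the identity from the first step yields \eqref{eqn:covapprox}. The dominance of $Z$ over the residual follows from $\|Z\| = O(\nu)$ by \eqref{eqn:orderofRandZ} together with $\gamma_o' > 0$.

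There is essentially no obstacle here beyond bookkeeping: the work has been done in Theorem \ref{theorem:networkcovariance}, and the corollary is a clean coordinate-wise reading of that result. The only subtlety worth flagging — which I would include as a brief remark — is the consistency check provided by Lemmas \ref{lemma:cov_complexHessian} and \ref{lemma:propertiesZ}: because each left-hand side of \eqref{eqn:covapprox} with $k=\ell$ must be a complex-Hessian-type covariance matrix in the sense of Definition \ref{def:complexHessian}, it is important that $Z$ itself carries this structure, which is exactly what Lemma \ref{lemma:propertiesZ} guarantees.
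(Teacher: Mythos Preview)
Your proposal is correct and takes essentially the same approach as the paper: both deduce \eqref{eqn:covapprox} by reading off the $(k,\ell)$-th block of \eqref{eqn:approximatesteadystateCov} in Theorem \ref{theorem:networkcovariance}, with Lemmas \ref{lemma:cov_complexHessian} and \ref{lemma:propertiesZ} supplying the structural consistency that $Z$ is a valid complex-Hessian-type covariance matrix. The only cosmetic difference is emphasis: the paper states the complex-Hessian-type property of $Z$ up front and then simply asserts that the approximation follows from the theorem, whereas you spell out the block-extraction mechanics with basis selectors and the norm bound on the residual before mentioning the structural check as a remark.
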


\begin{IEEEproof}
By Lemma \ref{lemma:propertiesZ}, the $Z$ is a complex-Hessian-type matrix. According to Lemma \ref{lemma:cov_complexHessian}, it is a valid covariance matrix for complex random vectors obtained via the transform $\ubar{\mbbT}(\cdot)$. The approximation \eqref{eqn:covapprox} then follows from Theorem \ref{theorem:networkcovariance}.
\end{IEEEproof}

\subsection{Steady-State MSD}
Using Corollary \ref{corollary:crosscov}, we obtain two useful results about the steady-state MSD for asynchronous diffusion solutions.

\begin{corollary}[{Steady-state MSD}]
\label{corollary:steadystateMSD}
Based on the same assumptions as Theorem \ref{theorem:networkcovariance}, the steady-state MSD (either the network MSD in \eqref{eqn:networkMSDdef} or the individual MSD in \eqref{eqn:individualMSDdef}) can be approximated by 
\begin{align}
\label{eqn:steadyMSDapprox}
\MSD^{\network} & = \frac{1}{4} \Tr( H^{-1} R ) + O(\nu^{1 + \gamma_o }) \\
\label{eqn:steadyMSDapproxind}
\MSD_k & = \frac{1}{4} \Tr( H^{-1} R ) + O(\nu^{1 + \gamma_o })
\end{align}
where $\{ H, R\}$ are given by \eqref{eqn:Hdef} and \eqref{eqn:Rdef}, respectively, $\gamma_o$ is from \eqref{eqn:gammaodef}, and $\Tr( H^{-1} R )$ is of the order of $\nu$.
\end{corollary}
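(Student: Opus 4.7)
The plan is to combine Corollary \ref{corollary:crosscov} with Theorem \ref{theorem:gapmeansquare} to collapse both MSDs onto the single quantity $\Tr(Z)$, and then to read off $\Tr(Z)$ from the defining relation \eqref{eqn:Zdef} by turning it into a perturbed Lyapunov equation.

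First, I would invoke the identity $\|\ubar{\bm{x}}\|^2 = 2\|\bm{x}\|^2$ together with Corollary \ref{corollary:crosscov} (specialized to $\ell=k$) to get, for every agent $k$,
\begin{align*}
\lim_{i\to\infty}\E\,\|\wt{\bm{w}}_{k,i}'\|^2 \;=\; \tfrac{1}{2}\,\Tr\bigl(\lim_{i\to\infty}\E\,\ubar{\wt{\bm{w}}}_{k,i}'\,\ubar{\wt{\bm{w}}}_{k,i}'^*\bigr) \;=\; \tfrac{1}{2}\,\Tr(Z) + O(\nu^{1+\gamma_o'}).
\end{align*}
The gap between the long-term model and the original recursion is then absorbed via Theorem \ref{theorem:gapmeansquare} and the Cauchy-Schwarz step already carried out in \eqref{eqn:networkMSDproof2}--\eqref{eqn:networkMSDproof3} (which also applies agent-wise), costing only an extra $O(\nu^{3/2})\subseteq O(\nu^{1+\gamma_o})$ term since $\gamma_o\le \tfrac12$. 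Hence $\MSD_k = \tfrac12\Tr(Z) + O(\nu^{1+\gamma_o})$, and averaging over $k$ in \eqref{eqn:networkMSDdef} gives the same leading term for $\MSD^{\network}$.

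The core of the argument is to extract $\Tr(Z)$. Applying the Kronecker identity $(B^{\!\T}\kron A)\vecm(X) = \vecm(AXB)$ to each summand of \eqref{eqn:Fdef}, the equation $(I_{4M^2}-F)\vecm(Z) = \vecm(R)$ becomes
\begin{align*}
Z \;-\; \sum_{k,\ell} p_{\ell,k}\,\bar{D}_k Z \bar{D}_\ell \;-\; \sum_{k,\ell} p_{\ell,k}\,c_{\mu,\ell,k}\, H_k Z H_\ell \;=\; R.
\end{align*}
Substituting $\bar{D}_k = I_{2M} - \bar{\mu}_k H_k$, expanding $\bar{D}_k Z \bar{D}_\ell$, and collapsing the double sums via the Perron identities $\sum_\ell p_{\ell,k}=\bar{p}_k$ and $\sum_k p_{\ell,k}=\bar{p}_\ell$ from \eqref{eqn:plkandpklandbarpk} yields the clean perturbed Lyapunov equation
\begin{align*}
HZ + ZH \;-\; \sum_{k,\ell} p_{\ell,k}\bigl(\bar{\mu}_k\bar{\mu}_\ell + c_{\mu,\ell,k}\bigr)\,H_k Z H_\ell \;=\; R.
\end{align*}
Since $\bar{\mu}_k\bar{\mu}_\ell + c_{\mu,\ell,k} = \E[\bm{\mu}_k(i)\bm{\mu}_\ell(i)] = O(\nu^2)$ and $\|Z\|=O(\nu)$ by \eqref{eqn:orderofRandZ}, the third term is $O(\nu^3)$, so $Z$ satisfies $HZ + ZH = R + O(\nu^3)$.

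To finish, I would compare $Z$ with the solution $Z_0$ of the unperturbed equation $HZ_0 + Z_0 H = R$. Standard Lyapunov perturbation gives $\|Z-Z_0\| = O(\nu^2)$ (the $H^{-1}$ factor in the estimate is responsible for the drop from $\nu^3$ to $\nu^2$), which is absorbed in $O(\nu^{1+\gamma_o})$ since $\gamma_o\le \tfrac12$. Multiplying $HZ_0 + Z_0 H = R$ by $H^{-1}$ on the left and invoking the cyclic trace property,
\begin{align*}
\Tr(Z_0) + \Tr(H^{-1} Z_0 H) \;=\; 2\,\Tr(Z_0) \;=\; \Tr(H^{-1} R),
\end{align*}
so $\Tr(Z) = \tfrac12 \Tr(H^{-1}R) + O(\nu^2)$. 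Combining this with the first step produces $\MSD_k = \MSD^{\network} = \tfrac14 \Tr(H^{-1} R) + O(\nu^{1+\gamma_o})$, and the order $\Tr(H^{-1}R) = O(\nu)$ follows from $\|H^{-1}\| = O(\nu^{-1})$ and $\|R\| = O(\nu^2)$. The main obstacle I foresee is disciplined bookkeeping: showing that the $O(\nu^3)$ perturbation in the Lyapunov equation really does propagate to only $O(\nu^2)$ in $Z$ (not $O(\nu)$), and that this $O(\nu^2)$ trace correction is genuinely subsumed by the $O(\nu^{1+\gamma_o})$ remainder uniformly across the admissible range $\gamma_v\in(0,4]$.
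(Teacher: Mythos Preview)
Your proposal is correct and follows essentially the same route as the paper: both rest on the decomposition $I_{4M^2}-F = S + Y$ with $S = H^\T\kron I_{2M} + I_{2M}\kron H$ the Lyapunov operator and $\|Y\|=O(\nu^2)$, and both exploit the identity $S^{-1}\vecm(I_{2M})=\tfrac12\vecm(H^{-1})$ (equivalently, your trace trick on $HZ_0+Z_0H=R$) to obtain $\Tr(Z)=\tfrac12\Tr(H^{-1}R)+O(\nu^2)$. The only cosmetic difference is that the paper stays in vectorized form and invokes the matrix inversion lemma on $(S+Y)^{-1}$, whereas you unvectorize first and phrase the approximation as a Lyapunov perturbation bound on $Z-Z_0$; both yield the same $O(\nu^2)$ trace correction, which is indeed absorbed by $O(\nu^{1+\gamma_o})$ since $\gamma_o\le\tfrac12$.
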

\begin{IEEEproof}
See Appendix \ref{app:steadystateMSD}. 
\end{IEEEproof}

\begin{corollary}[{Clustered solutions}]
\label{corollary:clustered}
The steady-state \emph{relative} MSD between any two agents $k$ and $\ell$, i.e., $\E\,\|\bm{w}_{k,i} - \bm{w}_{\ell,i}\|^2$, is negligible compared to their steady-state \emph{absolute} MSD with respect to $w^o$, i.e.,
\be
\label{eqn:Onu3andOnu2}
\lim_{i\rightarrow\infty} \E\|\bm{w}_{k,i} - \bm{w}_{\ell,i}\|^2 = O(\nu^{1 + \gamma_o' }) \ll \max_k \MSD_k = O(\nu)  
\ee
where $\gamma_o'$ is given by \eqref{eqn:gammaopdef}.
\end{corollary}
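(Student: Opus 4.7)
The key observation enabling the proof is that Corollary \ref{corollary:crosscov} provides a \emph{uniform} leading-order expression for the steady-state cross-covariances $\E\,\ubar{\wt{\bm{w}}}_{k,i}'\ubar{\wt{\bm{w}}}_{\ell,i}'^{*}$: each of them equals the single common matrix $Z$ plus an $O(\nu^{1+\gamma_o'})$ residual, \emph{regardless} of the choice of $(k,\ell)$. This already suggests that the four blocks one needs for $\E\|\ubar{\wt{\bm{w}}}_{k,i}' - \ubar{\wt{\bm{w}}}_{\ell,i}'\|^2$ will cancel at leading order, leaving only an $O(\nu^{1+\gamma_o'})$ remainder. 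The plan is then to transfer this smallness from the long-term model to the true iterates via Theorem \ref{theorem:gapmeansquare}, and finally to translate the bound on $\ubar{\wt{\bm{w}}}$ (the stacked vector produced by the transform $\ubar{\mbbT}(\cdot)$) back into a bound on the native error $\wt{\bm{w}}$.

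Concretely, I would first expand
$$\E\|\ubar{\wt{\bm{w}}}_{k,i}' - \ubar{\wt{\bm{w}}}_{\ell,i}'\|^2 = \Tr\,\E[\ubar{\wt{\bm{w}}}_{k,i}'\ubar{\wt{\bm{w}}}_{k,i}'^{*}] + \Tr\,\E[\ubar{\wt{\bm{w}}}_{\ell,i}'\ubar{\wt{\bm{w}}}_{\ell,i}'^{*}] - 2\Re\Tr\,\E[\ubar{\wt{\bm{w}}}_{k,i}'\ubar{\wt{\bm{w}}}_{\ell,i}'^{*}]$$
and substitute \eqref{eqn:covapprox} into each of the three traces. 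Since the common matrix $Z$ appears in every block, the cancellation $2\Tr(Z) - 2\Re\Tr(Z) = 0$ wipes out the leading contribution and only the $O(\nu^{1+\gamma_o'})$ errors survive, yielding $\limsup_i \E\|\ubar{\wt{\bm{w}}}_{k,i}' - \ubar{\wt{\bm{w}}}_{\ell,i}'\|^2 \le O(\nu^{1+\gamma_o'})$. Next, Theorem \ref{theorem:gapmeansquare} gives $\limsup_i \E\|\ubar{\wt{\bm{w}}}_{k,i} - \ubar{\wt{\bm{w}}}_{k,i}'\|^2 \le O(\nu^2)$, and applying the elementary inequality $\|a+b+c\|^2 \le 3(\|a\|^2+\|b\|^2+\|c\|^2)$ to the decomposition $\ubar{\wt{\bm{w}}}_{k,i} - \ubar{\wt{\bm{w}}}_{\ell,i} = (\ubar{\wt{\bm{w}}}_{k,i} - \ubar{\wt{\bm{w}}}_{k,i}') + (\ubar{\wt{\bm{w}}}_{k,i}' - \ubar{\wt{\bm{w}}}_{\ell,i}') + (\ubar{\wt{\bm{w}}}_{\ell,i}' - \ubar{\wt{\bm{w}}}_{\ell,i})$, together with the observation that $\gamma_o' = \frac{1}{2}\min\{2,\gamma_v\} \le 1$ (so $O(\nu^2)$ is absorbed by $O(\nu^{1+\gamma_o'})$), produces $\limsup_i \E\|\ubar{\wt{\bm{w}}}_{k,i} - \ubar{\wt{\bm{w}}}_{\ell,i}\|^2 \le O(\nu^{1+\gamma_o'})$. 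Because the transform $\ubar{\mbbT}(\cdot)$ stacks a vector with its conjugate, $\|\ubar{\wt{\bm{w}}}_{k,i} - \ubar{\wt{\bm{w}}}_{\ell,i}\|^2 = 2\|\wt{\bm{w}}_{k,i} - \wt{\bm{w}}_{\ell,i}\|^2 = 2\|\bm{w}_{k,i} - \bm{w}_{\ell,i}\|^2$, which delivers the claimed bound on the relative MSD; the strict inequality $O(\nu^{1+\gamma_o'}) \ll O(\nu)$ then follows from Corollary \ref{corollary:steadystateMSD} and the fact that $\gamma_o' > 0$.

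The main delicacies to watch for are twofold. First, the $O(\cdot)$ constants in \eqref{eqn:covapprox} must be uniform in $(k,\ell)$; this comes for free from Theorem \ref{theorem:networkcovariance} because its approximation error is measured in a matrix norm on the \emph{full} block covariance, which dominates every individual block. Second, one must verify that the bridging residual $O(\nu^2)$ coming from Theorem \ref{theorem:gapmeansquare} is no larger than the target $O(\nu^{1+\gamma_o'})$, which is precisely where the bound $\gamma_o' \le 1$ built into \eqref{eqn:gammaopdef} is essential. Once these two points are in hand, the remainder of the argument is algebraic and presents no real obstacle.
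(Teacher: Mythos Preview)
Your proposal is correct and follows essentially the same argument as the paper: expand the relative MSD of the long-term iterates via Corollary~\ref{corollary:crosscov} so that the common $\Tr(Z)$ terms cancel, then bridge to the true iterates with Theorem~\ref{theorem:gapmeansquare} using the three-term convexity bound, and finally invoke the factor-of-two relation $\|\ubar{x}\|^2 = 2\|x\|^2$ together with Corollary~\ref{corollary:steadystateMSD}. Your explicit remarks on the uniformity of the $O(\cdot)$ constants and on $\gamma_o' \le 1$ (so that $O(\nu^2)$ is absorbed) are exactly the points that make the argument go through.
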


\begin{IEEEproof}
First, from Corollary \ref{corollary:crosscov}, we have
\begin{align}
\label{eqn:boundrelativeMSDlongterm}
\lim_{i\rightarrow\infty} \E \| \wt{\ubar{\bm{w}}}_{k,i}' - \wt{\ubar{\bm{w}}}_{\ell,i}' \|^2
& = \lim_{i\rightarrow\infty} \E [ \| \wt{\ubar{\bm{w}}}_{k,i}' \|^2 + \| \wt{\ubar{\bm{w}}}_{\ell,i}' \|^2 - \wt{\ubar{\bm{w}}}_{k,i}'^* \wt{\ubar{\bm{w}}}_{\ell,i}' - \wt{\ubar{\bm{w}}}_{\ell,i}'^* \wt{\ubar{\bm{w}}}_{k,i}' ] \nn \\
& = \Tr( Z ) + \Tr( Z ) - \Tr( Z ) - \Tr( Z ) + O(\nu^{1 + \gamma_o'}) \nn \\
& = O(\nu^{1 + \gamma_o'})
\end{align}
From Theorem \ref{theorem:gapmeansquare} and using \eqref{eqn:boundrelativeMSDlongterm}, we get
\begin{align}
\label{eqn:errorcrossnorm}
\lim_{i\rightarrow\infty} \E\|\wt{\ubar{\bm{w}}}_{k,i} - \wt{\ubar{\bm{w}}}_{\ell,i}\|^2
& = \lim_{i\rightarrow\infty} \E\|\wt{\ubar{\bm{w}}}_{k,i} - \wt{\ubar{\bm{w}}}_{k,i}' + \wt{\ubar{\bm{w}}}_{k,i}' - \wt{\ubar{\bm{w}}}_{\ell,i}' + \wt{\ubar{\bm{w}}}_{\ell,i}' - \wt{\ubar{\bm{w}}}_{\ell,i} \|^2 \nn \\
& \le \lim_{i\rightarrow\infty}   3 \E \left( \| \wt{\ubar{\bm{w}}}_{k,i}  -  \wt{\ubar{\bm{w}}}_{k,i}' \|^2  +  \| \wt{\ubar{\bm{w}}}_{k,i}'  -  \wt{\ubar{\bm{w}}}_{\ell,i}' \|^2  +  \| \wt{\ubar{\bm{w}}}_{\ell,i}'  -  \wt{\ubar{\bm{w}}}_{\ell,i} \|^2 \right) \nn \\
& \le O(\nu^{2}) + O(\nu^{1 + \gamma_o'}) + O(\nu^{2}) \nn \\
& \le O(\nu^{1 + \gamma_o'})
\end{align}
Using \eqref{I-eqn:barw2ubarw} from Part I \cite{Zhao13TSPasync1}, \eqref{eqn:errorcrossnorm}, and Corollary \ref{corollary:steadystateMSD} completes the proof.
\end{IEEEproof}

We illustrated Corollaries \ref{corollary:steadystateMSD} and \ref{corollary:clustered} earlier in Fig.~\ref{fig:normball}. We note that if the cost functions, $\{J_k(w)\}$, are deterministic with known gradient vectors, then performance results in Corollaries \ref{corollary:steadystateMSD} and \ref{corollary:clustered} can still be deduced from our expressions by setting the gradient noise to zero.

\section{Conclusion}
We studied in some detail the MSE performance of \emph{asynchronous} networks with random step-sizes, links, topologies, and combination coefficients. Assuming sufficiently small step-sizes, we showed that at steady-state, the error vector for every individual agent tends to cluster within $O(\nu^{1 + \gamma_o})$ from each other, which means that the MSD performance is essentially uniform across the entire network. The result in Corollary \ref{corollary:steadystateMSD} shows explicitly how the MSD performance of the network is affected by the asynchronous behavior. Quantities that relate to the first and second-order moments of the distribution of the random step-sizes and combination coefficients appear in these expressions. These results can be used to guide strategies for adjusting the combination weights and the rate at which the agents update their solutions and to ensure that the performance (in terms of MSD and rate of convergence) does not degrade below desirable levels.

\appendices

\section{Proof of Lemma \ref{lemma:sizeofdki}}
\label{app:sizeofdki}

Using Lemma \ref{I-lemma:lipschitzHessian} in Part I \cite{Zhao13TSPasync1}, we get from \eqref{eqn:Hkidef}--\eqref{eqn:Hkdef} that
\begin{align}
\label{eqn:Hkapproxerror}
\| H_k - \bm{H}_{k,i-1} \| & = \left\| \int_{0}^{1} [\nabla_{\ubar{w}\ubar{w}^{*}}^2 J_k(\ubar{w}^o) - \nabla_{\ubar{w}\ubar{w}^*}^2 J_k(\ubar{w}^o-t\ubar{\wt{\bm{w}}}_{k,i-1})] dt \right\| \nn \\
& \le \int_{0}^{1} \| \nabla_{\ubar{w}\ubar{w}^{*}}^2 J_k(\ubar{w}^o) - \nabla_{\ubar{w}\ubar{w}^*}^2 J_k(\ubar{w}^o-t\ubar{\wt{\bm{w}}}_{k,i-1}) \| dt \nn \\
& \le \int_{0}^{1} \tau_k' \cdot t \cdot \| \ubar{\wt{\bm{w}}}_{k,i-1} \| \, dt = \frac{\tau_k'}{2} \cdot \| \ubar{\wt{\bm{w}}}_{k,i-1} \|
\end{align}
Using \eqref{eqn:Hkapproxerror}, we get from \eqref{eqn:Deltakidef} that
\begin{align}
\label{eqn:Deltakiapprox}
\| \bm{d}_{k,i} \|^2 & \le [ \bm{\mu}_k(i) ]^2 \cdot \| H_k - \bm{H}_{k, i-1} \|^2 \cdot \| \ubar{\wt{\bm{w}}}_{k,i-1} \|^2 \nn \\
& \le [ \bm{\mu}_k(i) ]^2 \cdot \frac{\tau_k'^2}{4} \cdot \| \ubar{\wt{\bm{w}}}_{k,i-1} \|^4
\end{align}
Taking the expectation of both sides of \eqref{eqn:Deltakiapprox} yields
\be
\label{eqn:Deltakiapprox2}
\E \| \bm{d}_{k,i} \|^2 \le \bar{\mu}_k^{(2)} \frac{\tau_k'^2}{4} \cdot \E \|\ubar{\wt{\bm{w}}}_{k,i-1} \|^4
\ee
From Theorem \ref{I-theorem:4thmoments} of Part I \cite{Zhao13TSPasync1}, it holds for large enough $i$ that
\begin{align}
\label{eqn:bound4thordermoment2}
\E \| \wt{\bm{w}}_{k,i} \|^4 \le 2 b_4^2 \cdot \nu^2
\end{align}
Using the fact from \eqref{I-eqn:boundmumoments2} of Part I \cite{Zhao13TSPasync1} that $\bar{\mu}_k^{(2)} \le \nu^2$ for any $k$, and letting
\be
\label{eqn:taupdef}
\tau' \defeq \max_k \tau_k'
\ee
we obtain from \eqref{eqn:Deltakiapprox2} and \eqref{eqn:bound4thordermoment2} that
\be
\label{eqn:boundnorm2phiki1}
\E \| \bm{d}_{k,i} \|^2 \le 2 \tau'^2 b_4^2 \cdot \nu^4 = O(\nu^4)
\ee
where a factor of 4 appeared due to the conversion $\ubar{\mbbT}(\cdot)$ from \eqref{I-eqn:ubardef} of Part I \cite{Zhao13TSPasync1}. Then,
\begin{align}
\label{eqn:boundnorm2phiki}
\E \| \bm{\Acal}_i^\T \bm{d}_i \|^2 & = \sum_{k=1}^N \E \left\| \sum_{\ell \in \bm{\Ncal}_{k,i}} \bm{a}_{\ell k}(i) \bm{d}_{\ell,i} \right\|^2 \nn \\
& \stackrel{(a)}{\le} \sum_{k=1}^N \E \left[ \sum_{\ell \in \bm{\Ncal}_{k,i}} \bm{a}_{\ell k}(i) \| \bm{d}_{\ell,i} \|^2 \right] \nn \\
& = \sum_{k=1}^N \sum_{\ell \in \Ncal_k} \bar{a}_{\ell k} \E \| \bm{d}_{\ell,i} \|^2 \nn \\
& \le N \cdot \max_\ell \E \| \bm{d}_{\ell,i} \|^2 \nn \\
& \stackrel{(b)}{\le} 2 N \tau'^2 b_4^2 \cdot \nu^4
\end{align}
where step (a) is by using Jensen's inequality; and step (b) is by using \eqref{eqn:boundnorm2phiki1}.

\section{Proof of Theorem \ref{theorem:gapmeansquare}}
\label{app:boundgapmeansquare}
We rewrite the original error recursion \eqref{eqn:errorrecursiondefdelta} and the long-term model \eqref{eqn:errorrecursiondefapprox} respectively as follows:
\begin{align}
\label{eqn:atcasync1errorubar1org1}
\ubar{\wt{\bm{\psi}}}_{k,i} & = [I_{2M} - \bm{\mu}_k(i) H_k] \ubar{\wt{\bm{w}}}_{k,i-1} + \bm{\mu}_k(i) \ubar{\bm{v}}_{k,i}(\bm{w}_{k,i-1}) + \bm{d}_{k,i} \\
\label{eqn:atcasync2errorubar1org1}
\ubar{\wt{\bm{w}}}_{k,i} & = \sum_{\ell\in\bm{\Ncal}_{k,i}} \bm{a}_{\ell k}(i)\,\ubar{\wt{\bm{\psi}}}_{\ell,i}
\end{align}
and
\begin{align}
\label{eqn:atcasync1errorubar1refnew1}
\ubar{\wt{\bm{\psi}}}_{k,i}' & = [I_{2M} - \bm{\mu}_k(i) H_k] \ubar{\wt{\bm{w}}}_{k,i-1}' + \bm{\mu}_k(i) \ubar{\bm{v}}_{k,i}(\bm{w}_{k,i-1}) \\
\label{eqn:atcasync2errorubar1refnew1}
\ubar{\wt{\bm{w}}}_{k,i}' & = \sum_{\ell\in\bm{\Ncal}_{k,i}} \bm{a}_{\ell k}(i)\,\ubar{\wt{\bm{\psi}}}_{\ell,i}'
\end{align}
with the prime notation for quantities associated with the long-term  model \eqref{eqn:errorrecursiondefapprox}. From \eqref{eqn:atcasync2errorubar1org1} and \eqref{eqn:atcasync2errorubar1refnew1}, and using Jensen's inequality, the squared 2-norm of the difference between the two models is given by
\begin{align}
\label{eqn:meansquaregapw1}
\| \ubar{\wt{\bm{w}}}_{k,i} - \ubar{\wt{\bm{w}}}_{k,i}' \|^2 \le \sum_{\ell\in\bm{\Ncal}_{k,i}} \bm{a}_{\ell k}(i)\, \| \ubar{\wt{\bm{\psi}}}_{\ell,i} - \ubar{\wt{\bm{\psi}}}_{\ell,i}'  \|^2
\end{align}
Taking the expectation of both sides yields
\begin{align}
\E \| \ubar{\wt{\bm{w}}}_{k,i} - \ubar{\wt{\bm{w}}}_{k,i}' \|^2 & \le \sum_{\ell\in\Ncal_k} \bar{a}_{\ell k}\, \E \| \ubar{\wt{\bm{\psi}}}_{\ell,i} - \ubar{\wt{\bm{\psi}}}_{\ell,i}' \|^2 \nn \\
& \le \max_{\ell} \, \E \| \ubar{\wt{\bm{\psi}}}_{\ell,i} - \ubar{\wt{\bm{\psi}}}_{\ell,i}' \|^2
\end{align}
for all $k$. Then,
\be
\label{eqn:maxwgap2andmaxpsigap2}
\max_k \E \| \ubar{\wt{\bm{w}}}_{k,i} - \ubar{\wt{\bm{w}}}_{k,i}' \|^2  \le \max_{k} \, \E \| \ubar{\wt{\bm{\psi}}}_{k,i} - \ubar{\wt{\bm{\psi}}}_{k,i}' \|^2
\ee
From \eqref{eqn:atcasync1errorubar1org1} and \eqref{eqn:atcasync1errorubar1refnew1}, we have
\be
\ubar{\wt{\bm{\psi}}}_{k,i} - \ubar{\wt{\bm{\psi}}}_{k,i}' = [I_{2M} - \bm{\mu}_k(i) H_k] ( \ubar{\wt{\bm{w}}}_{k,i-1} - \ubar{\wt{\bm{w}}}_{k,i-1}' ) + \bm{d}_{k,i} 
\ee
Taking the expected squared 2-norm of both sides, we have
\begin{align}
\label{eqn:meansquaregappsi}
\E \| \ubar{\wt{\bm{\psi}}}_{k,i} - \ubar{\wt{\bm{\psi}}}_{k,i}' \|^2 & \le \E \| [I_{2M} - \bm{\mu}_k(i) H_k] ( \ubar{\wt{\bm{w}}}_{k,i-1} - \ubar{\wt{\bm{w}}}_{k,i-1}' ) + \bm{d}_{k,i} \|^2 \nn \\
& = \E \left\| (1 - t) \frac{I_{2M} - \bm{\mu}_k(i) H_k}{1 - t} ( \ubar{\wt{\bm{w}}}_{k,i-1} - \ubar{\wt{\bm{w}}}_{k,i-1}' ) + t \frac{\bm{d}_{k,i}}{t} \right\|^2 \nn \\
& \le (1 - t)^{-1} \E \| I_{2M} - \bm{\mu}_k(i) H_k \|^2 \cdot \E \| \ubar{\wt{\bm{w}}}_{k,i-1} - \ubar{\wt{\bm{w}}}_{k,i-1}' \|^2 + t^{-1} \E \| \bm{d}_{k,i} \|^2
\end{align}
for any $0 < t < 1$, where we used Jensen's inequality in the second inequality. By condition \eqref{eqn:boundstepsize4thorder}, it can be verified that
\begin{align}
\label{eqn:boundIminusmukiHknorm2}
\E \| I_{2M} - \bm{\mu}_k(i) H_k \|^2 & \le 1 - 2 \bar{\mu}_k^{(1)} \lambda_{k,\min} + \bar{\mu}_k^{(2)} \lambda_{k,\max}^2 \nn \\
& \stackrel{(a)}{\le} 1 - \bar{\mu}_k^{(1)} \lambda_{k,\min} \nn \\
& \le \left( 1 - \frac{1}{2} \bar{\mu}_k^{(1)} \lambda_{k,\min} \right)^2  < 1
\end{align}
where step (a) is from \eqref{I-eqn:stability4thorder2new} of Part I \cite{Zhao13TSPasync1}. Substituting $t = \frac{1}{2} \bar{\mu}_k^{(1)} \lambda_{k,\min} < 1$ and \eqref{eqn:boundIminusmukiHknorm2} into \eqref{eqn:meansquaregappsi} yields
\begin{align}
\label{eqn:meansquaregappsi0}
\E \| \ubar{\wt{\bm{\psi}}}_{k,i} - \ubar{\wt{\bm{\psi}}}_{k,i}' \|^2 & \le \left( 1 - \frac{ 1}{2} \bar{\mu}_k^{(1)} \lambda_{k,\min} \right) \E \| \ubar{\wt{\bm{w}}}_{k,i-1} - \ubar{\wt{\bm{w}}}_{k,i-1}' \|^2 + \frac{2}{\lambda_{k,\min} \bar{\mu}_k^{(1)}} \E \| \bm{d}_{k,i} \|^2
\end{align}
Using \eqref{eqn:Deltakiapprox2}, the second term on the RHS of \eqref{eqn:meansquaregappsi0} can be bounded for large enough $i$ by
\begin{align}
\label{eqn:boundnorm2phiki1new}
\frac{2}{\lambda_{k,\min} \bar{\mu}_k^{(1)}} \cdot \E \| \bm{d}_{k,i} \|^2 & \le \frac{2}{\lambda_{k,\min} \bar{\mu}_k^{(1)}} \cdot \bar{\mu}_k^{(2)} \frac{\tau_k'^2}{4} \cdot \E \|\ubar{\wt{\bm{w}}}_{k,i-1} \|^4 \nn \\
& = \frac{\tau_k'^2}{2 \lambda_{k,\min}} \cdot \frac{\bar{\mu}_k^{(2)}}{\bar{\mu}_k^{(1)}} \cdot \E \|\ubar{\wt{\bm{w}}}_{k,i-1} \|^4 \nn \\
& \le \frac{\tau_k'^2 \nu}{2 \lambda_{k,\min}} \cdot \E \|\ubar{\wt{\bm{w}}}_{k,i-1} \|^4
\end{align}
where we used the fact from \eqref{I-eqn:oldnulessnewnu} of Part I \cite{Zhao13TSPasync1} that $\nu \ge {\bar{\mu}_k^{(2)}}/{\bar{\mu}_k^{(1)}}$. Substituting \eqref{eqn:boundnorm2phiki1new} into \eqref{eqn:meansquaregappsi0} yields
\begin{align}
\label{eqn:meansquaregappsi1}
\E \| \ubar{\wt{\bm{\psi}}}_{k,i} - \ubar{\wt{\bm{\psi}}}_{k,i}' \|^2 & \le \left( 1 - \frac{1}{2} \bar{\mu}_k^{(1)} \lambda_{k,\min} \right) \E \| \ubar{\wt{\bm{w}}}_{k,i-1} - \ubar{\wt{\bm{w}}}_{k,i-1}' \|^2 + \frac{\tau_k'^2 \nu }{2 \lambda_{k,\min}} \E \|\ubar{\wt{\bm{w}}}_{k,i-1} \|^4 
\end{align}
Therefore,
\begin{align}
\label{eqn:meansquaregappsi3}
\max_k \E \| \ubar{\wt{\bm{\psi}}}_{k,i} - \ubar{\wt{\bm{\psi}}}_{k,i}' \|^2 & \le \max_k \left( 1 - \frac{1}{2} \bar{\mu}_k^{(1)} \lambda_{k,\min} \right) \max_k \E \| \ubar{\wt{\bm{w}}}_{k,i-1} - \ubar{\wt{\bm{w}}}_{k,i-1}' \|^2 \nn \\
& \qquad + \max_k \left[ \frac{\tau_k'^2 \nu }{2 \lambda_{k,\min}} \cdot \E \|\ubar{\wt{\bm{w}}}_{k,i-1} \|^4 \right]
\end{align}
Substituting \eqref{eqn:meansquaregappsi3} into \eqref{eqn:maxwgap2andmaxpsigap2} yields
\begin{align}
\label{eqn:meansquaregapw}
\max_k \E \| \ubar{\wt{\bm{w}}}_{k,i} - \ubar{\wt{\bm{w}}}_{k,i}' \|^2 & \le \gamma \cdot \max_k \E \| \ubar{\wt{\bm{w}}}_{k,i-1} - \ubar{\wt{\bm{w}}}_{k,i-1}' \|^2 + \frac{\tau'^2 \nu}{2 \min_k \lambda_{k,\min}} \cdot \max_k \E \|\ubar{\wt{\bm{w}}}_{k,i-1} \|^4 
\end{align}
where
\be
\label{eqn:gammadef}
\gamma \defeq \max_k \left( 1 - \frac{1}{2} \bar{\mu}_k^{(1)} \lambda_{k,\min} \right) = 1 - \frac{1}{2} \min_k \{ \bar{\mu}_k^{(1)} \cdot \lambda_{k,\min} \}
\ee
When condition \eqref{eqn:boundstepsize4thorder} holds, it can be verified by using \eqref{I-eqn:stability4thorder1new} from Part I \cite{Zhao13TSPasync1} that $| \gamma | < 1$. Then, we get from \eqref{eqn:meansquaregapw} that
\begin{align}
\limsup_{i\rightarrow\infty} \left[ \max_k \E \| \ubar{\wt{\bm{w}}}_{k,i} - \ubar{\wt{\bm{w}}}_{k,i}' \|^2 \right] & \le \frac{\tau'^2 \nu}{(1-\gamma) \cdot 2 \min_k \lambda_{k,\min} } \cdot \limsup_{i\rightarrow\infty} \left[ \max_k \E \|\ubar{\wt{\bm{w}}}_{k,i-1} \|^4 \right] \nn \\
& \le \frac{4 \tau'^2 b_4^2 \nu^3}{\min_k \bar{\mu}_k^{(1)} \cdot \min_k \lambda_{k,\min}^2 } \le O(\nu^2)
\end{align}
where we used Theorem \ref{I-theorem:4thmoments} from Part I \cite{Zhao13TSPasync1} and the fact from \eqref{I-eqn:boundmumoments2} of Part I \cite{Zhao13TSPasync1} that $\bar{\mu}_k^{(1)} = O(\nu)$.

\section{Block Operations}
\label{app:blockvecandkronecker}
Consider a block matrix $\Xcal$ of size $NM \times NM$ and partition it into $N \times N$ blocks where $X_{k \ell}$ denotes its $(k,\ell)$-th sub-matrix of size $M\times M$. The block vectorization of $\Xcal$ with block size $M\times M$ is defined as follows \cite{Koning91LAA}:
\begin{align}
\label{eqn:bvecdef}
\bvec(\Xcal) & \defeq \col\{ \vecm(X_{11}), \vecm(X_{21}), \dots, \vecm(X_{N1}), \dots, \vecm(X_{1N}), \vecm(X_{2N}), \dots, \vecm(X_{NN}) \}
\end{align}
Let $\Ycal$ denote another block matrix of size $NM \times NM$ and let $Y_{k \ell}$ denote its $(k,\ell)$-th sub-matrix of size $M\times M$. Then, the block Kronecker product of $\Xcal$ and $\Ycal$ with block size $M \times M$ is defined by \cite{Koning91LAA}:
\be 
\label{eqn:bkrondef1}
\Xcal \bkron \Ycal \defeq \begin{bmatrix}
Z_{11} & Z_{12} & \dots & Z_{1N} \\
Z_{21} & Z_{22} & \dots & Z_{2N} \\
\vdots & \vdots & \ddots & \vdots \\
Z_{N1} & Z_{N2} & \dots & Z_{NN} \\
\end{bmatrix}
\ee
where
\be
\label{eqn:bkrondef2}
Z_{k \ell} \defeq \begin{bmatrix}
X_{k \ell} \kron Y_{11}   &   X_{k \ell} \kron Y_{12}   &   \dots   &   X_{k \ell} \kron Y_{1N} \\
X_{k \ell} \kron Y_{21}   &   X_{k \ell} \kron Y_{22}   &  \dots   &   X_{k \ell} \kron Y_{2N} \\
\vdots   &   \vdots   &   \ddots   &   \vdots \\
X_{k \ell} \kron Y_{N1}   &   X_{k \ell} \kron Y_{N2}   &  \dots   &   X_{k \ell} \kron Y_{NN} \\
\end{bmatrix} 
\ee
For any matrices $\{X,Y,A,B\}$ of compatible dimensions and with blocks of size $M \times M$, it holds that
\be
\label{eqn:bkronandkron}
(X \kron A) \bkron (Y \kron B) = (X \kron Y) \kron (A \kron B)
\ee
where $\kron$ denotes the traditional Kronecker product operation. Other useful properties for the $\bkron$ operation can be found in \cite[pp.~176-179]{Koning91LAA} and are listed here for ease of reference:
\begin{align}
\label{eqn:bvecproduct} 
\bvec( \Acal \Bcal \Ccal) & = (\Ccal^\T \bkron \Acal) \cdot \bvec(\Bcal) \\
\label{eqn:bvecproduct_vec} 
\bvec(x y^{\T}) & = y \bkron x \\
\label{eqn:bvectrace} 
\Tr(\Acal \Bcal) & = [\bvec(\Acal^\T)]^\T \cdot \bvec(\Bcal) = [\bvec(\Acal^*)]^* \cdot \bvec(\Bcal) \\
\label{eqn:bvecmultiply}
(\Acal \Ccal) \bkron (\Bcal \Dcal) & = (\Acal \bkron \Bcal)(\Ccal \bkron \Dcal) \\
\label{eqn:bvecadd}
(\Acal + \Bcal) \bkron (\Ccal + \Dcal) & = \Acal \bkron \Ccal + \Bcal \bkron \Ccal + \Acal \bkron \Dcal + \Bcal \bkron \Dcal \\
\label{eqn:bvecHerm}
(\Acal \bkron \Bcal)^* & = \Acal^* \bkron \Bcal^* \\
\label{eqn:bvecSymm}
(\Acal \bkron \Bcal)^\T & = \Acal^\T \bkron \Bcal^\T
\end{align}
for any block matrices $\{\Acal, \Bcal, \Ccal, \Dcal\}$ and any block vectors $\{x, y\}$ with appropriate sizes.

\section{Proof of Theorem \ref{theorem:errorcovariancerecursion}}
\label{app:errorcovariancerecursion}
From the long-term model \eqref{eqn:errorrecursiondefapprox_new}, we obtain that
\begin{align}
\label{eqn:errorcov_app1}
\E(\ubar{\wt{\bm{w}}}_i' \ubar{\wt{\bm{w}}}_i'^* | \F_{i-1} ) & = \E(\bm{\Bcal}_i \ubar{\wt{\bm{w}}}_{i-1}' \ubar{\wt{\bm{w}}}_{i-1}'^* \bm{\Bcal}_i^* | \mbbF_{i-1} ) + \E(\ubar{\bm{s}}_i\ubar{\bm{s}}_i^* | \F_{i-1} )
\end{align}
where the cross terms that involve $\ubar{\bm{s}}_i$ disappear because $\E(\bm{\Bcal}_i \ubar{\wt{\bm{w}}}_{i-1}\ubar{\bm{s}}_i^* | \F_{i-1} ) = 0$ by \eqref{eqn:bigsidef} and \eqref{eqn:bigsmeandef}. Performing the block vectorization of block size $2M$ for both sides of \eqref{eqn:errorcov_app1}, and using \eqref{eqn:bvecproduct} and \eqref{eqn:bvecproduct_vec} yield
\begin{align}
\label{eqn:errorcov_app4}
\E[(\ubar{\wt{\bm{w}}}_i'^*)^\T \bkron \ubar{\wt{\bm{w}}}_i' | \F_{i-1} ] & = \E[ (\bm{\Bcal}_i^*)^\T \bkron \bm{\Bcal}_i]  [(\ubar{\wt{\bm{w}}}_{i-1}'^*)^\T \bkron \ubar{\wt{\bm{w}}}_{i-1}'] + \E[(\ubar{\bm{s}}_i^*)^\T \bkron \ubar{\bm{s}}_i | \F_{i-1} ]
\end{align}
Using \eqref{eqn:bigsidef}, \eqref{eqn:bvecmultiply}, and \eqref{eqn:bvecSymm}, the second term on the RHS of \eqref{eqn:errorcov_app4} can be expressed as
\begin{align}
\label{eqn:Ess}
\E[ (\ubar{\bm{s}}_i^*)^\T \bkron \ubar{\bm{s}}_i | \F_{i-1} ] & = (\bar{\Acal} \bkron \bar{\Acal} + \Ccal_A)^\T (\bar{\Mcal} \bkron \bar{\Mcal} + \Ccal_M) r_i( \bm{w}_{i-1} )
\end{align}
Substituting \eqref{eqn:bigFdef} and \eqref{eqn:Ess} into \eqref{eqn:errorcov_app4}, taking the expectation with respect to $\F_{i-1}$, and then using \eqref{eqn:bigyidef}, we arrive at the desired recursion \eqref{eqn:meansquareerrorrecursion}, namely,
\be
\label{eqn:errorrecursion2ndorderveccondexp3}
\E\,[(\ubar{\wt{\bm{w}}}_i'^*)^\T \bkron \ubar{\wt{\bm{w}}}_i'] = \Fcal^* \E\,[(\ubar{\wt{\bm{w}}}_{i-1}'^*)^\T \bkron \ubar{\wt{\bm{w}}}_{i-1}'] + y_i
\ee
From \eqref{eqn:bigFdef}, we know that $\Gcal$ in \eqref{eqn:bigGdef} is a factor of $\Fcal$. Hence, we use the following result to examine the stability of $\Fcal$.

\begin{lemma}[Properties of $\Gcal$]
\label{lemma:stablebigG}
The matrix $\Gcal$ in \eqref{eqn:bigGdef} satisfies the following properties:
\begin{enumerate}
\item Block diagonal and Hermitian matrix: it holds that
\begin{align}
\label{eqn:bigGpropertyBD1}
\Gcal & = \diag\{ G_{1}, G_2, \dots, G_N \} \\
\label{eqn:bigGpropertyBD2}
G_{\ell} & = \diag\{ G_{\ell,1}, G_{\ell,2}, \dots, G_{\ell,N}\}
\end{align}
where $G_{\ell}$ denotes the $\ell$-th block on the diagonal of $\Gcal$ with block size $4M^2N\times 4M^2N$ and $G_{\ell,k}$ denotes the $k$-th block on the diagonal of $G_\ell$ with block size $4M^2\times 4M^2$. The block $G_{\ell,k}$ is Hermitian and is given by
\be
\label{eqn:Glkdef}
G_{\ell,k} \defeq \bar{D}_\ell^\T \kron \bar{D}_k + c_{\mu,\ell,k} (H_\ell^\T \kron H_k)
\ee
where $\bar{D}_k$ is given by \eqref{eqn:Dkmeandef}.

\item Norms and spectral radius: it can be verified that
\be
\label{eqn:bigGpropertyNorms}
\rho(\Gcal) = \max_{k,m}\{ (1-\bar{\mu}_k\lambda_{k,m})^2 + c_{\mu,k,k} \lambda_{k,m}^2 \}
\ee
where $\lambda_{k,m}$ denotes the $m$-th eigenvalue of $H_k$, $m = 1,2,\dots, 2M$.

\item Stability: if condition \eqref{eqn:meansquarestabilitycond3} holds, then
\be
\label{eqn:bigGpropertystable}
\rho(\Gcal) < 1
\ee
\end{enumerate}
\end{lemma}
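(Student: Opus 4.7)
The plan is to proceed in three stages: unpack the definitions to expose a doubly block-diagonal structure, compute the eigenvalues of each inner block by exploiting the commutativity $\bar{D}_k H_k = H_k \bar{D}_k$, and then apply condition \eqref{eqn:meansquarestabilitycond3} to the resulting closed-form expression.

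For the structural part, starting from \eqref{eqn:bigGdef} and \eqref{eqn:bigCDdef} I would write $\Gcal = \bar{\Dcal}^\T \bkron \bar{\Dcal} + \Ccal_M(\Hcal^\T \bkron \Hcal)$. Because $\bar{\Dcal} = \diag\{\bar{D}_k\}$ and $\Hcal = \diag\{H_k\}$ are block diagonal, the definition \eqref{eqn:bkrondef1}--\eqref{eqn:bkrondef2} of $\bkron$ makes both block-Kronecker terms doubly block diagonal with inner blocks $\bar{D}_\ell^\T \kron \bar{D}_k$ and $H_\ell^\T \kron H_k$, respectively. Using \eqref{eqn:bigCMdef} to write $\Ccal_M = C_M \kron I_{4M^2}$, multiplication by $\Ccal_M$ preserves the doubly diagonal pattern and rescales the $(\ell,k)$ block by the scalar entry $c_{\mu,\ell,k}$ of $C_M$, yielding \eqref{eqn:Glkdef}. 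Hermiticity of each $G_{\ell,k}$ is immediate since $\bar{D}_k$ and $H_k$ are Hermitian, the step-size covariances $c_{\mu,\ell,k}$ are real, and $(A \kron B)^* = A^* \kron B^*$.

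For the spectral-radius formula, the doubly block-diagonal structure gives $\rho(\Gcal) = \max_{\ell,k} \rho(G_{\ell,k})$. In each inner block, $\bar{D}_\ell = I - \bar{\mu}_\ell H_\ell$ is a polynomial in $H_\ell$, so $\bar{D}_\ell^\T \kron \bar{D}_k$ and $H_\ell^\T \kron H_k$ share the common eigenbasis $\{\bar{u}_{\ell,m} \kron u_{k,n}\}_{m,n}$ built from eigenvectors of $H_\ell$ and $H_k$. Reading off eigenvalues gives the scalars $(1-\bar{\mu}_\ell \lambda_{\ell,m})(1-\bar{\mu}_k \lambda_{k,n}) + c_{\mu,\ell,k}\lambda_{\ell,m}\lambda_{k,n}$. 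Combining the Cauchy--Schwarz bound $|c_{\mu,\ell,k}| \le \sqrt{c_{\mu,\ell,\ell}\, c_{\mu,k,k}}$ with the elementary inequality $|xy + zw| \le \tfrac{1}{2}(x^2+z^2) + \tfrac{1}{2}(y^2+w^2)$ bounds the absolute value of each such eigenvalue by $\max_{k,m}\{(1-\bar{\mu}_k\lambda_{k,m})^2 + c_{\mu,k,k}\lambda_{k,m}^2\}$, and this bound is attained on the diagonal block $G_{k,k}$ by choosing $m = n$ at the maximizing index, establishing \eqref{eqn:bigGpropertyNorms}. Stability then follows by rewriting the resulting expression $< 1$, and using $\bar{\mu}_k^{(2)} = \bar{\mu}_k^2 + c_{\mu,k,k}$, as the scalar condition $\bar{\mu}_k^{(2)}/\bar{\mu}_k^{(1)} < 2/\lambda_{k,\max}$, which is guaranteed by \eqref{eqn:meansquarestabilitycond3} since $\lambda_{k,\min}/(\alpha + \lambda_{k,\max}^2) \le 1/\lambda_{k,\max} < 2/\lambda_{k,\max}$.

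I expect the main obstacle to lie in the spectral-radius step: the closed form in \eqref{eqn:bigGpropertyNorms} only exhibits the diagonal index pattern $\ell = k$ with $m = n$, whereas each $G_{\ell,k}$ carries $4M^2$ eigenvalues, so one must argue that neither the off-diagonal blocks ($\ell \neq k$) nor the cross eigenvalue indices ($m \neq n$) can dominate. The Cauchy--Schwarz/AM--GM combination handles both cases uniformly, but has to be applied carefully to recover equality in \eqref{eqn:bigGpropertyNorms} rather than merely an upper bound.
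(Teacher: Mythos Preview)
Your proposal is correct and follows essentially the same route as the paper's proof in Appendix~\ref{app:proofstablebigG}: establish the doubly block-diagonal form, compute the eigenvalues of each $G_{\ell,k}$ via simultaneous diagonalization, bound the off-diagonal eigenvalues by the diagonal ones, and then reduce stability to the scalar step-size condition. The one cosmetic difference is in how Cauchy--Schwarz enters Property~2: you bound $|c_{\mu,\ell,k}| \le \sqrt{c_{\mu,\ell,\ell}\,c_{\mu,k,k}}$ and then apply $|xy+zw| \le \tfrac12(x^2+z^2)+\tfrac12(y^2+w^2)$, whereas the paper recognizes each eigenvalue directly as the expectation $\E[(1-\bm{\mu}_\ell(i)\lambda_{\ell,n})(1-\bm{\mu}_k(i)\lambda_{k,m})]$ and applies the probabilistic Cauchy--Schwarz $|\E[XY]| \le \sqrt{\E X^2\,\E Y^2}$ in one stroke; both collapse to the same bound with equality at $\ell=k$, $m=n$.
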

\begin{IEEEproof}
See Appendix \ref{app:proofstablebigG}.
\end{IEEEproof}

Using the fact that $\Gcal$ is block diagonal and Hermitian, and that $\bar{A} \kron \bar{A} + C_A$ is block left-stochastic, result (153) from \cite[App. A]{Zhao12TSP} implies that
\be
\label{eqn:rhobigFandbigG}
\rho(\Fcal) \le \rho(\Gcal)
\ee
By \eqref{eqn:bigGpropertystable} and \eqref{eqn:rhobigFandbigG}, we conclude that $\Fcal$ is stable if condition \eqref{eqn:meansquarestabilitycond3} holds.

\section{Proof of Lemma \ref{lemma:stablebigG}}
\label{app:proofstablebigG}
The first property relating to the block diagonal and Hermitian structure of \eqref{eqn:bigGpropertyBD1}--\eqref{eqn:Glkdef} is established by using the definition of $\bkron$ and \eqref{eqn:bigGdef}. Because the matrix $\bm{\Dcal}_i$ is block diagonal with block size $2M\times 2M$, the block Kronecker product:
\be
\label{eqn:bigGidef}
\bm{\Gcal}_i \defeq \bm{\Dcal}_i^\T \bkron \bm{\Dcal}_i
\ee
is block diagonal with block size $4M^2N\times 4M^2N$ and each block is itself block diagonal with block size $4M^2\times 4M^2$. Let us denote the $\ell$-th block on the diagonal of $\bm{\Gcal}_i$ with block size $4M^2N \times 4M^2N$ by 
\be
\label{eqn:bigGlidef}
\bm{G}_{\ell,i} \defeq \bm{D}_{\ell,i}^\T \kron \bm{\Dcal}_i
\ee
and the $k$-th block on the diagonal of $\bm{G}_{\ell,i}$ with block size $4M^2 \times 4M^2$ by
\be
\label{eqn:bigGlkidef}
\bm{G}_{\ell,k,i} \defeq \bm{D}_{\ell,i}^\T \kron \bm{D}_{k,i}
\ee
where we used the fact that $\bm{D}_{\ell,i}$ is Hermitian. Then, we have
\begin{align}
\label{eqn:bigGiblock}
\bm{\Gcal}_i & = \diag\{ \bm{G}_{1,i}, \bm{G}_{2,i}, \dots, \bm{G}_{N,i} \} \\
\label{eqn:Gliblock}
\bm{G}_{\ell,i} & = \diag\{ \bm{G}_{\ell, 1,i}, \bm{G}_{\ell, 2,i}, \dots, \bm{G}_{\ell, N,i} \}
\end{align}
Using \eqref{eqn:bigGdef} and taking the expectation of both sides of \eqref{eqn:bigGiblock} and \eqref{eqn:Gliblock}, we get \eqref{eqn:bigGpropertyBD1} and \eqref{eqn:bigGpropertyBD2}
by identifying:
\be
\label{eqn:GlkandGlki}
G_\ell = \E[\bm{G}_{\ell,i}], \qquad G_{\ell,k} = \E[\bm{G}_{\ell,k,i}]
\ee
Equation \eqref{eqn:Glkdef} follows from \eqref{eqn:GlkandGlki}, \eqref{eqn:bigGlkidef}, \eqref{eqn:Dkidef}, and \eqref{eqn:Dkmeandef}. Since the matrices $\{G_{\ell,k}\}$ are all Hermitian, by \eqref{eqn:bigGpropertyBD1} and \eqref{eqn:bigGpropertyBD2}, the matrix $\Gcal$ is also Hermitian. 

The second property in \eqref{eqn:bigGpropertyNorms} is established by using the block diagonal and Hermitian properties of $\Gcal$ to readily conclude that $\rho(\Gcal) = \max_{\ell,k}\rho(G_{\ell,k})$. Furthermore, by \eqref{eqn:Glkdef}, the eigenvalues of $G_{\ell,k}$ are given by
\be
\label{eqn:Glkdefeig}
\lambda_{m,n}( G_{\ell,k} ) = (1-\bar{\mu}_\ell\lambda_{\ell,n})(1-\bar{\mu}_k \lambda_{k,m}) + c_{\mu,\ell,k} \lambda_{\ell, n}\lambda_{k,m}
\ee
for any $\ell,k = 1,2,\dots, N$, where $\lambda_{k,m}$ denotes the $m$-th eigenvalue of $H_k$ and $m,n = 1,2,\dots,2M$. It is straightforward to verify that
\be
\label{eqn:oneminusmucrosscov}
\lambda_{m,n}( G_{\ell,k} ) = \E [(1 - \bm{\mu}_\ell(i) \lambda_{\ell,n}) ( 1 -  \bm{\mu}_k(i) \lambda_{k,m} )]
\ee
Using Cauchy-Schwarz inequality, we get
\begin{align}
\label{eqn:eiginequ}
|\E [(1 - \bm{\mu}_\ell(i) \lambda_{\ell,n} ) ( 1-  \bm{\mu}_k(i) \lambda_{k,m} ) ]| & \le \sqrt{\E [(1 - \bm{\mu}_\ell(i) \lambda_{\ell,n} )^2 ] \cdot \E [ (1-  \bm{\mu}_k(i) \lambda_{k,m})^2 ] } \nn \\
& \le \max_{k,m}\{ \E[ (1-  \bm{\mu}_k(i) \lambda_{k,m} )^2 ] \}  \nn \\
& = \max_{k,m}\{ (1-\bar{\mu}_k\lambda_{k,m})^2 + c_{\mu,k,k} \lambda_{k,m}^2 \} 
\end{align}
where the first inequality becomes equality when $\ell = k$ and $n = m$. From \eqref{eqn:Glkdefeig}--\eqref{eqn:eiginequ} we get
\be
\label{eqn:eigbound}
|\lambda_{m,n}( G_{\ell,k} )| \le \max_{k,m}\{ (1-\bar{\mu}_k\lambda_{k,m})^2 + c_{\mu,k,k} \lambda_{k,m}^2 \} 
\ee
for any $\ell$, $k$, $m$, and $n$. Since the above inequality applies to \emph{all} eigenvalues of $G_{\ell,k}$, and since $G_{\ell,k}$ is Hermitian, we get
\be
\label{eqn:Glk2normbbound}
\rho(G_{\ell,k}) \le \max_{k,m}\{ (1-\bar{\mu}_k\lambda_{k,m})^2 + c_{\mu,k,k} \lambda_{k,m}^2 \} 
\ee
Furthermore, from \eqref{eqn:Glkdefeig} we know that 
\be
\lambda_{m,m}( G_{k,k} ) = (1-\bar{\mu}_k\lambda_{k,m})^2 + c_{\mu,k,k} \lambda_{k,m}^2
\ee
so that equality in \eqref{eqn:Glk2normbbound} is achievable for some $k$ and $m$.

For the third property in \eqref{eqn:bigGpropertystable}, we introduce the quadratic function 
\be
\label{eqn:funcfdef}
f(x) \defeq (1-\bar{\mu}_k x)^2 + c_{\mu,k,k} x^2
\ee
with $x\in[\lambda_{k,\min}, \lambda_{k,\max}]$. It is easy to verify that $f(x)$ achieves its maximum value at either one of its boundaries:
\begin{align}
\label{eqn:fxmax}
f(x) & \le \max\{ f(\lambda_{k,\min}), f(\lambda_{k,\max}) \} \nn \\
& \le 1 - 2\bar{\mu}_k\lambda_{k,\min} + ( \bar{\mu}_k^2 + c_{\mu,k,k}) \lambda_{k,\max}^2
\end{align}
From Assumption \ref{I-asm:boundedHessian} in Part I \cite{Zhao13TSPasync1} we have $\lambda_{k,\min} \le \lambda_{k,m} \le \lambda_{k,\max}$ for any $k$ and $m$. We then deduce from \eqref{eqn:fxmax} that
\be
\label{eqn:flambdakmbound}
f(\lambda_{k,m}) \le 1 - 2\bar{\mu}_k\lambda_{k,\min} + ( \bar{\mu}_k^2 + c_{\mu,k,k}) \lambda_{k,\max}^2
\ee
for any $k$ and $m$. Using \eqref{eqn:bigGpropertyNorms}, \eqref{eqn:funcfdef}, and \eqref{eqn:flambdakmbound}, we get
\begin{align}
\label{eqn:bigFstable2}
\rho(\Gcal) & = \max_{k,m} \{ f(\lambda_{k,m}) \} \nn \\
& \le \max_{k}\{ f(\lambda_{k,\min}), f(\lambda_{k,\max}) \} \nn \\
& < \max_{k}\{ f(\lambda_{k,\min}), f(\lambda_{k,\max}) \} + \alpha ( \bar{\mu}_k^2 + c_{\mu,k,k})   
\end{align}
where $\alpha > 0$ by Assumption \ref{asm:gradientnoisestronger}. 
When condition \eqref{eqn:meansquarestabilitycond3} holds, using \eqref{I-eqn:meansquarestabilitycond2} from Part I \cite{Zhao13TSPasync1}, we have
\be
\label{eqn:fminmaxinequality1}
\max_{k}\{ 1 - 2\bar{\mu}_k\lambda_{k,\min} + ( \bar{\mu}_k^2 + c_{\mu,k,k}) ( \lambda_{k,\max}^2 + \alpha) \} < 1
\ee
Therefore, by \eqref{eqn:bigFstable2} and \eqref{eqn:fminmaxinequality1}, if condition \eqref{eqn:meansquarestabilitycond3} holds, then $\rho(\Gcal) < 1$, which completes the proof.

\section{Proof of Lemma \ref{lemma:primitiveA}}
\label{app:primitive}
From Lemma \ref{I-lemma:leftstochastic} in Part I \cite{Zhao13TSPasync1} we know that the matrices $\bar{A} \kron \bar{A} + C_A$ and $\bar{A}$ are both left-stochastic. To establish the desired result, we only need to show that the matrix $\bar{A} \kron \bar{A}$ is primitive if $\bar{A} \kron \bar{A} + C_A$ is primitive. This is because if $\bar{A} \kron \bar{A}$ is primitive, then for some finite  positive integer $j > 0$, the matrix $(\bar{A} \kron \bar{A})^j$ has strictly positive entries. Since $(\bar{A} \kron \bar{A})^j = \bar{A}^j \kron \bar{A}^j$ and $\bar{A}$ has nonnegative entries, $\bar{A}^j$ must have strictly positive entries. Therefore, $\bar{A}$ is primitive.

In order to prove that the matrix $\bar{A} \kron \bar{A}$ is primitive if $\bar{A} \kron \bar{A} + C_A$ is primitive, we first introduce the following concept. 

\begin{definition}[Comparing sparsity] 
\label{def:sparsity}
For any two matrices $\{A, B\}$ with nonnegative entries and of the same size, the matrix $A$ is called \emph{sparser} than $B$, or, equivalently, $B$ is called \emph{denser} than $A$, if, and only if, $[B]_{\ell k} > 0$ whenever $[A]_{\ell k} > 0$ for any $k$ and $\ell$. \hfill \IEEEQED
\end{definition}

It is straightforward to verify the following three useful properties related to Definition \ref{def:sparsity}.

\begin{lemma}[Denser product]
\label{lemma:denserproduct}
For any $M \times N$ matrices $\{A, B\}$ and any $N \times P$ matrices $\{C, D\}$ all with nonnegative entries, if $B$ is denser than $A$ and $D$ is denser than $C$, then $BD$ is denser than $AC$. \hfill \IEEEQED
\end{lemma}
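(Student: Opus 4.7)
The plan is to unwind the definition of ``denser than'' entrywise and exploit the fact that all matrices involved have nonnegative entries, so cancellation cannot occur in matrix products.

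First I would fix arbitrary indices $i \in \{1,\dots,M\}$ and $j \in \{1,\dots,P\}$, and assume $[AC]_{ij} > 0$; the goal is to derive $[BD]_{ij} > 0$. Writing out the product,
\begin{equation*}
[AC]_{ij} = \sum_{k=1}^{N} [A]_{ik}\,[C]_{kj},
\end{equation*}
every summand is nonnegative by hypothesis, so the strict positivity of $[AC]_{ij}$ forces the existence of at least one index $k^{\star} \in \{1,\dots,N\}$ with $[A]_{i k^{\star}} > 0$ and $[C]_{k^{\star} j} > 0$ simultaneously.

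Next I would invoke the density hypotheses separately on $A,B$ and on $C,D$. Since $B$ is denser than $A$, the inequality $[A]_{i k^{\star}} > 0$ forces $[B]_{i k^{\star}} > 0$; likewise $[C]_{k^{\star} j} > 0$ forces $[D]_{k^{\star} j} > 0$. Therefore the single term $[B]_{i k^{\star}}\,[D]_{k^{\star} j}$ inside the sum
\begin{equation*}
[BD]_{ij} = \sum_{k=1}^{N} [B]_{ik}\,[D]_{kj}
\end{equation*}
is strictly positive. Combined with the fact that the remaining summands are nonnegative (again because all entries of $B$ and $D$ are nonnegative), this gives $[BD]_{ij} \ge [B]_{i k^{\star}}\,[D]_{k^{\star} j} > 0$. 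Since $i,j$ were arbitrary, $BD$ is denser than $AC$ by Definition \ref{def:sparsity}.

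There is really no substantial obstacle here; the only thing to be careful about is that nonnegativity is essential, for without it the individual positive term $[B]_{i k^{\star}}[D]_{k^{\star} j}$ could in principle be cancelled by negative contributions from other indices $k \neq k^{\star}$. Both the nonnegativity hypothesis and the ``witness index'' argument use the assumptions in full, so the proof should be a short, self-contained paragraph.
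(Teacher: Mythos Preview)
Your proof is correct; the paper itself does not supply any argument for this lemma (it is stated with a \IEEEQED and prefaced only by ``It is straightforward to verify\ldots''), so your direct entrywise witness-index argument is exactly the intended verification.
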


\begin{lemma}[Denser Kronecker product]
\label{lemma:denserKronecker}
For any $M \times N$ matrices $\{A, B\}$ and any $P \times Q$ matrices $\{C, D\}$ all with nonnegative entries, if $B$ is denser than $A$ and $D$ is denser than $C$, then $B \kron D$ is denser than $A \kron C$. \hfill \IEEEQED
\end{lemma}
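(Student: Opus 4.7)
The plan is to argue entry-by-entry using the definition of the Kronecker product. Recall that $A \kron C$ is the $MP \times NQ$ matrix whose entry at block position $(i,j)$ and intra-block position $(k,\ell)$ equals $[A]_{ij}\,[C]_{k\ell}$. Equivalently, every entry of $A \kron C$ is the product of exactly one entry of $A$ and one entry of $C$. Since all four matrices have nonnegative entries, positivity of such a product entry is equivalent to the simultaneous positivity of the corresponding entries in $A$ and in $C$.

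First I would fix an arbitrary pair of indices $(i,j)$ and $(k,\ell)$ with $1 \le i \le M$, $1 \le j \le N$, $1 \le k \le P$, $1 \le \ell \le Q$, and consider the corresponding entry of $A \kron C$. Assuming $[A \kron C]_{(i,k),(j,\ell)} > 0$, I would read off that $[A]_{ij} > 0$ and $[C]_{k\ell} > 0$. Then I would invoke Definition \ref{def:sparsity} twice: since $B$ is denser than $A$, the first inequality forces $[B]_{ij} > 0$; since $D$ is denser than $C$, the second forces $[D]_{k\ell} > 0$. Multiplying these two positive numbers yields $[B \kron D]_{(i,k),(j,\ell)} = [B]_{ij}\,[D]_{k\ell} > 0$. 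Because $(i,j,k,\ell)$ was arbitrary, this establishes the required implication at every index, so $B \kron D$ is denser than $A \kron C$ in the sense of Definition \ref{def:sparsity}.

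There is no real obstacle here: the result is an immediate consequence of the Kronecker product being a pointwise product of entries, combined with the fact that the ``denser than'' relation is defined entry-wise on matrices with nonnegative entries. The only care needed is in unwinding the double indexing of Kronecker entries, which I would make explicit via the formula $[A \kron C]_{(i,k),(j,\ell)} = [A]_{ij}[C]_{k\ell}$ before performing the implication chain. This short argument is essentially the Kronecker-product analogue of Lemma \ref{lemma:denserproduct}, which handles the ordinary matrix product case, and no further machinery is required.
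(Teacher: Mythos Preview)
Your argument is correct and is precisely the kind of entry-by-entry verification the paper has in mind: the paper does not actually spell out a proof but marks the lemma as ``straightforward to verify'' immediately after Definition~\ref{def:sparsity}, and your unpacking of $[A \kron C]_{(i,k),(j,\ell)} = [A]_{ij}[C]_{k\ell}$ followed by two applications of the denser-than definition is exactly that verification.
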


\begin{lemma}[Sum is not denser]
\label{lemma:densersum}
For any set of $M \times N$ matrices $\{A_i\}$ with nonnegative entries, where $i \in \Ical$ and $\Ical$ is an index set (which can be uncountable), if there exists an $M \times N$ matrix $B$ with nonnegative entries such that $B$ is denser than every $A_i$, $i\in\Ical$, and assuming that the sum $S \defeq \sum_{i\in\Ical}A_i$ exists, then $B$ is also denser than $S$. \hfill \IEEEQED
\end{lemma}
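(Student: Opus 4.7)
The argument is entirely entry-wise and exploits only nonnegativity, so the work reduces to a scalar statement about an index set. The plan is to fix an arbitrary index pair $(\ell, k)$ with $[S]_{\ell k} > 0$ and deduce $[B]_{\ell k} > 0$ from it; since $(\ell,k)$ is arbitrary, this will establish by Definition \ref{def:sparsity} that $B$ is denser than $S$.

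First I would observe that, because every $A_i$ has nonnegative entries and the sum $S$ is assumed to exist, the scalar identity $[S]_{\ell k} = \sum_{i \in \Ical} [A_i]_{\ell k}$ is a well-defined sum of nonnegative reals. Then I would invoke the elementary fact that a sum of nonnegative reals is strictly positive only if at least one summand is strictly positive; thus there exists some $i_0 \in \Ical$ with $[A_{i_0}]_{\ell k} > 0$.

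Once such an index $i_0$ is located, the hypothesis that $B$ is denser than \emph{every} $A_i$ can be specialized to $i = i_0$, which immediately gives $[B]_{\ell k} > 0$, completing the proof. The only point that demands a moment's care is the interpretation of the sum $S$ when $\Ical$ is uncountable; this is handled by the standard observation that a summable family of nonnegative reals has at most countably many nonzero terms, so the statement ``some summand is positive'' is unambiguous regardless of the cardinality of $\Ical$. I therefore anticipate no real obstacle: the result is essentially a one-line observation about nonnegative sums, with the uncountable index set adding only a bookkeeping remark.
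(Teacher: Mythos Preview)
Your proposal is correct; the paper itself does not supply a proof of this lemma at all, merely stating that the three sparsity properties are ``straightforward to verify'' and appending the \IEEEQED{} to the statement. The entry-wise argument you outline is exactly the natural verification the authors have in mind, and your remark about uncountable index sets is a harmless refinement beyond what the paper bothers to note.
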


Now, from Lemma \ref{I-lemma:neighborhoods} in Part I \cite{Zhao13TSPasync1}, we know that $\bar{A}$ is denser than any realization of $\bm{A}_i$, say, $\bm{A}_i(\omega)$ where $\omega\in\Omega$ and $\Omega$ is the sample space of $\bm{A}_i$. Using Lemma \ref{lemma:denserKronecker}, we get that $\bar{A} \kron \bar{A}$ is denser than any $\bm{A}_i(\omega) \kron \bm{A}_i(\omega)$. Using Lemma \ref{lemma:densersum} and the fact that the probability measures only take nonnegative values, we  get that $\bar{A} \kron \bar{A}$ is denser than $\bar{A} \kron \bar{A} + C_A = \E\,[\bm{A}_i \kron \bm{A}_i]$. If $\bar{A} \kron \bar{A} + C_A$ is primitive, then there exists a finite positive integer $j>0$ such that $(\bar{A} \kron \bar{A} + C_A)^j$ has strictly positive entries. Using Lemma \ref{lemma:denserproduct}, we know that $(\bar{A} \kron \bar{A})^j$ must be denser than $(\bar{A} \kron \bar{A} + C_A)^j$. Therefore, $(\bar{A} \kron \bar{A})^j$ must also have strictly positive entries, which means that $\bar{A} \kron \bar{A}$ must be primitive.

\section{Proof of Lemma \ref{lemma:Ppandp}}
\label{app:Ppandp}
We first show that $P_p = P_p^\T$, or equivalently,
\be
\label{eqn:pcond1}
p = \vecm(P_p^\T)
\ee

\begin{lemma}[{Vec-permutation matrix}]
\label{lemma:vecpermutation}
The $N^2\times N^2$ vec-permutation matrix $\Pi$ is a matrix whose columns are formed from the basis vectors in $\mbbR^{N^2}$ and it satisfies:
\be
\vecm(A) = \Pi \cdot \vecm(A^\T)
\ee
for any $N  \times  N$ matrix $A$. Then, for any $N  \times  N$ matrices $\{A,B\}$, 
\be
\label{eqn:AkronBPiBkronAPi}
A \kron B = \Pi (B \kron A) \Pi
\ee
In addition, $\Pi = \Pi^\T = \Pi^* = \Pi^{-1}$.
\end{lemma}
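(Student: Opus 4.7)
The plan is to define $\Pi$ explicitly via its action on the Kronecker basis, verify the three claims by direct substitution on that basis, and finally obtain the Kronecker product identity by evaluating both sides on an arbitrary vectorization $\vecm(Z)$ using the standard $\vecm$--$\kron$ rule $\vecm(XYZ) = (Z^\T \kron X)\,\vecm(Y)$ that the paper already uses.

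First I would unpack the defining relation. Writing $e_i$ for the $i$-th standard basis vector in $\mbbR^N$, linearity in $A$ reduces the identity $\vecm(A) = \Pi \cdot \vecm(A^\T)$ to the case $A = E_{ij} \defeq e_i e_j^\T$. Since $\vecm(E_{ij}) = e_j \kron e_i$ and $\vecm(E_{ij}^\T) = e_i \kron e_j$, the requirement becomes $\Pi\,(e_i \kron e_j) = e_j \kron e_i$ for all $i,j$. This uniquely determines $\Pi$, and in closed form one can write $\Pi = \sum_{i,j=1}^{N} (e_j e_i^\T) \kron (e_i e_j^\T)$, which is manifestly a matrix whose columns are standard basis vectors of $\mbbR^{N^2}$, as stipulated.

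The equalities $\Pi = \Pi^\T = \Pi^* = \Pi^{-1}$ then follow almost immediately. The swap map $e_i \kron e_j \mapsto e_j \kron e_i$ is an involution, so $\Pi^2 = I$, hence $\Pi^{-1} = \Pi$. As a real permutation matrix, $\Pi^\T = \Pi^{-1}$, and since all entries are real, $\Pi^* = \Pi^\T$. Combining these gives $\Pi = \Pi^\T = \Pi^* = \Pi^{-1}$.

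For the commutation identity $A \kron B = \Pi (B \kron A) \Pi$, I would show the two sides agree on every $\vecm(Z)$, $Z\in\mbbR^{N\times N}$. The left side gives $(A \kron B)\vecm(Z) = \vecm(B Z A^\T)$ by the standard rule. For the right side, apply $\Pi$ first to obtain $\Pi\,\vecm(Z) = \vecm(Z^\T)$ using the already-established identity; next apply the standard rule again to get $(B \kron A)\vecm(Z^\T) = \vecm(A Z^\T B^\T)$; finally apply $\Pi$ once more to obtain $\Pi\,\vecm(A Z^\T B^\T) = \vecm((A Z^\T B^\T)^\T) = \vecm(B Z A^\T)$. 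Since this matches the left side for every $Z$, equality of the matrices follows.

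The main obstacle is really just bookkeeping: correctly tracking which index is being permuted by $\Pi$ and using $\vecm(A^\T)$ versus $\vecm(A)$ in the right positions. There is no deep analytic difficulty because the identity is purely combinatorial, but one must be careful that the vec-permutation matrix acts on the row/column ordering consistently with the convention used throughout the paper (column-stacking for $\vecm$).
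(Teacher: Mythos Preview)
Your argument is correct and self-contained. The definition of $\Pi$ via its action on the Kronecker basis $e_i\kron e_j\mapsto e_j\kron e_i$ is exactly the standard construction, and your verification of $\Pi=\Pi^\T=\Pi^*=\Pi^{-1}$ from the involution property together with the fact that $\Pi$ is a real permutation matrix is clean. The proof of $A\kron B=\Pi(B\kron A)\Pi$ by evaluating both sides on $\vecm(Z)$ and using $\vecm(XYZ)=(Z^\T\kron X)\vecm(Y)$ is correct; the only place to be careful is the direction of the defining relation, and you have it right: from $\vecm(A)=\Pi\,\vecm(A^\T)$ one substitutes $A\to Z^\T$ to get $\Pi\,\vecm(Z)=\vecm(Z^\T)$, which is what you use.

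By way of comparison, the paper does not actually supply a proof of this lemma at all; its ``proof'' consists solely of citations to the tables in Brewer (1978) and to equations in Henderson and Searle (1981), where these facts are recorded. So your proposal is not merely a different route but an actual derivation where the paper only points to the literature. What you gain is a self-contained treatment that requires no outside lookup; what the citation buys is brevity and an acknowledgment that these are classical identities. Either is acceptable for a result of this kind.
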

\begin{IEEEproof}
See \cite[Tabs. I and II]{Brewer78TCAS} \cite[Eqs. (5) and (6)]{Henderson81LMA}.
\end{IEEEproof}

Let $\Pi$ be the permutation matrix that satisfies
\be
\label{eqn:Pidef}
\vecm(P_p^\T) = \Pi \cdot \vecm(P_p)
\ee
From \eqref{eqn:Ppdef} and \eqref{eqn:Pidef}, proving \eqref{eqn:pcond1} is equivalent to proving
\be
\label{eqn:pcond2}
p =  \Pi \cdot p
\ee
To establish \eqref{eqn:pcond2}, we only need to show that $\Pi \cdot p$ is the Perron eigenvector of $\bar{A} \kron \bar{A} + C_A$. In that case, we can obtain \eqref{eqn:pcond2} directly from the uniqueness of the Perron eigenvector, which is $p$. Thus, note that
\begin{align}
\label{eqn:symmetric_AkA}
\Pi (\bar{A} \kron \bar{A} + C_A) \Pi & = \Pi [\E(\bm{A}_i \kron \bm{A}_i] \Pi \nn \\
& \stackrel{(a)}{=} \E\,(\bm{A}_i \kron \bm{A}_i) \nn \\
& = \bar{A} \kron \bar{A} + C_A
\end{align}
where step (a) is by \eqref{eqn:AkronBPiBkronAPi}. Then, we deduce from \eqref{eqn:pdef} that
\begin{align}
\label{eqn:Pip1}
& \Pi \cdot p = \Pi (\bar{A} \kron \bar{A} + C_A) p = (\bar{A} \kron \bar{A} + C_A) ( \Pi \cdot p) \\
\label{eqn:Pip2}
& \one_{N^2}^\T \cdot \Pi \cdot p = \one_{N^2}^\T \cdot p = 1
\end{align}
where we used the fact that $\Pi^2 = I_{N^2}$ by Lemma \ref{lemma:vecpermutation} and $\Pi \cdot \one_{N^2} = \one_{N^2}$. Results \eqref{eqn:Pip1} and \eqref{eqn:Pip2} establish that $\Pi \cdot p$ is the Perron eigenvector of $\bar{A} \kron \bar{A} + C_A$ and proves \eqref{eqn:pcond2}.

We next establish that $P_p$ is positive semi-definite. Note that for any vector $x\in\mbbR^{N}$:
\be
\label{eqn:xPpx2}
x^\T P_p x = \vecm(x^\T P_p x) = \frac{1}{N^2} (x^\T \kron x^\T) p \cdot \one_{N^2}^\T \one_{N^2}
\ee
by using \eqref{eqn:Ppdef} and the fact that $\one_{N^2}^\T \one_{N^2} = N^2$. Since $\bar{A} \kron \bar{A} + C_A = \E(\bm{A}_j \kron \bm{A}_j)$, we can introduce a series of fictitious random combination matrices $\{\bm{A}_j'; j\ge1\}$ such that they are mutually-independent and satisfy $\E(\bm{A}_j' \kron \bm{A}_j') = \bar{A} \kron \bar{A} + C_A$ for any $j\ge1$. Let $\bm{\Phi}_i \defeq \prod_{j=1}^i \bm{A}_j'$ for any $i\ge1$. Then, 
\begin{align}
\label{eqn:PhikronPhi}
\lim_{i\rightarrow\infty} \E(\bm{\Phi}_i \kron \bm{\Phi}_i) & \stackrel{(a)}{=} \lim_{i\rightarrow\infty} \prod_{j=1}^i \E(\bm{A}_j' \kron \bm{A}_j' ) \nn \\
& = \lim_{i\rightarrow\infty} (\bar{A} \kron \bar{A} + C_A)^i \nn \\
& \stackrel{(b)}{=} p \cdot \one_{N^2}^\T
\end{align}
where step (a) is by using the fact that the $\{\bm{A}_j'\}$ are mutually-independent, and step (b) is by using the Perron-Frobenius Theorem \cite{BermanPF}. Substituting \eqref{eqn:PhikronPhi} into \eqref{eqn:xPpx2} and using the fact that $\one_{N^2} = \one_N \kron \one_N$, we get
\be
\label{eqn:xPpx3}
x^\T P_p x = \frac{1}{N^2} \lim_{i\rightarrow\infty} \E\,[(x^\T \bm{\Phi}_i \one_N )^2] \ge 0
\ee
which shows that $P_p$ is positive semi-definite.

Now we show that $P_p \one_N  =  \bar{p}$. Note from \eqref{eqn:Ppdef} and \eqref{eqn:pdef} that
\be
\label{eqn:Ppandpbarexpand}
P_p = \E\,[\bm{A}_i \cdot P_p \cdot \bm{A}_i^\T]
\ee
by switching the order of $\unvecm(\cdot)$ and $\E(\cdot)$ and applying $\unvecm(\cdot)$ to the identity $\vecm(ABC) = (C^\T \kron A) \cdot \vecm(B)$. Furthermore, we get from \eqref{eqn:Ppandpbarexpand} that
\be
P_p \cdot \one_N = \E\,(\bm{A}_i P_p \bm{A}_i^\T) \cdot \one_N = \bar{A} (P_p \cdot \one_N)
\ee
which implies that the vector $P_p \cdot \one_N$ is the Perron eigenvector of $\bar{A}$, which is $\bar{p}$. Because the Perron eigenvector is unique, by \eqref{eqn:barpdef}, equation $P_p \cdot \one_N = \bar{p}$ must hold.

\section{Proof of Lemma \ref{lemma:spectralF}}
\label{app:spectralF}
We first establish that $F$ is stable if condition \eqref{eqn:meansquarestabilitycond3} is satisfied. From \eqref{eqn:Glkdef} and \eqref{eqn:Fdef}, we get
\be
\label{eqn:FequalsumGlk}
F = \sum_{\ell=1}^{N}\sum_{k=1}^{N} p_{\ell,k} G_{\ell,k}
\ee
By \eqref{eqn:pdef} and \eqref{eqn:Ppdef}, the elements $\{p_{\ell,k}\}$ of $P_p$ satisfy:
\be
\label{eqn:plkconvex}
\sum_{\ell=1}^{N}\sum_{k=1}^{N} p_{\ell,k} = 1, \quad \mbox{and} \quad p_{\ell,k} > 0
\ee
Then, in terms of the $2$-induced norm, we have
\begin{align}
\label{eqn:Fstable}
\| F \| \stackrel{(a)}{\le} \sum_{k=1}^{N} \sum_{\ell=1}^{N} p_{\ell,k} \| G_{\ell,k} \| \stackrel{(b)}{\le} \max_{k,\ell} \| G_{\ell,k} \| \stackrel{(c)}{=} \rho(\Gcal)
\end{align}
where step (a) is from the triangle inequality of norms; step (b) is by using \eqref{eqn:plkconvex}; and step (c) is by \eqref{eqn:bigGpropertyNorms}. Using \eqref{eqn:bigGpropertystable}, \eqref{eqn:Fstable}, and the fact that $\rho(F) = \| F \|$ for the Hermitian matrix $F$, we conclude that matrix $F$ is stable if condition \eqref{eqn:meansquarestabilitycond3} holds. 

We now establish expression \eqref{eqn:rhoFapprox}. Introduce the Hermitian matrix
\be
\label{eqn:Fpdef}
F' \defeq \sum_{k = 1}^{N}\sum_{\ell = 1}^{N} \bar{p}_{\ell} \bar{p}_{k} ( \bar{D}_\ell^\T\kron \bar{D}_k )
\ee
From \eqref{eqn:Dkmeandef}, we can rewrite $F'$ as $F' = (I_{2M} - H)^\T \kron (I_{2M} - H)$, where we used \eqref{eqn:barpdef} and \eqref{eqn:Hdef}. Therefore, the eigenvalues of $F'$ are equal to the products of any two of the eigenvalues of $I_{2M} - H$, which are given by $1 - \lambda(H)$. Since $\{\bar{p}_k\}$ and $\{\bar{\mu}_k\}$ are all positive and $\{ H_k \}$ are all positive definite, it is easy to verify that $H$ in \eqref{eqn:Hdef} is also positive definite. Then, from \eqref{eqn:Hdef}, \eqref{eqn:barpdef}, and \eqref{eqn:Hkdef}, and using \eqref{I-eqn:boundseigHessian} from Part I \cite{Zhao13TSPasync1} as well as Jensen's inequality \cite{Boyd04}, we get
\be
\label{eqn:Heigmax}
0 < \lambda(H) \le \| H \| \le \max_{k}\{ \bar{\mu}_k \lambda_{k,\max} \}
\ee
for all eigenvalues of $H$. When condition \eqref{eqn:meansquarestabilitycond3} holds, we get from \eqref{I-eqn:boundmumoments0} of Part I \cite{Zhao13TSPasync1} that
\be
\bar{\mu}_k \le \frac{\bar{\mu}_k^{(2)}}{\bar{\mu}_k} <  \frac{\lambda_{k,\min}}{\lambda_{k,\max}^2 + \alpha} < \frac{1}{\lambda_{k,\max}}
\ee
for any $k$. This implies that $\max_{k} \{ \bar{\mu}_k\lambda_{k,\max} \} < 1$ and therefore, $0 < \lambda(H) < 1$ for all eigenvalues of $H$. From \eqref{eqn:Hdef} and \eqref{eqn:bigMbound}, we obtain
\be
\label{eqn:boundlamdbaH1}
0 < \lambda(H) = O(\nu) < 1
\ee
for any eigenvalue of $H$. Therefore, we get 
\be
\label{eqn:rhoIminusH}
\lambda(I_{2M} - H) = 1 - O(\nu), \;\; 
\rho(I_{2M} - H) = 1 - \lambda_{\min}(H)
\ee
where $\lambda_{\min}(\cdot)$ denotes the smallest eigenvalue of its Hermitian matrix argument. We further get from \eqref{eqn:rhoIminusH} that
\be
\label{eqn:lambdaFpbound}
\lambda(F') = 1 - O(\nu), \quad 
\rho(F') = [1 - \lambda_{\min}(H)]^2
\ee
Using Lemma \ref{lemma:Ppandp}, \eqref{eqn:pdef}, \eqref{eqn:barpdef}, and \eqref{eqn:Dkmeandef}, the difference between $F$ in \eqref{eqn:Fdef} and $F'$ in \eqref{eqn:Fpdef} is given by
\be
F - F' = \sum_{k=1}^{N} \sum_{\ell=1}^{N} \{ [ (p_{\ell,k} - \bar{p}_{\ell} \bar{p}_k) \bar{\mu}_\ell \bar{\mu}_k + p_{\ell,k} c_{\mu,\ell,k}] (H_\ell^\T \kron H_k) \} 
\ee
which is also Hermitian. From \eqref{I-eqn:boundmumoments2} in Part I \cite{Zhao13TSPasync1}, we get
\begin{align}
\label{eqn:bigMbound}
\bar{\mu}_k & \equiv \bar{\mu}_k^{(1)} \le \nu \\
\label{eqn:bigCMbound1}
c_{\mu,k,k} & \le \bar{\mu}_k^{(2)} \le \nu^2 \\
\label{eqn:bigCMbound2}
|c_{\mu,\ell,k}| & \le \sqrt{c_{\mu,\ell,\ell} \cdot c_{\mu,k,k}} \le \nu^2
\end{align}
where \eqref{eqn:bigCMbound2} is by using the Cauchy-Schwartz inequality. By \eqref{eqn:bigMbound}--\eqref{eqn:bigCMbound2}, we get $\| F - F' \| = O(\nu^2)$. Using a corollary of the Wielandt-Hoffman theorem \cite[Corollary~8.1.6, p.~396]{Golub96}, we then conclude that
\be
\label{eqn:pairedeigenvalues}
| \lambda_m(F) - \lambda_m(F') | \le \| F - F' \| = O(\nu^2)
\ee
where $\lambda_m(\cdot)$ denotes the $m$-th eigenvalue of its Hermitian matrix argument; the eigenvalues are assumed to be ordered from largest to smallest in each case. Result \eqref{eqn:pairedeigenvalues} implies that for every eigenvalue of $F'$ there is an eigenvalue of $F$ that is $O(\nu^2)$ close to it. From \eqref{eqn:pairedeigenvalues} and \eqref{eqn:lambdaFpbound} we immediately deduce that
\be
\label{eqn:lambdaFclosetolambdaFp}
\lambda_m(F) = 1 - O(\nu), \quad 
\rho(F) = \rho(F') + O(\nu^2)
\ee
where $\rho(F')$ from \eqref{eqn:lambdaFpbound} dominates the $O(\nu^2)$ term.

\section{Proof of Lemma \ref{lemma:lowrank}}
\label{app:lowrank}
We first establish \eqref{eqn:lowrankapproxIminusF}. Introduce the Jordan decomposition:
\be
\label{eqn:AACAeig}
\bar{A} \kron \bar{A} + C_A \defeq P J Q^\T = \begin{bmatrix}
p & P'
\end{bmatrix} \begin{bmatrix}
1 & 0 \\
0 & J' \\
\end{bmatrix} \begin{bmatrix}
\one_{N^2} & Q'
\end{bmatrix}^\T
\ee
where $J$ is the Jordan canonical form of $\bar{A} \kron \bar{A} + C_A$ and $J'$ is a sub-matrix of $J$ containing its stable eigenvalues, $P'$ and $Q'$ are sub-matrices of $P$ and $Q$, and $P^{-1} = Q^\T$. Then, the Jordan decomposition of $\bar{\Acal} \bkron \bar{\Acal} + \Ccal_A$ is given by
\be
\label{eqn:bigAAEVD}
\bar{\Acal} \bkron \bar{\Acal} + \Ccal_A = \Pcal \Jcal \Qcal^\T = \begin{bmatrix}
\Pcal_1 & \Pcal'
\end{bmatrix} \begin{bmatrix}
I_{4M^2} & 0 \\
0 & \Jcal' \\
\end{bmatrix} \begin{bmatrix}
\Qcal_1 & \Qcal'
\end{bmatrix}^\T
\ee
where
\begin{alignat}{3}
\label{eqn:bigPdef}
\Pcal & \defeq P \kron I_{4M^2}, & \qquad \Pcal' & \defeq P' \kron I_{4M^2} \\
\label{eqn:bigJdef}
\Jcal & \defeq J \kron I_{4M^2}, & \qquad \Jcal' & \defeq J' \kron I_{4M^2} \\
\label{eqn:bigQdef}
\Qcal & \defeq Q \kron I_{4M^2}, & \qquad \Qcal' & \defeq Q' \kron I_{4M^2} \\
\label{eqn:pbarandqbardef}
\Pcal_1 & \defeq p \kron I_{4M^2}, & \qquad \Qcal_1 & \defeq \one_{N^2} \kron I_{4M^2}
\end{alignat}
Let
\be
\label{eqn:bigXdef}
\Xcal \defeq I_{4M^2N^2} - \Gcal
\ee
where $\Gcal$ is given by \eqref{eqn:bigGdef}. Then, by \eqref{eqn:bigFdef},
\begin{align}
\label{eqn:QFP}
\Qcal^\T\Fcal\Pcal & = \Qcal^\T [ \Gcal \cdot (\bar{\Acal} \bkron \bar{\Acal} + \Ccal_A) ] \Pcal \nn \\
& = \Qcal^\T (I_{4M^2N^2} - \Xcal) (\Pcal \Jcal \Qcal^\T) \Pcal \nn \\
& = \Jcal - \Qcal^\T \Xcal \Pcal \Jcal \nn \\
& = \begin{bmatrix}
I_{4M^2} - \Qcal_1^\T \Xcal \Pcal_1 & -\Qcal_1^\T \Xcal \Pcal' \Jcal' \\
-\Qcal'^\T \Xcal \Pcal_1 & \Jcal' - \Qcal'^\T \Xcal \Pcal' \Jcal' \\
\end{bmatrix}
\end{align}
From \eqref{eqn:QFP}, we further get
\be
\label{eqn:QIFP1}
( I_{4M^2N^2}  -  \Qcal^\T\Fcal\Pcal )^{-1} = \begin{bmatrix}
\Qcal_1^\T \Xcal \Pcal_1  & \Qcal_1^\T \Xcal \Pcal' \Jcal' \\
\Qcal'^\T \Xcal \Pcal_1  & I   -   \Jcal'   +   \Qcal'^\T \Xcal \Pcal' \Jcal' \\
\end{bmatrix}^{-1}
\ee
where the $I$ denotes the $4M^2(N^2-1)\times 4M^2(N^2-1)$ identity matrix. The quantity $\Qcal_1^\T \Xcal \Pcal_1$ in \eqref{eqn:QFP} can be expressed as
\begin{align}
\label{eqn:qXpexpression}
\Qcal_1^\T \Xcal \Pcal_1 & \stackrel{(a)}{=} \Qcal_1^\T \cdot \Pcal_1 - \Qcal_1^\T \Gcal \Pcal_1 \nn \\
& \stackrel{(b)}{=} (\one_{N^2}^\T p) \kron I_{4M^2} - (\one_{N^2}^\T \kron I_{4M^2}) ( \diag\{ G_{\ell,k} \} ) (p \kron I_{4M^2}) \nn \\
& \stackrel{(c)}{=} I_{4M^2} - F
\end{align}
where step (a) is by \eqref{eqn:bigXdef}; step (b) is by \eqref{eqn:bigGpropertyBD1}--\eqref{eqn:bigGpropertyBD2}; and step (c) is by \eqref{eqn:FequalsumGlk}. We already know that the matrices $F$ and $\Fcal$ are stable for sufficiently small step-sizes. Thus, the matrices $I_{4M^2N^2}  -  \Fcal$ and $I_{4M^2}  -  F$ are invertible. It follows that the quantity  $\Qcal_1^\T \Xcal \Pcal_1$ is invertible. Moreover, the Schur complement with respect to $\Qcal_1^\T \Xcal \Pcal_1$ in \eqref{eqn:QIFP1} is also invertible. Let us denote the inverse of this Schur complement by
\begin{align}
\label{eqn:Deltadef}
\Delta & \defeq [I-\Jcal' + \Qcal'^\T \Xcal \Pcal' \Jcal'
 - \Qcal'^\T \Xcal \Pcal_1 (\Qcal_1^\T \Xcal \Pcal_1)^{-1} \Qcal_1^\T \Xcal \Pcal' \Jcal' ]^{-1}
\end{align}
Then, by using a formula for the inversion of block matrices \cite[Eq.~(7), p.~48]{Laub05}, equality \eqref{eqn:QIFP1} can be expressed as
\begin{align}
\label{eqn:QIFP2}
(I_{4M^2N^2}  -  \Qcal^\T \Fcal \Pcal)^{-1} & = \begin{bmatrix}
(\Qcal_1^\T \Xcal \Pcal_1)^{-1}  +  \Delta'    &    -(\Qcal_1^\T \Xcal \Pcal_1)^{-1} \Qcal_1^\T \Xcal \Pcal' \Jcal' \Delta \\
-\Delta \Qcal'^\T \Xcal \Pcal_1 ( \Qcal_1^\T \Xcal \Pcal_1)^{-1}     &    \Delta \\
\end{bmatrix}     
\end{align}
where
\be
\label{eqn:Deltaprimedef}
\Delta' \defeq (\Qcal_1^\T \Xcal \Pcal_1)^{-1} \Qcal_1^\T \Xcal \Pcal' \Jcal'  \Delta \Qcal'^\T \Xcal \Pcal_1 (\Qcal_1^\T \Xcal \Pcal_1)^{-1}
\ee
Now, from \eqref{eqn:qXpexpression}, \eqref{eqn:Fdef}, \eqref{eqn:Dkmeandef}, \eqref{eqn:plkandpklandbarpk}, and \eqref{eqn:Hdef}, we can also write
\begin{align}
\label{eqn:qXpexpression2}
\Qcal_1^\T \Xcal \Pcal_1 & = H \kron I_{2M} + I_{2M} \kron H  - \sum_{\ell,k=1}^{N} p_{\ell,k} (\bar{\mu}_\ell \bar{\mu}_k + c_{\mu,\ell,k})(H_\ell^\T \kron H_k)
\end{align}
It follows from \eqref{eqn:qXpexpression2} and \eqref{eqn:bigMbound}--\eqref{eqn:bigCMbound2} that $\Qcal_1^\T \Xcal \Pcal_1$ is Hermitian and
\be
\label{eqn:invqXporder}
\| \Qcal_1^\T \Xcal \Pcal_1 \| = O(\nu), \quad \| (\Qcal_1^\T \Xcal \Pcal_1)^{-1} \| = O(\nu^{-1})
\ee
Likewise, from \eqref{eqn:bigMmeandef} and \eqref{eqn:bigMbound}--\eqref{eqn:bigCMbound2}, we get that
\be
\label{eqn:bigMandbigCMorder}
\| \bar{\Mcal} \| = O(\nu), \qquad \| \Ccal_M \| = O(\nu^2)
\ee
and from \eqref{eqn:bigXdef}, \eqref{eqn:bigGdef}, \eqref{eqn:bigDmeandef}, and \eqref{eqn:bigCDdef}, we further get 
\begin{align}
\label{eqn:bigXorder}
\| \Xcal \| & = \| (\bar{\Mcal}\Hcal)^\T \bkron I_{2MN} + I_{2MN} \bkron (\bar{\Mcal}\Hcal) + O(\nu^2) \|  = O(\nu)
\end{align}
since matrix $\Hcal$ is constant and independent of $\nu$. Furthermore, it follows from \eqref{eqn:bigXorder} that
\begin{align}
\label{eqn:Deltaparts}
\| \Qcal'^\T \Xcal \Pcal' \Jcal' \| = O(\nu), \quad \| \Qcal'^\T \Xcal \Pcal_1 \| = O(\nu), \quad
\| \Qcal_1^\T \Xcal \Pcal' \Jcal'\| = O(\nu)
\end{align}
From \eqref{eqn:bigAAEVD}, matrix $I - \Jcal'$ is invertiable and is independent of $\nu$. Therefore,
\be
\label{eqn:IminusbigJpandinverse}
\| \Jcal' \| = O(1), \quad \| I - \Jcal'\| = O(1), \quad \| (I - \Jcal')^{-1} \| = O(1)
\ee
Then, by \eqref{eqn:Deltadef}, \eqref{eqn:Deltaparts}, and \eqref{eqn:IminusbigJpandinverse}, we get
\be
\label{eqn:Deltaorder}
\| \Delta \| = \| (I - \Jcal' + O(\nu))^{-1} \| = O(1)
\ee
By \eqref{eqn:Deltaprimedef}, \eqref{eqn:invqXporder}, \eqref{eqn:Deltaparts}, and \eqref{eqn:Deltaorder}, we further get
\begin{align}
\label{eqn:Deltaporder}
\| \Delta' \| = O(1), \quad \| (\Qcal_1^\T \Xcal \Pcal_1)^{-1} \Qcal_1^\T \Xcal \Pcal' \Jcal' \Delta \| = O(1), \quad \| \Delta \Qcal'^\T \Xcal \Pcal_1 ( \Qcal_1^\T \Xcal \Pcal_1)^{-1} \| = O(1) 
\end{align}
Using \eqref{eqn:Deltaporder} and Assumption \ref{asm:smallstepsizes}, we get from \eqref{eqn:QIFP2} that
\be
\label{eqn:IminusFcalinv}
(I_{4M^2N^2} - \Qcal^\T\Fcal\Pcal)^{-1} = \begin{bmatrix}
(\Qcal_1^\T \Xcal \Pcal_1)^{-1}  & 0 \\
0  & 0 \\
\end{bmatrix} + O(1)
\ee
Then, 
\begin{align}
\label{eqn:IminusFcalinv1}
(I_{4M^2N^2}   -   \Fcal)^{-1} & \stackrel{(a)}{=} \Pcal\begin{bmatrix}
(\Qcal_1^\T \Xcal \Pcal_1)^{-1}  & 0 \\
0  & 0 \\
\end{bmatrix}\Qcal^\T + O(1) \nn \\
& \stackrel{(b)}{=} \Pcal_1(\Qcal_1^\T \Xcal \Pcal_1)^{-1}\Qcal_1^\T + O(1) \nn \\
& \stackrel{(c)}{=} (p \one_{N^2}^\T) \kron (I_{4M^2}   -   F)^{-1}   +   O(1)  
\end{align}
where step (a) is by using the fact that $\Pcal^{-1} = \Qcal^\T$; step (b) is by using the block division in \eqref{eqn:bigAAEVD}; step (c) is by using \eqref{eqn:pbarandqbardef}; and by \eqref{eqn:qXpexpression} and  \eqref{eqn:invqXporder},
\begin{align}
\label{eqn:lowrankapproxIminusF1}
\| (I_{4M^2} - F)^{-1} \| = O(\nu^{-1})
\end{align}
Under Assumption \ref{asm:smallstepsizes}, the parameter $\nu  \ll  1$. Therefore, $\nu^{-1}  \gg  1$ and $(p \one_{N^2}^\T) \kron (I_{4M^2}  -  F)^{-1}$ dominates the $O(1)$ term in \eqref{eqn:IminusFcalinv1}.

Finally, we establish result \eqref{eqn:rhobigFapprox}. Let
\be
\label{eqn:bigFnew}
\Fcal_s \defeq \Qcal^\T \Fcal \Pcal = \begin{bmatrix}
F & O(\nu) \\
O(\nu) & \Jcal' + O(\nu) \\
\end{bmatrix}
\ee
by using \eqref{eqn:QFP}, \eqref{eqn:qXpexpression}, \eqref{eqn:Deltaparts}, and \eqref{eqn:IminusbigJpandinverse}. Since $\Fcal_s$ is similar to $\Fcal$, they have the same eigenvalues \cite{Laub05}. Since $F$ is Hermitian, let us introduce its eigenvalue decomposition as
\be
\label{eqn:Feigdef}
F = U \Lambda U^*
\ee
where $U$ is a $4M^2\times 4M^2$ unitary matrix and $\Lambda$ is a $4M^2\times 4M^2$ diagonal matrix. The $(N^2-1)\times(N^2-1)$ matrix $J'$, which contains the stable eigenvalues of $\bar{A} \kron \bar{A} + C_A$ in \eqref{eqn:AACAeig}, can be generally expressed as
\be
\label{eqn:Jpdef}
J' = \begin{bmatrix}
\lambda_{a,2} &        & T'             \\
              & \ddots &               \\
0             &        & \lambda_{a,N^2} \\
\end{bmatrix}
\ee
where $\{\lambda_{a,n}\}$ are the eigenvalues of $\bar{A} \kron \bar{A} + C_A$ with $\lambda_{a,1} = 1$ and $|\lambda_{a,n}| < 1$ for all $n = 2,3,\dots, N^2$. In \eqref{eqn:Jpdef}, the elements in the strictly upper triangular region $T'$ are either 1 or 0, which depend on the Jordan blocks in $J'$. Using \eqref{eqn:Jpdef} and \eqref{eqn:bigJdef}, we can express the $(2,2)$ block in \eqref{eqn:bigFnew} as
\be
\label{eqn:bigJpdef}
\Jcal'  +  O(\nu)  =  \begin{bmatrix}
\lambda_{a,2}I_{4M^2}  +  O(\nu)   &            &   \Tcal'  +  O(\nu) \\
                &   \ddots   &                 \\
O(\nu)               &            &   \lambda_{a,N^2}I_{4M^2}  +  O(\nu)  \\
\end{bmatrix}  
\ee
where the elements in the strictly upper triangular region $\Tcal'$ are either 1 or 0, which depend on the elements of $T'$ in \eqref{eqn:Jpdef}. We now apply a similarity transformation to $\Fcal$ by multiplying
\be
\Dcal \defeq \diag\{ \nu^{\epsilon} U, \nu^{2\epsilon} I_{4M^2}, \nu^{3\epsilon}I_{4M^2}, \dots, \nu^{N^2\epsilon}I_{4M^2} \}
\ee
and its inverse $\Dcal^{-1}$ on either side of \eqref{eqn:bigFnew}, where $\epsilon = 1/N^2$. Using \eqref{eqn:bigFnew} and \eqref{eqn:bigJpdef}, we end up with
\begin{align}
\label{eqn:bigFnew1}
\Dcal^{-1} \Fcal_s \Dcal  d& = \left[\begin{array}{c|c}
 \Lambda               &   O(\nu^{1+\epsilon}) \\
\hline
 O(\nu^{\epsilon})   &    \begin{array}{ccc}
\lambda_{a,2}I_{4M^2}   +   O(\nu)    &         &    O(\nu^{\epsilon}) \\
   &    \ddots        &       \\
O(\nu^{\epsilon})    &        &    \lambda_{a,N^2}I_{4M^2}   +  O(\nu)    \\
\end{array} \\
\end{array}  
\right]
\end{align}
From \eqref{eqn:bigFnew1}, we know that all off-diagonal entries of $\Dcal^{-1} \Fcal_s \Dcal$ are \emph{at least} of the order of $\nu^\epsilon$. Therefore, using Gershgorin Theorem \cite[p.~320]{Golub96} under Assumption \ref{asm:smallstepsizes}, and since $\Fcal$ and $\Fcal_s$ have the same eigenvalues due to similarity, we get
\be
\label{eqn:rhobigFandrhobigYclose}
| \lambda(\Fcal)  -  \lambda(F) | \le O(\nu^{1+\epsilon}) \;\; \mbox{or} \;\;
| \lambda(\Fcal)  -  \lambda_{a,k} | \le O(\nu^{\epsilon})
\ee
where $\lambda(\Fcal)$ denotes the eigenvalue of $\Fcal$ and $k = 2, 3, \dots, N^2$. Result \eqref{eqn:rhobigFandrhobigYclose} implies that the eigenvalues of $\Fcal$ are either located in the Gershgorin circles that are centered at the eigenvalues of $F$ with radii $O(\nu^{1+\epsilon})$ or in the Gershgorin circles that are centered at $\{\lambda_{a,k}; k=2,3,\dots,N^2\}$ with radii $O(\nu^{\epsilon})$. From \eqref{eqn:lambdaFclosetolambdaFp}, we have
\be
\label{eqn:rhoFandlambdamaxF}
\rho(F) = 1 - O(\nu) < 1
\ee
By Assumption \ref{asm:connected} and Perron-Frobenius Theorem \cite{BermanPF}, we have
\be
\label{eqn:rhorelation2}
\rho(J') \defeq \max_{k=2,3,\dots,N^2} | \lambda_{a,k} | < 1
\ee
By Assumption \ref{asm:smallstepsizes}, if the parameter $\nu$ is small enough to satisfy
\be
O(\nu^\epsilon) + O(\nu) < 1 - \rho(J')
\ee
such that the inequality
\be
\label{eqn:lambdaFgerhoJp}
\rho(J') + O(\nu^\epsilon) < 1 - O(\nu) = \rho(F)
\ee
holds, then the Gershgorin circles centered at the eigenvalues of $F$ are isolated from those centered at $\{\lambda_{a,k}; k=2,3,\dots,N^2\}$. According to Gershgorin Theorem \cite[p.~181]{Stewart90}, there are precisely $4M^2$ eigenvalues of $\Fcal$ satisfying
\be
\label{eqn:rhobigFclosetolambdaF}
| \lambda(\Fcal) - \lambda(F) | \le O(\nu^{1+\epsilon})
\ee
while all the other eigenvalues satisfy
\be
\label{eqn:rhobigFclosetolambdaak}
| \lambda(\Fcal) - \lambda_{a,k} | \le O(\nu^{\epsilon}), \quad k = 2, 3, \dots, N^2
\ee
By \eqref{eqn:lambdaFgerhoJp}, the eigenvalues $\lambda(\Fcal)$ satisfying \eqref{eqn:rhobigFclosetolambdaF} are greater than those satisfying \eqref{eqn:rhobigFclosetolambdaak} in magnitude. Furthermore, when $\nu$ is sufficiently small, the Gershgorin circles centered at $\lambda_{\max}(F)$ with radius $O(\nu^{1+\epsilon})$ will become disjoint from the other circles (see Fig. \ref{fig:eigenvalues}). Then, by using \eqref{eqn:rhoFandlambdamaxF} and Gershgorin Theorem again, we conclude from \eqref{eqn:rhobigFclosetolambdaF} that
\be
\rho(\Fcal) = \rho(F) + O(\nu^{1+\epsilon})
\ee
It is worth noting that from \eqref{eqn:rhoFapprox} we get
\be
\rho(\Fcal) = 1 - O(\nu) + O(\nu^{1+\epsilon}) < 1
\ee
for $\nu \ll 1$ because $\epsilon = 1/N^2 > 1$.

\begin{figure}
\includegraphics[scale=0.7]{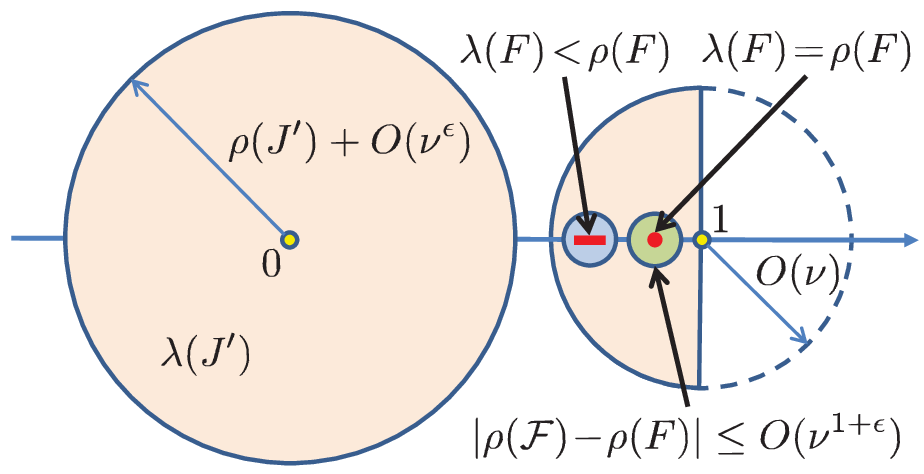}
\centering
\caption{An illustration of the locations of the eigenvalues of $\Fcal$. The eigenvalues of $J'$ are all in the left big circle, so the eigenvalues of $\Fcal$ satisfying \eqref{eqn:rhobigFclosetolambdaak} are also in the left big circle. The eigenvalues of $F$ are all in the right big circle, so the eigenvalues of $\Fcal$ satisfying \eqref{eqn:rhobigFclosetolambdaF} are also in the right big circle. Specifically, the eigenvalues of $F$ with $\lambda(F) < \rho(F)$ are all on the red segment on the horizontal line, so the eigenvalues of $\Fcal$ that satisfy \eqref{eqn:rhobigFclosetolambdaF} are all in the small blue circle on the left; the eigenvalues of $F$ with $\lambda(F) = \rho(F)$ are on the red dot on the horizontal line, so the eigenvalues of $\Fcal$ that satisfy \eqref{eqn:rhobigFclosetolambdaF} are all in the small green circle on the right.}
\label{fig:eigenvalues}
\vspace{-1\baselineskip}
\end{figure}

\section{Proof of Theorem \ref{theorem:networkcovariance}}
\label{app:networkcovariance}
From \eqref{eqn:bigAAEVD} and \eqref{eqn:IminusFcalinv1}, we first have
\begin{align}
\label{eqn:part1approx}
(\bar{\Acal} \bkron \bar{\Acal} + \Ccal_A) (I_{4M^2N^2} - \Fcal)^{-1} & = \begin{bmatrix}
\Pcal_1  &  \Pcal'
\end{bmatrix} \begin{bmatrix}
I_{4M^2}  &  0 \\
0  &  \Jcal' \\
\end{bmatrix} \begin{bmatrix}
(\Qcal_1^\T \Xcal \Pcal_1)^{-1}   &  0 \\
0  &  0 \\
\end{bmatrix} \begin{bmatrix}
\Qcal_1  &  \Qcal'
\end{bmatrix}^\T  +  O(1)\nn \\
& = \Pcal_1 (\Qcal_1^\T \Xcal \Pcal_1)^{-1} \Qcal_1^\T + O(1)
\end{align}
Since $\Pcal_1 (\Qcal_1^\T \Xcal \Pcal_1)^{-1} \Qcal_1^\T  =  O(\nu^{-1})$ by \eqref{eqn:invqXporder}, the first term on the RHS of \eqref{eqn:part1approx} dominates the second term under Assumption \ref{asm:smallstepsizes}. By \eqref{eqn:bigMmeandef}, \eqref{eqn:bigRdef}, \eqref{eqn:bigCMdef}, and \eqref{eqn:bvecdef}, we get
\begin{align}
\label{eqn:vecRequ}
\Pcal_1^\T (\bar{\Mcal} \bkron \bar{\Mcal} + \Ccal_M) \bvec(\Rcal) & = (p^\T \kron I_{4M^2}) [(\bar{M} \kron \bar{M} + C_M) \kron I_{4M^2}] \bvec(\Rcal) \nn \\
& = \vecm(R)
\end{align}
where $R$ is defined by \eqref{eqn:Rdef} and is of the order of $\nu^2$. We then get a low-rank expression for $z$ in \eqref{eqn:bigzdef}:
\begin{align}
\label{eqn:bigzapprox}
z & \stackrel{(a)}{=} [\Pcal_1 (\Qcal_1^\T \Xcal \Pcal_1)^{-1} \Qcal_1^\T + O(1)]^* (\bar{\Mcal} \bkron \bar{\Mcal} + \Ccal_M) \bvec(\Rcal) \nn \\
& \stackrel{(b)}{=} \Qcal_1 [ (\Qcal_1^\T \Xcal \Pcal_1)^{-1} ]^* \Pcal_1^\T (\bar{\Mcal} \bkron \bar{\Mcal}  +  \Ccal_M) \bvec(\Rcal)  +  O(\nu^2) \nn \\
& \stackrel{(c)}{=} \Qcal_1 [(I_{4M^2} - F)^{-1}]^* \vecm(R) + O(\nu^2) \nn \\
& \stackrel{(d)}{=} \one_{N^2} \kron [(I_{4M^2} - F)^{-1} \vecm(R) ] + O(\nu^2)
\end{align}
where step (a) is by \eqref{eqn:part1approx}; step (b) is by \eqref{eqn:boundMMCM}; step (c) is by \eqref{eqn:qXpexpression} and \eqref{eqn:vecRequ}; and step (d) is by \eqref{eqn:pbarandqbardef} and the fact that $F$ is Hermitian. The first term on the RHS of \eqref{eqn:bigzapprox} is dominant due to  \eqref{eqn:lowrankapproxIminusF1} and \eqref{eqn:orderofRandZ}. Applying $\unbvec(\cdot)$ to both sides of \eqref{eqn:bigzapprox} and using \eqref{eqn:Zdef} yields
\be
\label{eqn:bigZapprox}
\unbvec( z )  = (\one_N \one_N^\T) \kron Z + O(\nu^2)
\ee
where $Z$ is given by \eqref{eqn:Zdef} and is of the order of $\nu$ by \eqref{eqn:orderofRandZ}. From \eqref{eqn:bigzidef} and \eqref{eqn:sinftydef}, we know that $\unbvec(z_\infty)$ is the steady-state covariance matrix of $\ubar{\wt{\bm{w}}}_i'$. Using \eqref{eqn:zinftyandz} and \eqref{eqn:bigZapprox}, the steady-state covariance matrix of $\ubar{\wt{\bm{w}}}_i'$ can be approximated by
\begin{align}
\label{eqn:approximatesteadystateCov0}
\lim_{i\rightarrow\infty} \E \ubar{\wt{\bm{w}}}_i' \ubar{\wt{\bm{w}}}_i'^* & = \unbvec(z_\infty) \nn \\
& = (\one_N \one_N^\T) \kron Z + O(\nu^{1 + \min\{ 2, \gamma_v\} /2 }) \nn \\
& = (\one_N \one_N^\T) \kron Z + O(\nu^{1 + \gamma_o' })
\end{align}
where $\gamma_o'$ is given by \eqref{eqn:gammaopdef}, and the first term on the RHS is dominant due to \eqref{eqn:orderofRandZ}.

\section{Proof of Lemma \ref{lemma:propertiesZ}}
\label{app:propertiesZ}
From Lemma \ref{lemma:spectralF} and Assumption \ref{asm:smallstepsizes}, we know that matrix $F$ is stable. From \eqref{eqn:Zdef}, we get
\be
\label{eqn:Zseriesexpression}
Z  =  \unvecm\left( \sum_{j=0}^{\infty} F^j \vecm(R) \right)  =  \sum_{j=0}^{\infty} \unvecm\left( F^j \vecm(R) \right)  
\ee
Let
\be
\label{eqn:Rjdef}
R^{(j)} \defeq \unvecm\left( F^j \vecm(R) \right), \qquad j\ge0
\ee
where $R^{(0)} = R$. Then, since $F$ is stable, the $2M\times 2M$ matrix sequence $\{R^{(j)}; j\ge0\}$ converges to zero. Substituting \eqref{eqn:Rjdef} into \eqref{eqn:Zseriesexpression} yields
\be
\label{eqn:ZandRj}
Z = \sum_{j=0}^{\infty} R^{(j)}
\ee

\begin{lemma}[{Condition for complex-Hessian-type matrices}]
\label{lemma:complexhessian}
A sufficient and necessary condition for any $2M\times 2M$ positive semi-definite matrix $H$ to be a complex-Hessian-type matrix in Definition \ref{def:complexHessian} is to require $L H^\T L = H$, where 
\be
L \defeq \begin{bmatrix}
0 & I_M \\
I_M & 0 \\
\end{bmatrix}
\ee
satisfies $L = L^\T = L^{-1}$.
\end{lemma}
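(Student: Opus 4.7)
The proof is essentially a direct block-matrix computation, so the plan is short and the work is verifying that the identity $L H^\T L = H$ encodes exactly the block symmetry structure of a complex-Hessian-type matrix.

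First, I would observe that $L$ itself trivially satisfies $L = L^\T$ (since the off-diagonal identity blocks are symmetric in placement) and $L^2 = I_{2M}$ (by direct multiplication), so $L = L^{-1}$; this justifies the parenthetical remark in the statement. Next, because $H$ is positive semi-definite, it is Hermitian, so I may partition it as
\begin{equation*}
H = \begin{bmatrix} A & B \\ B^* & D \end{bmatrix}
\end{equation*}
with $A = A^*$, $D = D^*$, and $A, D \succeq 0$ (the diagonal blocks of a PSD matrix are PSD).

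The main step is to compute $L H^\T L$ explicitly. Using the partition above,
\begin{equation*}
H^\T = \begin{bmatrix} A^\T & (B^*)^\T \\ B^\T & D^\T \end{bmatrix},
\qquad
L H^\T L = \begin{bmatrix} D^\T & B^\T \\ (B^*)^\T & A^\T \end{bmatrix},
\end{equation*}
which follows from the fact that left-multiplication by $L$ swaps block rows and right-multiplication by $L$ swaps block columns. Setting this equal to $H$ block-by-block yields precisely the conditions $D = A^\T$, $B = B^\T$, and $B^* = (B^*)^\T$; the last two conditions are equivalent and say that $B$ is symmetric. Identifying $X \defeq A$ and $Y \defeq B$, these are exactly the requirements in Definition \ref{def:complexHessian}: $X$ is Hermitian PSD, $Y$ is symmetric, and $H = \begin{bmatrix} X & Y \\ Y^* & X^\T \end{bmatrix}$. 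This proves necessity.

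For sufficiency I would reverse the computation: starting from any PSD matrix of the complex-Hessian-type form in \eqref{eqn:complexhessian_structure}, the same block expansion of $L H^\T L$ directly returns $H$, since $X^{\T\T} = X$, $Y^\T$ is symmetric when $Y$ is symmetric, and $(Y^*)^\T = Y^{\T *} = Y^*$. There is no real obstacle here; the only thing to watch is to keep track of the distinction between transpose and conjugate transpose on the off-diagonal blocks, since conflating them would make the identity appear to follow from $LHL = H$ rather than from $LH^\T L = H$, and the former in fact fails for generic complex-Hessian-type matrices.
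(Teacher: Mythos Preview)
Your proposal is correct and follows essentially the same approach as the paper: partition $H$ into $M\times M$ blocks, compute $L H^\T L$ by swapping block rows and columns, and identify the resulting block equalities with the complex-Hessian-type structure. Your write-up is in fact slightly more thorough, since you explicitly verify $L=L^\T=L^{-1}$, separate the two directions, and flag the transpose-versus-conjugate-transpose pitfall.
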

\begin{IEEEproof}
Let the $2M\times 2M$ positive semi-definite matrix be
\be
\label{eqn:Hpartitian}
H = \begin{bmatrix}
A & B \\
B^* & D \\
\end{bmatrix}
\ee
where $\{A,B,D\}$ are $M\times M$ submatrices satisfying $A = A^*$ and $D = D^*$. Then,
\be
L H^\T L \defeq \begin{bmatrix}
D^\T & B^\T \\
(B^*)^\T & A^\T \\
\end{bmatrix}
\ee
By Definition \ref{def:complexHessian}, the matrix $H$ is a complex-Hessian-type matrix if, and only if, $A = D^\T$ and $B = B^\T$. It is straightforward to verify that these conditions are equivalent to the equality $LH^\T L = H$.
\end{IEEEproof}

Using Lemma \ref{lemma:complexhessian}, it is easy to verify that if each $R^{(j)}$, $j\ge0$, in \eqref{eqn:ZandRj} is Hermitian positive semi-definite and of complex-Hessian-type, then so is $Z$. Now, from \eqref{eqn:Rjdef}, we have
\be
\vecm\left( R^{(j)} \right)  =  F^j \vecm(R)  =  F F^{j-1} \vecm(R)  =  F \vecm\left( R^{(j-1)} \right)
\ee
From \eqref{eqn:Fdef} and using the property $\vecm(ABC) = (C^\T \kron A) \vecm(B)$, we get the following recursion:
\be
\label{eqn:Rjrecursive}
R^{(j)}  =   \sum_{k = 1}^{N}\sum_{\ell = 1}^{N} p_{\ell,k} [\bar{D}_k R^{(j-1)} \bar{D}_\ell   +   c_{\mu,\ell,k} H_k R^{(j-1)} H_\ell ]   
\ee
We can now verify by mathematical induction that each $R^{(j)}$ is Hermitian positive semi-definite and of complex-Hessian-type. Obviously, from \eqref{eqn:Rdef}, we know that $R^{(0)} = R$ is Hermitian positive semi-definite and of complex-Hessian-type. Now,  assuming that $R^{(j-1)}$ is Hermitian positive semi-definite and of complex-Hessian-type, let us verify that the same applies to $R^{(j)}$.

Since $\{\bar{D}_k, H_k\}$ are all Hermitian matrices, it is easy to verify that
\begin{align}
\label{eqn:Rjpart1}
\sum_{k = 1}^{N}\sum_{\ell = 1}^{N} p_{\ell,k} \bar{D}_k R^{(j-1)} \bar{D}_\ell 
& = \begin{bmatrix}
\bar{D}_1 \\
\vdots \\
\bar{D}_N \\
\end{bmatrix}^* \begin{bmatrix}
p_{1,1} R^{j-1} & \dots & p_{1,N} R^{j-1} \\
\vdots & \ddots & \vdots \\
p_{N,1} R^{j-1} & \dots & p_{N,N} R^{j-1} \\
\end{bmatrix} \begin{bmatrix}
\bar{D}_1 \\
\vdots \\
\bar{D}_N \\
\end{bmatrix} \nn \\
& = (\one_N \kron I_{2M})^* \bar{\Dcal}^* [P_p \kron R^{(j-1)}] \bar{\Dcal} (\one_N \kron I_{2M})
\end{align}
and
\begin{align}
\label{eqn:Rjpart2}
\sum_{k = 1}^{N}\sum_{\ell = 1}^{N} p_{\ell,k} c_{\mu,\ell,k} H_k R^{(j-1)} H_\ell  
& = \begin{bmatrix}
H_1 \\
\vdots \\
H_N \\
\end{bmatrix}^* \begin{bmatrix}
p_{1,1} c_{\mu,1,1} R^{j-1} \!&\! \dots \!&\! p_{1,N} c_{\mu,1,N} R^{j-1} \\
\vdots & \ddots & \vdots \\
p_{N,1} c_{\mu,N,1} R^{j-1} \!&\! \dots \!&\! p_{N,N} c_{\mu,N,N} R^{j-1} \\
\end{bmatrix} \begin{bmatrix}
H_1 \\
\vdots \\
H_N \\
\end{bmatrix} \nn \\
& = (\one_N \kron I_{2M})^* \Hcal^*  [(P_p \odot C_{\mu} ) \kron R^{(j-1)}] \Hcal (\one_N \kron I_{2M})
\end{align}
where $\bar{\Dcal}$ is from \eqref{eqn:bigDmeandef}, $\Hcal$ is from \eqref{eqn:errorrecursiondefapprox}, $C_{\mu} \defeq [c_{\mu,\ell,k}]_{\ell,k = 1}^{N}$, and $\odot$ denotes the Hadamard product (the element-wise product) of matrices \cite{Horn91}. Since $P_p$ is Hermitian positive semi-definite by Lemma \ref{lemma:Ppandp}, and $R^{(j-1)}$ is also Hermitian  positive semi-definite by the induction hypothesis, the Kronecker product $P_p \kron R^{(j-1)}$ must be Hermitian  positive semi-definite \cite[p.~245]{Horn91}. Therefore, the term on the LHS of \eqref{eqn:Rjpart1} must be Hermitian  positive semi-definite. From \eqref{I-eqn:randcombinecoventry} of Part I \cite{Zhao13TSPasync1}, it is obvious that the $C_\mu$ is the covariance matrix of $\{\bm{\mu}_k(i)\}$ and it must be Hermitian positive semi-definite. Since $P_p$ and $C_\mu$ are both Hermitian positive semi-definite, the Hadamard product $P_p \odot C_{\mu}$ is also Hermitian positive semi-definite by the Schur product Theorem \cite[p.~309]{Horn91}. Then, the Kronecker product $(P_p \odot C_{\mu} ) \kron R^{(j-1)}$ must be Hermitian positive semi-definite \cite[p.~245]{Horn91}, and in turn, the term on the LHS of \eqref{eqn:Rjpart2} must also be Hermitian positive semi-definite. From \eqref{eqn:Rjpart1} and \eqref{eqn:Rjpart2}, we conclude that the $R^{(j)}$ in \eqref{eqn:Rjrecursive} must be Hermitian positive semi-definite.

Finally, we show that if $R^{(j-1)}$ is of complex-Hessian-type, then so is $R^{(j)}$. It is easy to verify that $\{\bar{D}_k\}$ in \eqref{eqn:Dkmeandef} and $\{H_k\}$ in \eqref{eqn:Hkdef} are all of complex-Hessian-type such that
\be
\label{eqn:LDLandLHL}
L \bar{D}_k L = \bar{D}_k, \quad L H_k L = H_k, \quad k = 1,2,\dots,N
\ee
From \eqref{eqn:Rjdef}, we have
\begin{align}
L R^{(j)} L & \stackrel{(a)}{=} \sum_{k=1}^{N} \sum_{\ell=1}^{N} p_{\ell,k} [ (L \bar{D}_k L) (L R^{(j-1)} L) (L \bar{D}_\ell L)  + c_{\mu, \ell, k} (L H_k L) (L R^{(j-1)} L) (L H_\ell L) ] \nn \\
& \stackrel{(b)}{=} \sum_{k=1}^{N} \sum_{\ell=1}^{N} p_{\ell,k} [ \bar{D}_k R^{(j-1)} \bar{D}_\ell + c_{\mu, \ell, k} H_k R^{(j-1)} H_\ell ] \nn \\
& = R^{(j)}
\end{align}
where step (a) is by the fact that $LL = I_{2M}$ and step (b) is by \eqref{eqn:LDLandLHL} and the induction hypothesis. Therefore, the matrix $R^{(j)}$ is also of complex-Hessian-type.

\section{Proof of Corollary \ref{corollary:steadystateMSD}}
\label{app:steadystateMSD}
Following an argument similar to the proof of Theorem \ref{theorem:steadystateMSD}, we can obtain
\be
\label{eqn:individualMSDp1}
\lim_{i\rightarrow\infty} \E \|\ubar{\wt{\bm{w}}}_{k,i} \|^2 = \lim_{i\rightarrow\infty} \E \| \ubar{\wt{\bm{w}}}_{k,i}' \|^2 + O(\nu^{3/2})
\ee
From \eqref{eqn:individualMSDexpression}, \eqref{eqn:individualMSDp1}, and Corollary \ref{corollary:crosscov}, and using the fact that $\gamma_o = \min \{1/2, \gamma_o'\}$, we can express the individual MSD by
\be
\label{eqn:individualMSDZ}
\MSD_k = \frac{1}{2} \Tr(Z) + O(\nu^{1 + \gamma_o})
\ee
where the first term on the RHS is of the order of $\nu$ and dominates the other term. Then, by \eqref{eqn:networkMSDdef}, we immediately get
\be
\label{eqn:networkMSDZ}
\MSD^\network = \frac{1}{N}\sum_{k=1}^{N} \MSD_k = \frac{1}{2} \Tr(Z) + O(\nu^{1 + \gamma_o })
\ee
Let
\begin{align}
\label{eqn:Sdef}
S & \defeq H^\T \kron I_{2M} + I_{2M} \kron H = O(\nu) \\
\label{eqn:Ydef}
Y & \defeq \sum_{\ell,k=1}^{N} p_{\ell,k} (\bar{\mu}_\ell \bar{\mu}_k  +  c_{\mu,\ell,k})(H_\ell^\T  \kron  H_k) = O(\nu^2)
\end{align}
by \eqref{eqn:bigMbound}--\eqref{eqn:bigCMbound2}, where $H$ is given by \eqref{eqn:Hdef}. It is worth noting that $S$ is invertible when condition \eqref{eqn:boundstepsize4thorder} holds and $Y$ is always invertible by Assumption \ref{I-asm:boundedHessian} in Part I \cite{Zhao13TSPasync1}. By using \eqref{eqn:qXpexpression}, \eqref{eqn:qXpexpression2}, \eqref{eqn:Sdef}, and \eqref{eqn:Ydef}, we get
\be
\label{eqn:IminusFapproxHHO}
I_{4M^2} - F = S + Y
\ee
Using the matrix inversion lemma \cite{Laub05}, we get from \eqref{eqn:IminusFapproxHHO} that
\be
\label{eqn:IminusFinvapprox}
(I_{4M^2} - F)^{-1} = S^{-1} - S^{-1}(Y^{-1} + S^{-1})^{-1}S^{-1}
\ee
By \eqref{eqn:Sdef} and \eqref{eqn:Ydef}, we know that $\| S^{-1} \| = O(\nu^{-1})$ and $\| Y^{-1} \| = O(\nu^{-2})$. Then,
\be
\label{eqn:IminusFinvapprox1}
(I_{4M^2} - F)^{-1} = S^{-1} + O(1)
\ee
By \eqref{eqn:Rdef}, we have $\| R \| = O(\nu^2)$. Using \eqref{eqn:Zdef} and \eqref{eqn:IminusFinvapprox1}, we get
\begin{align}
\label{eqn:traceofZ}
\Tr(Z) & = [\vecm(Z)]^* \vecm(I_{2M}) \nn \\
& = [\vecm(R)]^* (I_{4M^2} - F)^{-1} \vecm(I_{2M}) \nn \\
& = [\vecm(R)]^* S^{-1} \vecm(I_{2M}) + O(\nu^2)
\end{align}
where the first term on the RHS is of the order of $\nu$ and, therefore, is the dominant term. To further simplify \eqref{eqn:traceofZ}, we consider the Lyapunov equation with respect to the unknown matrix $X \in \mbbC^{2M\times 2M}$: $XH + HX = I_{2M}$, where $H$ is given by \eqref{eqn:Hdef}. By applying the $\vecm(\cdot)$ operation to both sides, the Lyapunov equation is equivalent to the linear equation: $S \cdot \vecm(X) = \vecm(I_{2M})$, where $\vecm(X) \in \mbbC^{4M^2\times1}$. Since $S$ is invertible, the Lyapunov equation has a unique solution, which is given by $X = \frac{1}{2} H^{-1}$, or $\vecm(X) = S^{-1} \vecm(I_{2M}) = \frac{1}{2} \vecm(H^{-1})$. It then follows from \eqref{eqn:traceofZ} that
\be
\label{eqn:orderofTrZ}
\Tr(Z) = \frac{1}{2}\Tr(H^{-1}R) + O(\nu^2)
\ee
where the first term on the RHS is of the order of $\nu$ and dominates the other term. Substituting \eqref{eqn:orderofTrZ} into \eqref{eqn:individualMSDZ} and \eqref{eqn:networkMSDZ} completes the proof.

\end{document}